\documentclass[11pt]{article}

\usepackage{amsthm}
\usepackage{mathtools}
\usepackage{amsfonts}
\usepackage{amssymb}
\usepackage{bbm}

\usepackage{booktabs}
\usepackage[a4paper, total={8.5in, 11in},margin=1in]{geometry}
 
\usepackage{comment}
\usepackage[labelfont=bf]{caption}

\usepackage{booktabs}
\usepackage{siunitx}

\usepackage{dsfont}

\usepackage[ruled,vlined]{algorithm2e}

\usepackage{xcolor}

\usepackage[colorlinks=true,linkcolor=purple,citecolor=blue]{hyperref}
\usepackage{cleveref}
\DeclarePairedDelimiterX{\infdivx}[2]{(}{)}{%
  #1\;\delimsize\|\;#2%
}

\newcommand\ddfrac[2]{\frac{\displaystyle #1}{\displaystyle #2}}

\newtheorem{theorem}{Theorem}[section]
\newtheorem{definition}[theorem]{Definition}
\newtheorem{proposition}[theorem]{Proposition}
\newtheorem{corollary}[theorem]{Corollary}
\newtheorem{lemma}[theorem]{Lemma}
\newtheorem{fact}[theorem]{Fact}

\bibliographystyle{alpha}

\SetAlgorithmName{Mechanism}{mechanism}{List of Mechanisms}

\LinesNumbered

\DeclareMathOperator*{\E}{\mathbb{E}}
\DeclareMathOperator*{\V}{\mathbb{V}}
\DeclareMathOperator*{\argmax}{arg\,max}
\DeclareMathOperator{\dis}{d}

\DeclareMathOperator{\pr}{\mathbb{P}}
\let\Pr\relax
\DeclareMathOperator{\Pr}{\mathbb{P}}

\DeclareMathOperator{\SC}{\textsc{SC}}
\DeclareMathOperator{\SW}{\textsc{SW}}
\DeclareMathOperator{\med}{\textsf{median}}
\DeclareMathOperator{\algomedian}{\textsc{ApproxMedianViaSampling}}
\DeclareMathOperator{\algoplurality}{\textsc{ApproxPluralityViaSampling}}
\DeclareMathOperator{\algopercentile}{\textsc{ApproxPercentileViaSampling}}
\DeclareMathOperator{\algoascending}{\textsc{AscendingAuctionViaSampling}}
\DeclareMathOperator{\algomultiunit}{\textsc{MultiUnitAuctionViaSampling}}

\author{
  Ioannis Anagnostides\\[-2mm]
  Carnegie Mellon University\footnote{Work performed while at the National Technical University of Athens.}\\[-2mm]
  \texttt{ioannis.anagnostides@gmail.com}
  \and
  Dimitris Fotakis\\[-2mm]
  National Technical University of Athens\\[-2mm]
  \texttt{fotakis@cs.ntua.gr}
  \and 
  Panagiotis Patsilinakos\\[-2mm]
  National Technical University of Athens\\[-2mm]
  \texttt{patsilinak@corelab.ntua.gr}
}

\date{}                     

\title{Sampling and Optimal Preference Elicitation in Simple Mechanisms\footnote{A preliminary version of this work appeared at the Symposium on Algorithmic Game
Theory (SAGT) 2020~\cite{Anagnostides20:Asymptotically}.}}

\begin{document}

\maketitle
\pagenumbering{gobble}

\begin{abstract}

In this work we are concerned with the design of efficient mechanisms while eliciting limited information from the agents. First, we study the performance of sampling approximations in facility location games. Our key result is to show that for any $\epsilon > 0$, a sample of size $c(\epsilon) = \Theta(1/\epsilon^2)$ yields in expectation a $1 + \epsilon$ approximation with respect to the optimal social cost of the generalized median mechanism on the metric space $(\mathbb{R}^d, \| \cdot \|_1)$, while the number of agents $n \to \infty$. Moreover, we study a series of exemplar environments from auction theory through a communication complexity framework, measuring the expected number of bits elicited from the agents; we posit that any valuation can be expressed with $k$ bits, and we mainly assume that $k$ is independent of the number of agents $n$. In this context, we show that Vickrey's rule can be implemented with an expected communication of $1 + \epsilon$ bits from an average bidder, for any $\epsilon > 0$, asymptotically matching the trivial lower bound. As a corollary, we provide a compelling method to increment the price in an English auction. We also leverage our single-item format with an efficient encoding scheme to prove that the same communication bound can be recovered in the domain of additive valuations through simultaneous ascending auctions, assuming that the number of items is a constant. Finally, we propose an ascending-type multi-unit auction under unit demand bidders; our mechanism announces at every round two separate prices and is based on a sampling algorithm that performs approximate selection with limited communication, leading again to asymptotically optimal communication. Our results do not require any prior knowledge on the agents' valuations, and mainly follow from natural sampling techniques.

\end{abstract}

\clearpage

\tableofcontents

\clearpage

\pagenumbering{arabic}

\section{Introduction}

Efficient \emph{preference elicitation} has been a central theme and a major challenge from the inception of \emph{mechanism design}, with a myriad of applications in multi-agent environments and modern artificial intelligence systems. Indeed, requesting from every agent to communicate \emph{all} of her preferences is considered widely impractical, and a substantial body of work has explored alternative approaches to truncate the elicited information, sequentially asking a series of natural queries in order to elicit only the relevant parts of the information.

This emphasis has been strongly motivated for a number of reasons. First, it has been acknowledged by behavioral economists that soliciting information induces a high \emph{cognitive cost} \cite{10.1007/3-540-48835-9_2,561616}, and agents may even be reluctant to reveal their complete (private) valuation; as pertinent evidence, the superiority of \emph{indirect} mechanisms is often cited \cite{Ausubel06thelovely}. In fact, in certain domains with severe communication constraints  \cite{AgorinoCommRestrictionsArticle,10.1086/676931} a \emph{direct revelation} mechanism---in which every agent has to disclose her entire preferences---is considered even infeasible. Indeed, communication is typically recognized as the main bottleneck in distributed environments \cite{10.5555/2821576}. As explained by Blumrosen and Feldman \cite{DBLP:conf/sigecom/BlumrosenF06}, agents typically operate with a truncated action space due to technical, behavioral or regulatory reasons. Finally, a mechanism with efficient preference elicitation would provide stronger information-privacy guarantees \cite{10.2307/41409970}.

Specifically, the general question of whether a social choice function can be accurately approximated by less than the full set of agents constitutes one of the main themes in computational social choice. Perhaps the most standard approach to truncate the elicited information consists of \emph{sampling}. More precisely, given that in many real-world applications it might be infeasible to gather preferences from all the agents, the designer performs preference aggregation by randomly selecting a small subset of the entire population \cite{CANETTI199517,Dhamal2013ScalablePA}. This approach is particularly familiar in the context of \emph{voting}, where the goal is typically to \emph{predict} the outcome of the full information mechanism without actually holding the election for the entire population \cite{10.5555/2772879.2773334}; concrete examples of the aforementioned scenarios include election polls, exit polls, as well as online surveys.

In the first part of our work, we follow this long line of research in computational social choice. Specifically, we analyze the \emph{sample complexity} of the celebrated \emph{median} mechanism in the context of \emph{facility location games}, where every agent is associated with a point---corresponding to her preferred location---on some underlying metric space. 
The median mechanism is of particular importance in social choice. Indeed, the celebrated \emph{Gibbard-Satterthwaite} impossibility theorem \cite{10.2307/1914083,RePEc:eee:jetheo:v:10:y:1975:i:2:p:187-217} states that for any \emph{onto}---for every alternative there exists a voting profile that would make that alternative prevail---and non-dictatorial voting rule, there are instances for which an agent is better off casting a vote that does not correspond to her true preferences---i.e. the rule is not \emph{strategy-proof}. Importantly, this impediment disappears when the agents' preferences are restricted. Arguably the most well-known such restriction is that of the \emph{single-peaked preferences},\footnote{More precisely, suppose that the alternatives are ordered on a line, representing their \emph{positions}; as argued in \emph{spatial voting theory} \cite{RePEc:cup:cbooks:9780521275156,arrow_1990}, it is often reasonable to assume that the alternatives can be represented as points on a line (e.g., in political elections a candidate's position may indicate whether she is a ``left-wing'' or a ``right-wing'' candidate). An agent's preferences are \emph{single-peaked} if she prefers alternatives which are closer to her peak. We remark that in single-peaked domains it is known that there can be no \emph{Condorcet cycles}.} introduced by Black \cite{10.2307/1825026}, for which Moulin's \cite{10.2307/30023824} median mechanism is indeed strategy-proof. 

In the second part of our work, we endeavor to design auctions with minimal \emph{communication complexity}. As a concrete motivating example, we consider a \emph{single-item} auction, and we assume that every valuation can be expressed with $k$ bits. An important observation is that the most dominant formats are very inefficient from a communication standpoint. Indeed, \emph{Vickrey's}---or \emph{sealed-bid}---auction~\cite{Vickrey61:Counter} is a direct revelation mechanism, eliciting the entire private information from the $n$ agents, leading to a communication complexity of $n \cdot k$ bits; in fact, it should be noted that although Vickrey's auction possesses many theoretically appealing properties, its ascending counterpart exhibits superior performance in practice \cite{10.1257/0002828043052330,Ausubel06thelovely,10.2307/1913557,10.2307/2234706}, for reasons that mostly related to the simplicity, the transparency, as well as the privacy guarantees of the latter format. Unfortunately, implementing Vickrey's rule through an \emph{English} auction requires---in the worst case---exponential communication of the order $n \cdot 2^k$, as the auctioneer has to cover the entire valuation space. Thus, the elicitation pattern in an English auction is widely inefficient, and the lack of prior knowledge on the agents' valuations would dramatically undermine its performance. 


\subsection{Our Contributions}

Our work provides several new insights on sampling and preference elicitation for a series of exemplar environments from mechanism design. 

\paragraph{Facility Location Games.} First, we turn our attention to facility location games; specifically, we consider Moulin's \emph{median} mechanism. We observe that unlike the median as a function, the \emph{social cost} of the median exhibits a \emph{sensitivity} property. Subsequently, we show that for any $\epsilon > 0$, a random sample of size $c = \Theta(1/\epsilon^2)$ suffices to recover a $1 + \epsilon$ approximation w.r.t. the optimal social cost of the full information median in the metric space $(\mathbb{R}, |\cdot|)$, while the number of agents $n \to \infty$; this guarantee is established both in terms of expectation (\Cref{theorem:1-median_expectation}), and with high probability (\Cref{corollary:1-median_whp}). Consequently, it is possible to obtain a near-optimal approximation with an arbitrarily small fraction of the total information. Our analysis is quite robust, implying directly the same characterization for the median on \emph{simple} and \emph{open} curves. Next, we extend this result for the \emph{generalized median} in high-dimensional metric spaces $(\mathbb{R}^d, \| \cdot \|_1)$ in \Cref{theorem:generalized-median_expectation}. In contrast, the sensitivity property of the median does not extend on trees, as implied by \Cref{proposition:median-trees}. Finally, for completeness, we show that sampling cannot provide meaningful guarantees w.r.t. the expected social cost when allocating at least $2$ facilities on the line through the \emph{percentile} mechanism (\Cref{proposition:percentile}).

These results constitute natural continuation on efficient preference elicitation and sampling in social choice \cite{Dhamal2013ScalablePA,10.5555/2772879.2773334,CANETTI199517,10.5555/1641503.1641507}, and supplement the work of Feldman et al. \cite{feldman2015voting}; yet, to the best of our knowledge, we are the first to investigate the performance of sampling in facility location games. We stress that our guarantees do not require any prior knowledge or any discretization assumptions. Moreover, the sensitivity property of the median's social cost could be of independent interest, as it can be potentilly employed to design \emph{differentially private} \cite{DBLP:conf/tamc/Dwork08} and \emph{noise-tolerant} implementations of the median mechanism. From a technical standpoint, one of our key contributions is a novel asymptotic characterization of the rank of the sample's median (\Cref{theorem:convergence}), discussed further in the subsection below.

\paragraph{Auctions.} Next, in \Cref{section:auctions}, we espouse a communication complexity framework in order to design a series of auctions with optimal preference elicitation. In particular, we measure the number of bits elicited from the agents in expectation, endeavoring to minimize it. We mainly make the natural assumption that the number of bits that can represent any valuation---expressed with parameter $k$---is \emph{independent} on the number of agents $n$; thus, we focus on the communication complexity while $n$ asymptotically grows. In this context, we show that we can asymptotically match the trivial lower bound of $1$ bit per bidder for a series of fundamental settings, without possessing any prior knowledge.

\begin{itemize}
    \item First, we propose an ascending auction in which the ascend of the price is calibrated \emph{adaptively} through a sampling mechanism. Thus, in \Cref{corollary:optimal-single_item} we establish that we can implement Vickrey's rule with only $1 + \epsilon$ bits from an average bidder, for any $\epsilon > 0$;
    \item Moreover, we consider the design of a multi-item auction with $m$ items and bidders with additive valuations. Our main contribution is to design an efficient \emph{encoding scheme} that substantially truncates the communication in a \emph{simultaneous} implementation, asymptotically recovering the same optimal bound whenever the number of items $m$ is a small \emph{constant} (\Cref{theorem:simultaneous});
    \item Finally, we develop a novel ascending-type multi-unit auction in the domain of unit-demand bidders. Our proposed auction announces in every round two separate prices---based on a natural sampling algorithm (see \Cref{theorem:estimation}), leading again to the optimal communication bound (\Cref{theorem:multi-unit-optimal}).
\end{itemize}

Our work supplements prior work \cite{REICHELSTEIN198432,FADEL20091895} by showing that the incentive compatibility constraint does not augment the communication requirements of the interaction process for a series of fundamental settings. We also corroborate on one of the main observations of Blumrosen et al. \cite{DBLP:journals/corr/abs-1110-2733}: \emph{asymmetry helps}---in deriving tight communication bounds. Indeed, in our mechanisms the information elicitation is substantially asymmetrical. Finally, we believe that our results could have practical significance due to their simplicity and their communication efficiency.

\subsection{Overview of Techniques}

\paragraph{Facility Location Games.} In \Cref{section:median}, our first key observation in \Cref{lemma:sensitivity} is that the social cost of the median admits a \emph{sensitivity} property. Thus, it is possible to obtain a near-optimal approximation w.r.t. the optimal social cost even if the allocated facility is very far from the median. The sensitivity of the social cost of the median essentially reduces a near-optimal approximation to the concentration of the rank of the sample's median. Based on this insight, we prove in \Cref{theorem:convergence} that when the participation is large (i.e. $n \to \infty$) the distribution of the rank of the sample's median converges to a continuous \emph{transformed beta} distribution. This result should not be entirely surprising given that---as is folklore in statistics---the order statistics of the uniform distribution on the unit interval have marginal distributions belonging to the beta distribution family (e.g., see \cite{David2011}). Having made these connections, the rest of our guarantees in \Cref{section:median} for more general metric spaces follow rather directly.

\paragraph{Auctions.} With regard to our results in \Cref{section:auctions}, we commence our overview with the single-item auction, and we then describe our approach for several extensions. In particular, for our ascending auction we consider as a black-box an algorithm that implements a second-price auction; then, at every round of the auction we \emph{simulate} a sub-auction on a random sample of agents, and we broadcast the ``market-clearing price'' in the sub-auction as the price of the round. From a technical standpoint, we show that as the size of the sample increases, the fraction of the agents that will remain active in the forthcoming round gets gradually smaller (\Cref{lemma:inclusion}), leading to \Cref{theorem:communication-single_item} and \Cref{corollary:optimal-single_item}. It should also be noted that the truncated communication does \emph{not} undermine the incentive compatibility of our mechanism, as implied by \Cref{proposition:IC} and \Cref{proposition:OSP}. Moving on to the multi-item auction with additive valuations, we employ a basic principle from information theory: encode the more likely events with fewer bits. This simple observation along with a property of our single-item auction allow us to substantially reduce the communication complexity when the auctions are executed in \emph{parallel}. 

Finally, we alter our approach for the design of a multi-unit auction with unit demand bidders. In contrast to a standard ascending auction, our idea is to broadcast in every round \emph{two} separate prices, a ``high'' price and a ``low'' price. Subsequently, the mechanism may simply recurse on the agents that reside between the two prices. The crux of this implementation is to design an algorithm that takes as input a \emph{small} number of bits, and returns prices that are tight upper and lower bounds on the final VCG payment. To this end, we design a novel algorithm that essentially performs \emph{stochastic binary search} on the tree that represents the valuation space; more precisely, in every junction, or decision, the branching is made based on a small \emph{sample} of agents, eliminating at every step half the points on the valuation space. From an algorithmic standpoint, this gives a simple procedure performing approximate \emph{selection} from an unordered list with very limited communication.

\subsection{Related Work}

Preference elicitation has received considerable attention in computational social choice \cite{10.5555/1641503.1641507,10.1145/1064009.1064018,10.5555/2484920.2485161,10.5555/2283396.2283445,10.1007/978-3-642-24873-3_11}. Segal \cite{SEGAL2007341} provided bounds on the communication required to realize a social choice rule through the notion of \emph{budget sets}, with applications in resource allocation tasks and stable matching. The boundaries of computational tractability and the strategic issues that arise in optimal preference elicitation were investigated by Conitzer and Sandholm for a series of voting schemes \cite{conitzer2002vote}, while the same authors established the worst-case number of bits required to execute common voting rules \cite{10.1145/1064009.1064018}. The trade-off between accuracy and information leakage in facility location games has been studied by Feldman et. al \cite{feldman2015voting}, investigating the behavior of truthful mechanisms with truncated input space---e.g., \emph{ordinal} information models. 

More recently, the trade-off between efficiency---in terms of \emph{distortion} as introduced by Procaccia and Rosenschein \cite{10.1007/11839354_23}---and communication was addressed by Mandal et al. \cite{DBLP:conf/nips/MandalP0W19} (see also \cite{DBLP:conf/aaai/AmanatidisBFV20}). Their results were subsequently improved in \cite{10.1145/3391403.3399510} using machinery from streaming algorithms, such as \emph{linear sketching} and $L_p$-\emph{sampling} \cite{DBLP:conf/soda/MonemizadehW10,DBLP:conf/focs/JayaramW18}. In similar spirit, \cite{10.5555/2540128.2540174,Bentert20:Comparing} address efficient preference elicitation in the form of \emph{top-$\ell$ elicitation}, meaning that the agents are asked to provide the length $\ell$ prefix of their ranking instead of their full ranking. This trade-off between efficiency and communication has also been an important consideration in the \emph{metric distortion framework}~\cite{Anshelevich15:Approximating}---a setting closely related to our work; e.g., see~\cite{Kizilkaya22:Plurality,Kempe20:Communication,Borodin22:Distortion,Anagnostides22:Metric}, as well as \cite{Anshelevich21:Distortion} for a comprehensive overview of that line of work.

Another important consideration in the literature relates to the interplay between communication constraints and incentive compatibility. In particular, Van Zandt \cite{10.2307/40005057} articulated conditions under which communication and incentive compatibility can be examined separately, while Reichelstein \cite{REICHELSTEIN198432}, and Segal and Fadel \cite{FADEL20091895} examined the communication overhead induced in truthful protocols, i.e. the communication cost of truthfulness. An additional prominent line of research studies mechanism design under communication constraints (see \cite{10.1086/676931} and references therein). Specifically, in a closely related to ours work, Blumrosen, Nisan and Segal \cite{DBLP:journals/corr/abs-1110-2733} considered the design of a single-item auction in a communication model in which every bidder could only transmit a limited number of bits. One of their key results was a $0.648$ social welfare approximation for $1$-bit auctions (every bidder could only transmit a \emph{single} bit to the mechanism) and uniformly distributed bidders. Further, the design of optimal---w.r.t. the obtained revenue---bid levels in an English auction was considered in \cite{soton260441}, assuming known prior distributions.

\section{Preliminaries}

\paragraph{Facility Location Games.} 

Consider a metric space $(\mathcal{M}, \dis(\cdot, \cdot))$, where $\dis : \mathcal{M} \times \mathcal{M} \to \mathbb{R}$ is a \emph{metric} (or distance function) on $\mathcal{M}$; i.e., for any $\mathbf{x}, \mathbf{y}, \mathbf{z} \in \mathcal{M}$, $\dis$ satisfies the following: (i) identity of indiscernibles: $\dis(\mathbf{x}, \mathbf{y}) = 0 \iff \mathbf{x} = \mathbf{y}$; (ii) symmetry: $\dis(\mathbf{x}, \mathbf{y}) = \dis(\mathbf{y}, \mathbf{x})$; and (iii) triangle inequality: $\dis(\mathbf{x}, \mathbf{y}) \leq \dis(\mathbf{x}, \mathbf{z}) + \dis(\mathbf{z}, \mathbf{y})$. Given as input an $n$-tuple $\mathcal{I} = (\mathbf{x}_1, \dots, \mathbf{x}_n)$, with $\mathbf{x}_i \in \mathcal{M}$, the $\ell$-facility location problem consists of allocating $\ell$ facilities on $\mathcal{M}$ in order to minimize the \emph{social cost}; more precisely, if $L$ is the finite the set of allocated facilities, the induced social cost w.r.t. the distance function $\dis$ is defined as
\begin{equation*}
    \SC(L, \dis) \triangleq \sum_{i=1}^n \min_{\mathbf{x} \in L} \dis(\mathbf{x}, \mathbf{x}_i);
\end{equation*}
that is, every point is assigned to its closest (allocated) facility. For notational simplicity, we omit the distance function and we simply write $\SC(L)$. With a slight abuse of notation, when $|L| = 1$, we will use $\SC(\mathbf{x})$ to represent the social cost of allocating a single facility on $\mathbf{x} \in \mathcal{M}$. In a mechanism design setting every point in the instance $\mathcal{I}$ is associated with a \emph{strategic} agent, and $\mathbf{x}_i$ represents her preferred \emph{private} location (e.g. her address); naturally, every agent $i$ endeavors to minimize her (atomic) distance from the allocated facilities. A mechanism is called \emph{strategy-proof} if for every agent $i$, and for any possible valuation profile, $i$ minimizes her distance by reporting her actual valuation; if the mechanism is \emph{randomized}, one is typically interested in strategy-proofness in expectation.

\paragraph{The Median Mechanism.} Posit a metric space $(\mathbb{R}^d, \| \cdot \|_1)$. The \textsc{Median} is a mechanism for the $1$-facility location problem which allocates a single facility on the median of the reported instance. For high-dimensional spaces, the median is defined coordinate-wise; more precisely, if $\mathcal{I} = (\mathbf{x}_1, \dots, \mathbf{x}_n)$ represents an arbitrary instance on $\mathbb{R}^d$, and we denote with $x_i^j$ the $j^{\text{th}}$ coordinate of $\mathbf{x}_i$ in some underlying coordinate system, 
\begin{equation*}
    \med(\mathcal{I}) \triangleq (\med(x_1^1, \dots, x_n^1), \dots, \med(x_1^d, \dots, x_n^d)).
\end{equation*}
In this context, the following properties have been well-establish in social choice, and we state them without proof.

\begin{proposition}
    \label{proposition:median-truthful}
The \textsc{Median} mechanism is strategy-proof.
\end{proposition}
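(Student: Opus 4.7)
The plan is to reduce to the one-dimensional case and then argue by a short case analysis on the relative position of the agent's true location and the median of the other reports. The key observation enabling the reduction is that both the allocation rule and the cost function decompose across coordinates: the $j$-th coordinate of $\med(\mathcal{I})$ depends only on $(x_1^j,\dots,x_n^j)$, while $\|y-\mathbf{x}_i\|_1 = \sum_{j=1}^d |y^j - x_i^j|$. Therefore, any deviation of agent $i$ in coordinate $j$ affects only the $j$-th term of her cost, so it suffices to prove strategy-proofness for the one-dimensional median, applied independently to each coordinate.

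For the one-dimensional case, fix agent $i$ with true location $x_i \in \mathbb{R}$ and arbitrary reports of the other agents. Let $m$ denote the median when $i$ reports truthfully, and $m'$ the median when she reports $x_i' \neq x_i$. I would split into three cases: (a) $x_i = m$, in which case her cost is already $0$; (b) $x_i > m$; and (c) $x_i < m$ (symmetric to (b)). For case (b), I would argue that any report $x_i' \geq m$ leaves the sorted middle element unchanged, so $m' = m$; any report $x_i' < m$ can only shift the middle order statistic weakly to the left, so $m' \leq m < x_i$, giving $|m' - x_i| \geq |m - x_i|$. Hence no deviation strictly decreases $i$'s cost.

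Combining the coordinate decomposition with the one-dimensional argument yields the proposition: if agent $i$ deviates on any subset of coordinates, each affected term of $\|y - \mathbf{x}_i\|_1$ either stays the same or strictly increases, so truthful reporting is weakly dominant.

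The main obstacle, if any, is being careful about the tie-breaking convention for the median when $n$ is even (so that the "middle" is well-defined and does not open a gap that an agent could exploit). Under the standard convention of returning the $\lceil n/2 \rceil$-th order statistic, the monotonicity of the order statistic in each report suffices; under other conventions (e.g.\ the midpoint of the two middle values), essentially the same monotonicity argument goes through, since an agent above the median can never pull the middle value strictly toward herself by moving her own report further out.
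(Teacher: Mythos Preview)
Your argument is correct and is essentially the standard proof of this fact. Note, however, that the paper does not actually prove \Cref{proposition:median-truthful}: it explicitly introduces it as a well-established property of the median mechanism and states it without proof. So there is no ``paper's own proof'' to compare against; your coordinate-wise reduction followed by the one-dimensional monotonicity case analysis is exactly the textbook argument (due to Moulin) that the paper is implicitly invoking.
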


\begin{proposition}
    \label{proposition:optimal-median}
The \textsc{Median} mechanism is optimal w.r.t. (i.e. minimizes) the social cost in the metric space $(\mathbb{R}^d, \| \cdot \|_1)$.
\end{proposition}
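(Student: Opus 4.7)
The plan is to reduce the optimization to $d$ independent one-dimensional problems by exploiting the coordinatewise separability of the $\ell_1$ norm, and then to show that in one dimension the median minimizes the sum of absolute deviations.

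First, I would observe that for any candidate location $\mathbf{x} = (x^1, \dots, x^d) \in \mathbb{R}^d$ and any agent $\mathbf{x}_i$, the distance decomposes as $\|\mathbf{x} - \mathbf{x}_i\|_1 = \sum_{j=1}^d |x^j - x_i^j|$. Summing over agents and swapping the two finite sums yields
\begin{equation*}
    \SC(\mathbf{x}) = \sum_{i=1}^n \|\mathbf{x} - \mathbf{x}_i\|_1 = \sum_{j=1}^d \left( \sum_{i=1}^n |x^j - x_i^j| \right) = \sum_{j=1}^d f_j(x^j),
\end{equation*}
where $f_j(y) \triangleq \sum_{i=1}^n |y - x_i^j|$ depends only on the $j^{\text{th}}$ coordinate. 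Since the variables $x^1, \dots, x^d$ are unconstrained and the summands are independent, minimizing $\SC$ reduces to minimizing each $f_j$ separately over $\mathbb{R}$. By the definition of $\med(\mathcal{I})$, the coordinatewise median sets $x^j$ to a minimizer of $f_j$ for every $j$, so it suffices to prove the one-dimensional claim: any median of $(x_1^j,\dots,x_n^j)$ minimizes $f_j$.

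For the 1D statement, I would give the standard convex-analytic argument. The function $f_j$ is a sum of absolute values, hence convex and piecewise linear. Its one-sided derivatives at any point $y \notin \{x_1^j, \dots, x_n^j\}$ equal
\begin{equation*}
    f_j'(y) = \bigl| \{ i : x_i^j < y \} \bigr| - \bigl| \{ i : x_i^j > y \} \bigr|,
\end{equation*}
and at the sample points the subdifferential contains $0$ precisely when at least $\lceil n/2 \rceil$ points lie weakly on each side of $y$—equivalently, when $y$ is a median. By convexity, a point is a global minimizer if and only if $0$ belongs to its subdifferential, so the median is optimal. (An alternative, more elementary argument: if $y$ is strictly less than some median, shifting $y$ rightward by a small $\delta > 0$ strictly decreases $|y - x_i^j|$ for strictly more indices $i$ than it increases it, hence strictly decreases $f_j$; symmetrically for $y$ above a median. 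Either route concludes the 1D argument.)

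There is no real obstacle here beyond being careful with ties when $n$ is even (where a whole interval of medians is optimal; any selection rule picking a point in that interval, including the convention used in the paper, still minimizes $f_j$). Combining the decomposition with the 1D optimality of the median yields that $\med(\mathcal{I})$ minimizes $\SC$ on $(\mathbb{R}^d, \|\cdot\|_1)$, which is the claim.
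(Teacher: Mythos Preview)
Your argument is correct. The paper does not actually prove this proposition; it is stated as a well-known fact in social choice (``we state them without proof''), so there is no paper proof to compare against. Your reduction via the coordinatewise separability of the $\ell_1$ norm followed by the standard convex/subdifferential (or shifting) argument for the one-dimensional median is exactly the canonical proof of this fact, and your handling of the even-$n$ tie case is appropriate.
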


The median can also be defined beyond Euclidean spaces~\cite{SCHUMMER2002405}, as it will be discussed in more detail in \Cref{sec:treemedian}. It should also be noted that the median can be employed heuristically for the $\ell$-facility location problem when additional separability conditions are met; for example, the instance could correspond to residents of isolated cities, and it would be natural to assign one facility to a single ``cluster''.

\paragraph{Auctions.} In the domains we are studying in \Cref{section:auctions}, the valuation of an agent $i$ for any bundle of items $S$ is fully specified with the values $v_{i, j}$, for every item $j$; if a single item---and potentially multiple units of the same item---are to be disposed, we use $v_i$ for simplicity. Moreover, in \Cref{section:auctions} we employ a \emph{communication complexity} framework in order to analyze the measure of information elicited from the agents; thus, we need to assume that every value $v_{i, j}$ can be expressed with $k$ bits. We will mainly assume that $k$ is a parameter independent of the number of agents $n$, and one should imagine that a small constant $k$ (e.g. $32$ or $64$ bits) would suffice. In this context, we define the communication complexity of a mechanism to be the expected number of bits elicited from the participants during the interaction process, and we study the asymptotic growth of this function as $n \to \infty$. We will assume that an agent remains active in the auction only when \emph{positive} utility can be obtained. Thus, if the (monotonically increasing) \emph{announced price} for item $j$ coincides with the value $v_{i,j}$ of some agent $i$, we presume that $i$ will withdraw from the auction; we use this hypothesis to handle certain singular cases (e.g., all agents could have the same value for the item).

For the incentive compatibility analysis in \Cref{section:auctions} we will need to refine and extend the notion of strategy-proofness from facility location games. A mechanism will be referred to as \emph{strategy-proof} if truthful reporting is a \emph{universally} dominant strategy---a best response under any possible action profile and \emph{randomized} realization---for every agent.

\paragraph{Obvious Strategy-Proofness.} A strategy $s_i$ is \emph{obviously dominant} if for any deviating strategy $s_i'$, starting from any earliest information set where $s_i'$ and $s_i$ disagree, the best possible outcome from $s_i'$ is no better than the worst possible outcome from $s_i$. A mechanism is \emph{obviously strategy-proof} (OSP) if it has an equilibrium in obviously dominant strategies. Notice that OSP implies strategy-proofness, so it is a stronger notion of incentive compatibility \cite{561616}.

\paragraph{Ex-Post Incentive Compatibility.} We will also require a weaker notion of incentive compatibility; a strategy profile $(s_1, \dots, s_n)$ constitutes an \textit{ex-post Nash equilibrium} if the action $s_i(v_i)$ is a best response to every action profile $\mathbf{s}_{-i}(\mathbf{v}_{-i})$---for any agent $i$ and valuation $v_i$. A mechanism will be called \emph{ex-post incentive compatible} if sincere bidding constitutes an ex-post Nash equilibrium.

\paragraph{Additional Notation.} We will denote with $N = [n]$ the set of agents that participate in the mechanism. In a single parameter environment, the \emph{rank} of an agent corresponds to the index of her private valuation in ascending order (ties are broken arbitrarily according to some predetermined rule; e.g. lexicographic order) and indexed from $1$, unless explicitly stated otherwise. We will use the standard notation of $f(n) \sim g(n)$ if $\lim_{n \to + \infty} f(n)/g(n) = 1$ and $f(n) \lesssim g(n)$ if $\lim_{n \to + \infty} f(n)/g(n) \leq 1$, where $n$ will always be implied as the asymptotic parameter. For notational brevity, we will let $\binom{n}{m} = 0$ when $m > n$. Finally, $\| \mathbf{x} \|_1$ denotes the $L_1$ norm of $\mathbf{x} \in \mathbb{R}^d$, while $d$ will mainly represent the dimension of the underlying space.

\section{Sampling Approximation of the Median Mechanism}
\label{section:median}

Perhaps the most natural approximation of the $\textsc{Median}$ mechanism consists of taking a random sample of size $c$, and allocating a (single) facility to the median of the sample, as implemented in $\algomedian$ (Mechanism \ref{algo:approx_median}). Perhaps surprisingly, we will show that this simple approximation yields a social cost arbitrarily close to the optimal for the metric space $(\mathbb{R}^d, \| \cdot \|_1)$---for a sufficiently large sample. Our analysis commences with the median on the line, where our main contribution lies in \Cref{theorem:1-median_expectation}. Our approach is quite robust and yields analogous guarantees for the median defined on curves (\Cref{theorem:curve-median_expectation}) and the \emph{generalized median} on the metric space $(\mathbb{R}^d, \| \cdot \|_1)$ (\Cref{theorem:generalized-median_expectation}). We conclude this section by illustrating the barriers of sampling approximations when allocating a single facility on a \emph{tree metric}, as well as allocating \emph{multiple} facilities on the line.

\begin{algorithm}
\DontPrintSemicolon
\SetAlgoLined
\textbf{Input}: Set of agents $N$, accuracy parameter $\epsilon > 0$, confidence parameter $\delta > 0$\;
\textbf{Output}: $\mathbf{x} \in \mathbb{R}^d$ such that $\SC(\mathbf{x}) \leq (1 + \epsilon) \SC(\mathbf{x}_m)$, where $\mathbf{x}_m = \med(N)$\;
Set $c = \Theta(1/(\epsilon \delta)^2)$ to be the size of the sample\;
Let $S$ be a random sample of $c$ agents from $N$\;
\textbf{return} $\med(S)$\;
\caption{$\algomedian(N, \epsilon, \delta)$}
\label{algo:approx_median}
\end{algorithm}

In this section we do not have to dwell on incentive considerations given that our sampling mechanism $\algomedian$ directly inherits its truthfulness from the $\textsc{Median}$ mechanism (recall \Cref{proposition:median-truthful})---assuming of course that the domain admits a median.

\begin{proposition}
    $\algomedian$ is strategy-proof.
\end{proposition}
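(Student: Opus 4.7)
The plan is to condition on the realization of the random sample $S$ drawn in line 4 of Mechanism~\ref{algo:approx_median}, and argue that truthful reporting is a (weakly) dominant response for every agent under every such realization. Establishing the claim pointwise in the internal randomness gives the \emph{universal} version of strategy-proofness, which is the strongest of the notions discussed in the preliminaries and in particular implies strategy-proofness in expectation.

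Concretely, I would fix an arbitrary agent $i \in N$, an arbitrary profile of reports $(x_j)_{j \neq i}$ of the remaining agents, and an arbitrary realization $S \subseteq N$ with $|S|=c$. There are two cases. If $i \notin S$, the output $\med(S)$ depends only on the reports of agents in $S$, so any report by $i$ induces exactly the same allocated facility; truthful reporting is then trivially (weakly) optimal. If instead $i \in S$, the output equals $\med(S)$, the median of the restricted sub-instance on the agent set $S$; applying \Cref{proposition:median-truthful} to this sub-instance shows that reporting $x_i$ truthfully minimizes $i$'s distance from the allocated facility among all possible misreports.

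Since every realization of $S$ falls into exactly one of these two cases, and in each the truthful strategy is a (weakly) best response, truthful reporting dominates every possible deviation under every realization of the internal randomness. I do not anticipate a technical obstacle: the entire argument is a one-line reduction to the strategy-proofness of the full-information \textsc{Median} mechanism, with the only conceptual point being the distinction between ``in expectation'' and ``universal'' incentive compatibility, which the case analysis above delivers at no extra cost.
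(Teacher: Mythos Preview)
Your argument is correct and follows exactly the approach the paper intends: the paper simply states that $\algomedian$ ``directly inherits its truthfulness from the $\textsc{Median}$ mechanism'' and invokes \Cref{proposition:median-truthful}, without spelling out the conditioning on $S$ or the $i\in S$ / $i\notin S$ case split. Your proposal is a faithful unpacking of that one-line justification, and the observation that this yields \emph{universal} (not just in-expectation) strategy-proofness is a bonus the paper leaves implicit.
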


\subsection{Median on the Line}

In the following, we will tacitly consider an underlying arbitrary instance $\mathcal{I} = (x_1, \dots, x_n)$, with $x_i \in \mathbb{R}$ the (private) valuation---the preferred location---of agent $i$. To simplify the exposition, we will assume---without any loss of generality---that the number of agents $n$ is odd, with $n = 2\kappa + 1$ for some $\kappa \in \mathbb{N}$.

\paragraph{Sensitivity of the Median.}The first potential obstacle in approximating the $\textsc{Median}$ mechanism relates to the \emph{sensitivity} of the median. In particular, notice that the function of the median has an \emph{unbounded sensitivity}, that is, a unilateral deviation in the input can lead to an arbitrarily large shift in the output; more concretely, consider an instance with $n = 2\kappa + 1$ agents, and let $\kappa$ agents reside at $-l$ and $\kappa + 1$ agents at $+l$ for some large $l > 0$. If an agent from the rightmost group was to switch from $+l$ to $-l$, then the median would also relocate from $+l$ to $-l$, leading to a potentially unbounded deviation. It should be noted that in the regime of statistical learning theory, one technique to circumvent this impediment and ensure differential privacy revolves around the notion of \emph{smooth sensitivity}; e.g., see \cite{avellamedina2019differentially,brunel2020propose}. Our approach is based on the observation that the \emph{social cost} of the $\textsc{Median}$ inherently presents a sensitivity property. Formally, we establish the following lemma:

\begin{lemma}[Sensitivity of the $\textsc{Median}$]
    \label{lemma:sensitivity}
Let $x_m = \med(\mathcal{I})$ and $x \in \mathbb{R}$ some position such that $\epsilon \cdot n$ agents reside between $x$ and $x_m$, with $0 \leq \epsilon < 1/2$. Then, 
\begin{equation*}
    \SC(x) = \SC(x_m) \left( 1 + \mathcal{O} \left( \frac{4 \epsilon}{1 - 2\epsilon} \right) \right).
\end{equation*}
\end{lemma}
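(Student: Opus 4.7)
The plan is to write $\SC(x) - \SC(x_m)$ as a sum of pointwise differences $|x - x_i| - |x_m - x_i|$ and then exploit the fact that $x_m$ is the median to show that all but an $\mathcal{O}(\epsilon n)$ fraction of these pointwise differences cancel in sign. I will then convert the resulting bound into a multiplicative one on $\SC(x_m)$ by lower-bounding $\SC(x_m)$ using only the agents on the far side of $x$ from $x_m$. By the symmetry $x \leftrightarrow 2x_m - x$, it suffices to handle the case $x \geq x_m$; set $\Delta \triangleq x - x_m$. I will partition $\mathcal{I}$ into the $L$ agents with $x_i \leq x_m$, the $M$ agents with $x_m < x_i < x$, and the $R$ agents with $x_i \geq x$. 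Since $n = 2\kappa + 1$, the median property gives $L = \kappa + 1$, and the hypothesis gives $M \leq \epsilon n$.

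The heart of the computation is the following pointwise expansion: $|x - x_i| - |x_m - x_i|$ equals $+\Delta$ on the $L$ left agents, equals $-\Delta$ on the $R$ right agents, and lies in $[-\Delta, \Delta]$ on the $M$ middle agents. Summing over $i$ and using the identity $L - R = 2L - n + M = 1 + M$, I expect
\begin{equation*}
\SC(x) - \SC(x_m) \leq \Delta \cdot (L + M - R) = \Delta \cdot (1 + 2M) \leq \Delta \cdot (1 + 2\epsilon n).
\end{equation*}

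The second ingredient is a lower bound on $\SC(x_m)$: since each of the $R$ rightmost agents satisfies $x_i \geq x$ and hence $|x_i - x_m| \geq \Delta$, one has $\SC(x_m) \geq R \cdot \Delta = (\kappa - M)\Delta \geq (\kappa - \epsilon n)\Delta$. Combining with the previous display, and using $\kappa = (n-1)/2$ together with the $\lesssim$ notation from the preliminaries, gives
\begin{equation*}
\frac{\SC(x) - \SC(x_m)}{\SC(x_m)} \leq \frac{1 + 2\epsilon n}{\kappa - \epsilon n} \lesssim \frac{4\epsilon}{1 - 2\epsilon},
\end{equation*}
which is precisely the claimed $\mathcal{O}(4\epsilon/(1-2\epsilon))$ bound.

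There is no serious obstacle here---the bound is essentially sharp, and the only care needed concerns the lower-order additive $1$'s (coming from $\kappa = (n-1)/2$ and from $L - R = 1 + M$), which are absorbed by the $\mathcal{O}$-notation in the asymptotic $n \to \infty$ regime that the rest of the section adopts. Ties among the $x_i$ (agents landing exactly at $x_m$ or at $x$) are handled by relaxing the equalities $L = \kappa + 1$ and $R = \kappa - M$ to the analogous inequalities $L \geq \kappa$ and $R \geq \kappa - M$, which leaves the final constant unchanged.
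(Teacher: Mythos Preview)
Your argument is correct and essentially the same as the paper's. Both proofs take $x \geq x_m$ without loss of generality, upper-bound $\SC(x)-\SC(x_m)$ by roughly $2\epsilon n\,|x-x_m|$ via a partition into left/middle/right agents, lower-bound $\SC(x_m)$ by roughly $(\lfloor n/2\rfloor-\epsilon n)\,|x-x_m|$ using only the agents to the right of $x$, and divide; the only cosmetic difference is that the paper makes the left and right blocks the same size $\lfloor n/2\rfloor-\epsilon n$ and observes their combined cost is invariant, whereas you partition by position and compute $|x-x_i|-|x_m-x_i|$ pointwise.
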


\begin{proof}
For the sake of presentation, let us assume that $x \geq x_m$. Consider the set $L$ containing the $\lfloor n/2 \rfloor - \epsilon \cdot n$ leftmost agents (ties are broken arbitrarily), and the set $R$ with the $\lfloor n/2 \rfloor - \epsilon \cdot n$ rightmost agents, leading to a partition as illustrated in \Cref{fig:img_1}. Now observe that if we transfer the facility from $x_m = \med(\mathcal{I})$ to $x$ the cumulative social cost of $L$ and $R$ remains invariant, i.e., 
\begin{equation*}
    \sum_{i \in R \cup L} |x_i - x_m| = \sum_{i \in R \cup L} |x_i - x|.
\end{equation*}
In other words, the increase in social cost incurred by group $L$ is exactly the social cost decrease of group $R$. Thus, it follows that 

\begin{equation}
    \label{eq:bound_1}
    \SC(x) \leq \SC(x_m) + 2\epsilon n |x - x_m|,
\end{equation}
where notice that this inequality is tight when the agents in the interval $(x_m, x)$ are accumulated arbitrarily close to $x_m$. Moreover, we have that 

\begin{equation}
    \label{eq:bound_2}
    \SC(x_m) \geq \sum_{i \in R} |x_i - x_m| \geq \left( \lfloor n/2 \rfloor - \epsilon n \right) |x - x_m|,
\end{equation}
and the claim follows from bounds \eqref{eq:bound_1} and \eqref{eq:bound_2}.
\end{proof}

\begin{figure}[!ht]
    \centering
    \includegraphics[scale=0.7]{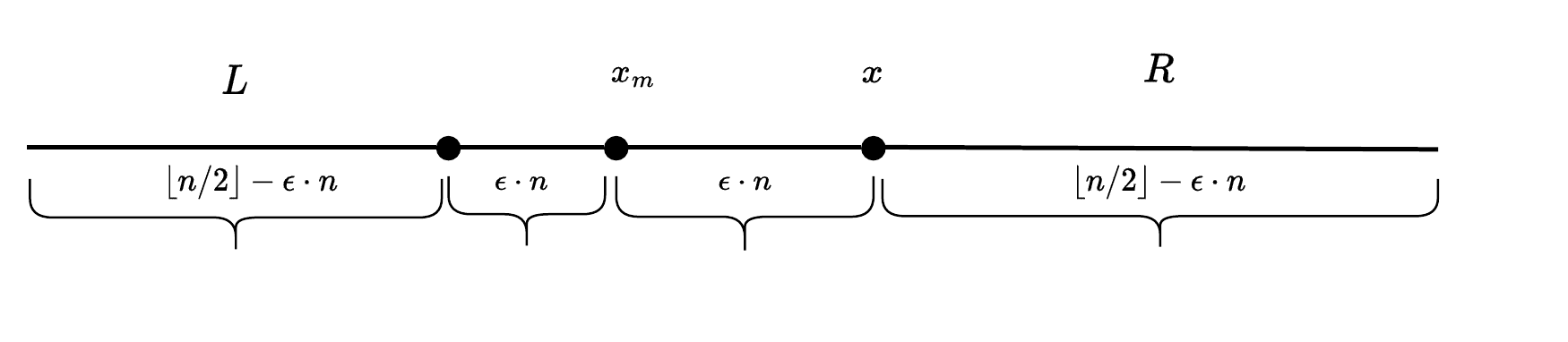}
    \caption{Partition of the agents on the line.}
    \label{fig:img_1}
\end{figure}

As a result, \Cref{lemma:sensitivity} implies that a unilateral deviation by a player can only lead to an increase of $\mathcal{O}(1/n)$ in the social cost. To put it differently, if an adversary corrupts arbitrarily a \emph{constant} number of reports, the increase in social cost will be asymptotically negligible. Indeed, even if the allocated facility lies arbitrarily far from the median, the induced social cost might still be near-optimal. 

Now let us assume---without loss of generality---that $c = 2\rho + 1$, for some $\rho \in \mathbb{N}$, where $c$ is the size of the sample; also recall that $n = 2\kappa + 1$. Motivated by \Cref{lemma:sensitivity}, our analysis will be \emph{oblivious} to the agents' locations on the line, but instead, our focus will be on characterizing the \emph{rank} of the sample's median---the relative order of the sample's median w.r.t. the entire instance; this approach will also allow us to directly obtain a guarantee in more general metric spaces. More precisely, consider a random variable $X_r$ that represents the rank---among the entire instance---of the sample's median, normalized in the domain $[-1, 1]$; for instance, if the sample's median happens to coincide with the median of the entire instance, then $X_r = 0$. The reason behind this normalization relates to our asymptotic characterization (\Cref{theorem:convergence}). Now fix a particular rank $i/\kappa$ in $[-1, 1]$. Notice that the number of configurations that correspond to the event $\{ X_r = i/\kappa\}$ is

$$ \binom{\kappa-i}{\rho} \binom{\kappa+i}{\rho}.$$
As a result, the probability mass function of $X_r$ can be expressed as follows:

\begin{equation}
    \label{eq:pmf}
    \pr\left[X_r = \frac{i}{\kappa}\right] = \ddfrac{\binom{\kappa - i}{\rho} \binom{\kappa + i}{\rho}}{\binom{2\kappa + 1}{2\rho + 1}}.
\end{equation}

It is interesting to notice the similarity of this expression to the probability mass function of a \emph{hypergeometric} distribution. We also remark that the normalization constraint of the probability mass function \eqref{eq:pmf} yields a well-known variant of the Vandermonde identity:  
\begin{equation*}
    \sum_{i = - \kappa}^{\kappa} \binom{\kappa - i}{\rho} \binom{\kappa + i}{\rho} = \sum_{i=0}^{2\kappa} \binom{i}{\rho} \binom{2\kappa -i }{\rho} = \binom{2\kappa + 1}{2 \rho + 1}.
\end{equation*}
For this reason, we shall refer to the distribution of $X_r$ as the $(\kappa, \rho)$-\emph{Vandermonde distribution}. Importantly, \Cref{lemma:sensitivity} tells us that the concentration of the Vandermonde distribution---for sufficiently large values of parameter $\rho$---suffices to obtain a near-optimal approximation with respect to the social cost. However, quantifying exactly the concentration of the Vandermonde distribution appears to be challenging.\footnote{Ariel Procaccia pointed out to us that there is a more elementary way to ``upper-bound'' the concentration of $X_r$; see \cite{DBLP:conf/innovations/BranzeiP15}, Lemma $1$. Yet, we remark that the aforementioned argument would only provide a guarantee with high probability, and not in expectation.} In light of this, our main insight---and the main technical contribution of this section---is an asymptotic characterization of this distribution. 

\begin{theorem}[Convergence of the Vandermonde Distribution]
    \label{theorem:convergence}
If we let $\kappa \to + \infty$ the $(\kappa,\rho)$-Vandermonde distribution converges to a continuous distribution with probability density function $\phi: [-1, 1] \to \mathbb{R}$, such that
\begin{equation}
    \label{eq:phi}
    \phi(t) = C(\rho) (1 - t^2)^{\rho},
\end{equation}
where
\begin{equation*}
    C(\rho) = B\left( \frac{1}{2}, \rho + 1 \right)^{-1} = \frac{(2\rho+1)!}{(\rho!)^2 2^{2\rho+1}}.
\end{equation*}
\end{theorem}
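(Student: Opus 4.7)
The plan is to establish a local limit theorem: show that the rescaled PMF $\kappa \cdot \pr[X_r = i/\kappa]$ converges to $\phi(t)$ as $\kappa \to \infty$ with $t = i/\kappa$, and then to promote this pointwise statement to convergence in distribution via a Riemann-sum argument on the grid $\{i/\kappa\}_i$, which has spacing $1/\kappa$. Since $\rho$ is held fixed while $\kappa \to \infty$, no sophisticated machinery is needed; each binomial in \eqref{eq:pmf} is a polynomial of bounded degree evaluated at a growing integer, so Stirling can be avoided entirely.

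Concretely, the first step is to apply $\binom{n}{\rho} = \frac{n^\rho}{\rho!}(1 + O(1/n))$ to each factor, yielding
\[
\binom{\kappa - i}{\rho}\binom{\kappa + i}{\rho} \sim \frac{(\kappa^2 - i^2)^\rho}{(\rho!)^2}, \qquad \binom{2\kappa + 1}{2\rho + 1} \sim \frac{(2\kappa)^{2\rho+1}}{(2\rho+1)!},
\]
uniformly in $i$ provided $|i/\kappa|$ stays in a fixed compact subinterval of $(-1, 1)$. Plugging these into \eqref{eq:pmf} and writing $\kappa^2 - i^2 = \kappa^2(1 - t^2)$ gives
\[
\kappa \cdot \pr[X_r = i/\kappa] \longrightarrow \frac{(2\rho + 1)!}{(\rho!)^2 \, 2^{2\rho+1}}(1 - t^2)^\rho,
\]
which is exactly $\phi(t)$. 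The equivalent form $C(\rho) = B(1/2, \rho+1)^{-1}$ follows either by forcing $\int_{-1}^{1}\phi(t)\,dt = 1$ or directly from the Legendre duplication formula, which yields $B(1/2, \rho+1) = 2^{2\rho+1}(\rho!)^2/(2\rho+1)!$.

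To upgrade this local limit to weak convergence, I would observe that for every $t \in (-1, 1)$,
\[
\pr[X_r \le t] = \frac{1}{\kappa} \sum_{i = -\kappa}^{\lfloor t \kappa \rfloor} \bigl(\kappa \cdot \pr[X_r = i/\kappa]\bigr) \longrightarrow \int_{-1}^{t} \phi(s)\,ds,
\]
which is a standard Riemann approximation, valid as soon as the summands converge uniformly on compact subintervals and the total mass is accounted for.

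The hard part is precisely this last point: the approximation $\binom{\kappa - i}{\rho} \sim (\kappa - i)^\rho/\rho!$ degrades when $\kappa - i$ is of order $\rho$, so the local limit above is only uniform on compact subintervals of $(-1, 1)$, not on $[-1, 1]$ itself. The standard $\eta$-splitting argument dispatches this: split the sum at $|i| = (1 - \eta)\kappa$, apply the uniform local limit on the bulk $|i/\kappa| \le 1 - \eta$, and bound the contribution of the two boundary strips by $O(\eta \cdot \sup_{|t| \ge 1 - \eta} \phi(t))$ using the boundedness of $\phi$; sending $\kappa \to \infty$ first and then $\eta \to 0$ concludes the argument.
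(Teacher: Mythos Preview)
Your proposal is correct and follows essentially the same route as the paper's proof: approximate each binomial coefficient in \eqref{eq:pmf} by its leading power (using that $\rho$ is fixed), and then recognize the resulting expression for the CDF as a Riemann sum converging to $\int_{-1}^{x}\phi(t)\,dt$. The paper carries this out directly on the CDF sum, expanding the falling factorials $(n-i)!/(n-i-\rho)!$ as polynomials of degree $\rho$ in $(n-i)$ and observing that all sub-leading terms contribute $o(1)$ after dividing by $n^{2\rho+1}$; this handles the boundary $|i|\approx\kappa$ automatically, so no $\eta$-splitting is needed. Your local-limit-then-Riemann-sum packaging is equivalent; the one place to tighten is the boundary bound, where ``boundedness of $\phi$'' alone is not quite enough---you need either the uniform pointwise inequality $\Pr[X_r=i/\kappa]\le C_\rho\,\phi(i/\kappa)/\kappa$ (immediate from $\binom{m}{\rho}\le m^\rho/\rho!$ and a lower bound on $\binom{2\kappa+1}{2\rho+1}$), or the total-mass argument that the tail probability is at most $1-\int_{-(1-\eta)}^{1-\eta}\phi$, but either fix is routine.
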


In the statement of the theorem, $B$ represents the \emph{beta function}. Recall that for $x, y \in \mathbb{R}_{> 0}$, the beta function is defined as 

\begin{equation}
    \label{eq:beta}
    B(x,y) = \int_{0}^1 t^{x-1}(1-t)^{y-1} dt = \frac{\Gamma(x) \Gamma(y)}{\Gamma(x + y)},
\end{equation}
where $\Gamma$ represents the \emph{gamma function}. One can verify the normalization constraint in \Cref{theorem:convergence} using a quadratic transform $u = t^2$ as follows:
\begin{equation*}
    \int_{-1}^1 (1 - t^2)^{\rho} dt = 2 \int_{0}^1 (1 - t^2)^{\rho} dt = \int_{0}^1 u^{ - \frac{1}{2}}(1 - u)^{\rho} du = B \left( \frac{1}{2}, \rho + 1\right).
\end{equation*}
Moreover, the final term can be expressed succinctly through the following lemma:

\begin{lemma}
If $\Gamma$ represents the gamma function and $n \in \mathbb{N}$, 
\begin{equation*}
    \Gamma\left(\frac{1}{2} + n \right) = \frac{(2n)!}{4^n n!} \sqrt{\pi}.
\end{equation*}
\end{lemma}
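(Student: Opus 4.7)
The plan is to proceed by induction on $n$, using the two classical ingredients: the functional equation $\Gamma(x+1) = x \Gamma(x)$ and the base value $\Gamma(1/2) = \sqrt{\pi}$.

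First I would establish the base case $n = 0$, where the left-hand side equals $\Gamma(1/2) = \sqrt{\pi}$ and the right-hand side equals $\frac{0!}{4^0 \cdot 0!}\sqrt{\pi} = \sqrt{\pi}$, matching trivially. For the inductive step, assuming the identity holds for some $n \in \mathbb{N}$, I would apply the functional equation to write
\begin{equation*}
    \Gamma\!\left(\frac{1}{2} + n + 1\right) = \left(\frac{1}{2} + n\right) \Gamma\!\left(\frac{1}{2} + n\right) = \frac{2n+1}{2} \cdot \frac{(2n)!}{4^n n!}\sqrt{\pi}.
\end{equation*}

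It then remains to verify algebraically that this equals $\frac{(2n+2)!}{4^{n+1}(n+1)!}\sqrt{\pi}$, which follows upon observing that $(2n+2)! = (2n+2)(2n+1)(2n)! = 2(n+1)(2n+1)(2n)!$, so the factor $2(n+1)$ in the numerator cancels against $4(n+1) = 4 \cdot (n+1)$ in the denominator to leave exactly $\frac{2n+1}{2}$ in front of $\frac{(2n)!}{4^n n!}$. This closes the induction.

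I do not anticipate any real obstacle here: the only non-trivial input is the evaluation $\Gamma(1/2) = \sqrt{\pi}$, which is standard (e.g., via the Gaussian integral $\int_{-\infty}^{\infty} e^{-x^2} dx = \sqrt{\pi}$ together with the substitution $u = x^2$ in the integral definition of $\Gamma$). Everything else is bookkeeping with factorials and powers of $4$.
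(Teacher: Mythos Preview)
Your inductive proof is correct and is the standard way to establish this identity. The paper itself does not prove this lemma at all; it is simply stated as a known fact about the gamma function and used to simplify the normalization constant $C(\rho)$, so there is no ``paper's own proof'' to compare against. Your approach is entirely appropriate here.
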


Before we proceed with the proof of \Cref{theorem:convergence}, we state an elementary result from real analysis. 

\begin{lemma}
    \label{lemma:integral}
    Let $f: [-1, 1] \to \mathbb{R}$ be an integrable function\footnote{The integrability here is implied in the standard Riemannian-Darboux sense.} and $x$ some number in $(-1, 1)$; then, 
    \begin{equation*}
        \lim_{n \to +\infty} \frac{x+1}{n} \sum_{i=1}^n f\left( -1 + i \cdot \frac{x+1}{n} \right) = \int_{-1}^x f(t) dt.
    \end{equation*}
\end{lemma}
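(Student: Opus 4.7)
The plan is to recognize the left-hand side as a right-endpoint Riemann sum for the restriction of $f$ to the subinterval $[-1,x] \subseteq [-1,1]$, and then invoke the definition of Riemann--Darboux integrability. First I would set $h_n = (x+1)/n$ and define the partition $P_n$ of $[-1,x]$ by $t_i = -1 + i h_n$ for $i=0,1,\dots,n$, so that $t_0 = -1$, $t_n = x$, and each subinterval $[t_{i-1},t_i]$ has length $h_n$. Then the expression $\frac{x+1}{n} \sum_{i=1}^n f(-1 + i h_n)$ is exactly the Riemann sum $\sum_{i=1}^n f(t_i)\,(t_i - t_{i-1})$ with tag points chosen at the right endpoints of each subinterval.

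Next, I would observe that since $f$ is Riemann integrable on $[-1,1]$, its restriction to $[-1,x]$ is Riemann integrable as well; this is a standard fact, provable either by noting that $f \cdot \mathbbm{1}_{[-1,x]}$ differs from an integrable function only on a set of measure zero, or directly via the Darboux criterion applied to refinements of partitions. For any $\epsilon > 0$ there is then a partition $P^*$ of $[-1,x]$ whose upper and lower Darboux sums differ by less than $\epsilon$. Because the mesh $\|P_n\| = h_n \to 0$ as $n \to \infty$, any common refinement of $P_n$ with $P^*$ (for sufficiently large $n$) yields upper and lower sums that sandwich both the Riemann sum in the statement and the integral $\int_{-1}^x f(t)\,dt$, and these sums can be made arbitrarily close.

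More cleanly, I would appeal to the standard equivalence: if $f$ is Riemann integrable on a compact interval, then for every sequence of tagged partitions whose mesh tends to zero, the associated Riemann sums converge to $\int f$. Since our uniform partitions $P_n$ with right-endpoint tags satisfy $\|P_n\| = (x+1)/n \to 0$, this equivalence directly gives the desired limit.

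I do not expect a genuine obstacle here; the only mild subtlety is explicitly justifying the restriction of an integrable function to a subinterval is integrable and then invoking the correct limit theorem for Riemann sums under vanishing mesh. If a more elementary route is preferred (avoiding the general mesh-goes-to-zero theorem), one can argue via boundedness: $f$ is bounded by some $M$ on $[-1,1]$, sandwich the right-endpoint sum between the lower and upper Darboux sums on $P_n$, and use integrability to conclude both converge to $\int_{-1}^x f(t)\,dt$.
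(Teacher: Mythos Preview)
Your argument is correct: the expression is precisely the right-endpoint Riemann sum for $f$ on $[-1,x]$ with uniform mesh $(x+1)/n$, the restriction of a Riemann integrable function to a subinterval is again integrable, and the standard theorem that Riemann sums over tagged partitions with vanishing mesh converge to the integral finishes the job. The paper itself does not prove this lemma at all---it is merely stated as ``an elementary result from real analysis'' and then invoked in the proof of \Cref{theorem:convergence}---so there is no alternative approach to compare against; you have simply supplied the standard justification the authors chose to omit.
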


\begin{proof}[Proof of \Cref{theorem:convergence}]
Take some arbitrary $x \in (-1, 1)$, let $\nu = \lfloor x \kappa \rfloor + \kappa + 1$, and consider a random variable $X_r$ drawn from a $(\kappa, \rho)$-Vandermonde distribution. It follows that

\begin{equation}
    \label{eq:con_1}
    \lim_{\kappa \to + \infty} \pr[X_r \leq x] = \lim_{\kappa \to + \infty} \sum_{i = -\kappa}^{\lfloor x \kappa \rfloor} \ddfrac{\binom{\kappa - i}{\rho} \binom{\kappa + i}{\rho}}{\binom{2\kappa + 1}{2\rho + 1}} = \lim_{n \to + \infty} \sum_{i = 1}^\nu \ddfrac{\binom{n - i}{\rho} \binom{i-1}{\rho}  }{\binom{n}{2\rho + 1}},
\end{equation}
where recall that $n = 2\kappa + 1$. Given that $n!/(n-j)! = \Theta_n(n^j), \forall j \in \mathbb{N}$, we can recast \eqref{eq:con_1} as 

\begin{align}
    \lim_{\kappa \to + \infty} \pr[X_r \leq x] &= \frac{(2\rho+1)!}{(\rho!)^2} \lim_{n \to + \infty} \frac{1}{n^{2\rho+1}} \sum_{i=1}^{\nu} \frac{(n-i)!}{(n-i-\rho)!} \frac{(i-1)!}{(i-1-\rho)!} \notag \\
    &= \frac{(2\rho+1)!}{(\rho!)^2} \lim_{n \to + \infty} \frac{1}{n^{2\rho+1}} \sum_{i=1}^{\nu} (n-i)^{\rho} i^{\rho} \label{eq:con_3}, 
\end{align}
where the last derivation follows by ignoring lower order terms. Finally, from \eqref{eq:con_3} we obtain that
\begin{align*}
    \lim_{\kappa \to + \infty} \pr[X_r \leq x] &= \frac{(2\rho+1)!}{(\rho!)^2} \lim_{n \to + \infty} \frac{1}{n} \sum_{i=1}^{\nu}  \left( \frac{i}{n} - \left( \frac{i}{n} \right)^2 \right)^{\rho} \\
    &= \frac{(2\rho+1)!}{(\rho!)^2} \lim_{\nu \to + \infty} \frac{x+1}{2\nu} \sum_{i = 1}^\nu \left(i \cdot \frac{x+1}{2\nu} -\left(i \cdot \frac{x+1}{2\nu} \right)^2 \right)^{\rho} \\
    &= \frac{(2\rho+1)!}{(\rho!)^2 2^{2\rho+1}} \lim_{\nu \to + \infty} \frac{x+1}{\nu} \sum_{i = 1}^\nu \left(2i \cdot \frac{x+1}{\nu} -\left(i \cdot \frac{x+1}{\nu} \right)^2 \right)^{\rho} \\
    &= \frac{(2\rho+1)!}{(\rho!)^2 2^{2\rho + 1}} \lim_{\nu \to + \infty} \frac{x+1}{\nu} \sum_{i = 1}^\nu \left( 1 - \left( -1 + i \cdot \frac{x+1}{\nu} \right)^2 \right)^{\rho} \\
    &= \frac{(2\rho + 1)!}{(\rho!)^2 2^{2\rho + 1}} \int_{-1}^x (1 - t^2)^{\rho} dt,
\end{align*}
where in the last line we applied \Cref{lemma:integral}, concluding the proof.
\end{proof}

Having established this asymptotic characterization, we are now ready to argue about the concentration of the induced distribution with respect to parameter $\rho$.

\begin{theorem}[Concentration]
    \label{theorem:concentration}
Consider a random variable $X$\footnote{It is interesting to note that $X$ is a \emph{sub-Gaussian} random variable \cite{Ledoux:1617672} with \emph{variance proxy} $\sigma^2 = \Theta(1/\rho)$; indeed, notice that $(1-t^2)^{\rho} \leq e^{- \rho t^2}$, with the bound being tight for $|t| \downarrow 0$. This observation leads to an alternative---and rather elegant---way to analyze the concentration of $X$.} with probability density function $\phi(t) = C(\rho) (1-t^2)^{\rho}$. Then, for any $\epsilon > 0$ and $\delta > 0$, there exists some $\rho_0 = \Theta(1/(\epsilon \delta)^2)$ such that $\forall \rho \geq \rho_0$,
\begin{equation*}
    \pr[|X| \geq \epsilon] \leq \delta.
\end{equation*}
\end{theorem}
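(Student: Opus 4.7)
The plan is to apply Markov's inequality to $|X|$ after computing its first moment in closed form. By the symmetry of $\phi$ around $0$, $\E[|X|] = 2C(\rho) \int_0^1 t(1-t^2)^\rho\, dt$, and the substitution $u = 1-t^2$ collapses the integral to $\frac{1}{2}\int_0^1 u^\rho\, du = 1/(2(\rho+1))$. This yields the clean identity $\E[|X|] = C(\rho)/(\rho+1)$, which is the only non-trivial integral the proof requires.

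Next I would read off the order of $C(\rho)$. Writing $C(\rho) = (2\rho+1)\binom{2\rho}{\rho}/2^{2\rho+1}$ and invoking Stirling via $\binom{2\rho}{\rho} \sim 4^\rho/\sqrt{\pi\rho}$ gives $C(\rho) \sim \sqrt{\rho/\pi}$; equivalently, $C(\rho) = B(1/2,\rho+1)^{-1}$ together with the standard asymptotic $B(1/2,\rho+1) \sim \sqrt{\pi/\rho}$ delivers the same conclusion. Combining with the previous step, $\E[|X|] \sim 1/\sqrt{\pi\rho} = \Theta(1/\sqrt{\rho})$.

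Finally, Markov's inequality yields $\Pr[|X| \geq \epsilon] \leq \E[|X|]/\epsilon = \Theta(1/(\epsilon \sqrt{\rho}))$, so requiring the right-hand side to fall below $\delta$ forces $\sqrt{\rho} = \Omega(1/(\epsilon \delta))$, i.e., $\rho_0 = \Theta(1/(\epsilon \delta)^2)$, matching the target exactly. The only real subtlety I anticipate is tracking the hidden constants in the Stirling estimate if one wants an explicit (non-asymptotic) $\rho_0$ rather than just its order. I note that the footnote's sub-Gaussian viewpoint, via the pointwise bound $(1-t^2)^\rho \leq e^{-\rho t^2}$ and a Gaussian tail estimate, would produce an exponentially sharper $\rho_0 = \Theta(\log(1/\delta)/\epsilon^2)$; however, this is stronger than required, and the first-moment argument above is the most economical route to the stated claim.
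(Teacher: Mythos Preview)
Your proposal is correct and mirrors the paper's own proof almost exactly: both compute $\E[|X|]$ in closed form (the paper via the substitution $u=t^2$ and the beta-function identity, you via $u=1-t^2$, arriving at the same $C(\rho)/(\rho+1)$), estimate $C(\rho)=\Theta(\sqrt{\rho})$ through Stirling, and finish with Markov's inequality. The only cosmetic difference is that the paper records the general $j$-th moment $\E[|X|^j]=B((j{+}1)/2,\rho+1)/B(1/2,\rho+1)$ before specializing to $j=1$, noting that higher moments would sharpen the $\delta$-dependence.
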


\begin{proof}
Consider some $j \in \mathbb{N}$. The $j^{\text{th}}$ moment of $|X|$ can be expressed as 

\begin{equation*}
    \E[|X|^j] = C(\rho) \int_{-1}^{1} |t|^j (1-t^2)^{\rho} dt = 2 C(\rho) \int_{0}^1 t^j (1-t^2)^{\rho} dt.
\end{equation*}
Applying the quadratic transformation $u = t^2$ yields
\begin{equation}
    \label{eq:moment}
    \E[|X|^j] = \ddfrac{B \left( \frac{j}{2} + \frac{1}{2}, \rho + 1 \right)}{ B \left( \frac{1}{2}, \rho    + 1 \right)}.
\end{equation}

Recall from Stirling's approximation formula that $n! = \Theta\left(\sqrt{2\pi n} \left( \frac{n}{e} \right)^n \right)$; thus, we obtain that $C(\rho)$ grows as 
\begin{align*}
    C(\rho) = \frac{(2\rho + 1)!}{(\rho !)^2 2^{2\rho + 1}} = \Theta(\sqrt{\rho}).
\end{align*}

In particular, this along with \eqref{eq:moment} imply that $\E[|X|] = \Theta(1/\sqrt{\rho})$. Thus, if we apply Markov's inequality we obtain that 
\begin{equation*}
    \pr[|X| \geq \epsilon] = \mathcal{O} \left( \frac{1}{\epsilon \sqrt{\rho}} \right).
\end{equation*}
As a result, it suffices to select $\rho = \Theta(1/(\epsilon \delta)^2)$ so that $\pr[|X| \geq \epsilon] \leq \delta$; notice that tighter bounds w.r.t. the confidence parameter $\delta$ can be obtained if we apply Markov's inequality for higher moments of $|X|$ through \eqref{eq:moment}.
\end{proof}

\begin{corollary}
    \label{corollary:expectation}
Consider a random variable $X$ with probability density function $\phi(t) = C(\rho) (1-t^2)^{\rho}$. Then, $\E[|X|] = \Theta(1/\sqrt{\rho})$.
\end{corollary}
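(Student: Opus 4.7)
The plan is to specialize the moment identity from \Cref{theorem:concentration} to the case $j = 1$ and then read off both sides asymptotically. Setting $j = 1$ in \eqref{eq:moment} gives
\begin{equation*}
    \E[|X|] = \frac{B(1, \rho+1)}{B(1/2, \rho+1)},
\end{equation*}
so the corollary reduces to showing that this ratio of beta functions is of order $1/\sqrt{\rho}$.

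Next, I would evaluate the two beta functions separately. For the numerator, directly integrating \eqref{eq:beta} gives $B(1, \rho+1) = \int_0^1 (1-u)^{\rho}\, du = 1/(\rho+1)$, which contributes a factor $\Theta(1/\rho)$. For the denominator, the gamma-function representation yields $B(1/2, \rho+1) = \Gamma(1/2)\Gamma(\rho+1)/\Gamma(\rho+3/2)$. Equivalently, one can reuse the observation from the proof of \Cref{theorem:concentration} that $C(\rho) = B(1/2, \rho+1)^{-1} = \Theta(\sqrt{\rho})$, which immediately gives $B(1/2, \rho+1) = \Theta(1/\sqrt{\rho})$.

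Putting the two estimates together yields
\begin{equation*}
    \E[|X|] = \frac{\Theta(1/\rho)}{\Theta(1/\sqrt{\rho})} = \Theta\!\left(\frac{1}{\sqrt{\rho}}\right),
\end{equation*}
which is exactly the claim. Since the corollary is essentially a repackaging of the calculation that already appears within the proof of \Cref{theorem:concentration}, there is no genuine technical obstacle; the only step worth writing out carefully is the Stirling estimate for the ratio $\Gamma(\rho+1)/\Gamma(\rho+3/2)$, and even this can be avoided by invoking the already-computed asymptotic of $C(\rho)$. A complementary lower bound of the same order comes for free from the same identity, so both the upper and lower parts of the $\Theta$ are handled in one shot.
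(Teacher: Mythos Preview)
Your proposal is correct and follows essentially the same route as the paper: the corollary is extracted directly from the proof of \Cref{theorem:concentration}, where the moment identity \eqref{eq:moment} at $j=1$ is combined with the Stirling estimate $C(\rho)=\Theta(\sqrt{\rho})$ to conclude $\E[|X|]=\Theta(1/\sqrt{\rho})$. Your additional remark that $B(1,\rho+1)=1/(\rho+1)$ just makes explicit the one step the paper leaves implicit.
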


We are now ready to analyze the approximation ratio of our sampling median, both in terms of expectation and with high probability.

\begin{theorem}[Sampling Median on the Line]
    \label{theorem:1-median_expectation}
Consider a set of agents $N = [n]$ that lie on the metric space $(\mathbb{R}, | \cdot |)$. Then, for any $\epsilon > 0$, $\algomedian(N, \epsilon, \delta=1)$ takes a sample of size $c = \Theta(1/\epsilon^2)$ and yields in expectation a $1 + \epsilon$ approximation w.r.t. the optimal social cost of the full information $\textsc{Median}$, while $n \to + \infty$.
\end{theorem}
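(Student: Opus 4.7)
The plan is to combine the sensitivity lemma with the limiting distribution of the sample median's rank. Let $\hat{x} = \med(S)$ be the sample median and let $X_r \in [-1,1]$ denote its normalized rank within the full instance $\mathcal{I}$, so that the number of agents lying between $\hat{x}$ and $x_m = \med(\mathcal{I})$ is $(|X_r|/2)\, n$. Substituting this fraction into \Cref{lemma:sensitivity} gives, whenever $|X_r| < 1$,
\[
    \SC(\hat{x}) \;\leq\; \SC(x_m)\left(1 + \mathcal{O}\!\left(\frac{|X_r|}{1 - |X_r|}\right)\right).
\]
Taking expectation over the random sample and passing to the asymptotic regime $n \to \infty$ via \Cref{theorem:convergence}, the entire task reduces to controlling the limiting quantity $C(\rho)\int_{0}^{1}\frac{t}{1-t}(1-t^2)^\rho\,dt$. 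Specifically, it suffices to prove this integral is $\mathcal{O}(1/\sqrt{\rho})$; then choosing $\rho = \Theta(1/\epsilon^2)$, equivalently a sample size $c = 2\rho+1 = \Theta(1/\epsilon^2)$, yields the desired $1 + \epsilon$ approximation in expectation.

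For this integral I would split at $t = 1/2$. On $[0, 1/2]$, $1/(1-t) \leq 2$, so the contribution is bounded by $2\,\E[|X|] = \mathcal{O}(1/\sqrt{\rho})$ via \Cref{corollary:expectation}. On $[1/2, 1]$ I would factor $(1-t^2)^\rho = (1-t)^\rho(1+t)^\rho$, apply the crude bound $(1+t)^\rho \leq 2^\rho$, and cancel one factor of $(1-t)$ against the denominator; the remaining $\int_{1/2}^{1}(1-t)^{\rho-1}\,dt = (1/2)^\rho/\rho$ exactly cancels the $2^\rho$, yielding an $\mathcal{O}(1/\rho)$ bound, which after multiplication by $C(\rho) = \Theta(\sqrt{\rho})$ becomes $\mathcal{O}(1/\sqrt{\rho})$. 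Adding the two pieces gives the advertised rate.

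The main obstacle is precisely this tail behaviour near $|X_r| = 1$: the sensitivity bound blows up there and a naive moment estimate on $|X_r|$ alone is not enough, so one must exploit the cancellation between the exponential growth of $(1+t)^\rho$ and the exponential decay of $(1-t)^\rho$ at the boundary; the choice of splitting point $t=1/2$ is what balances the two regimes. A secondary technical point is justifying the exchange of the limit $n \to \infty$ with the expectation, which is routine because the discrete Vandermonde distribution is supported on $[-1+1/\kappa,\,1-1/\kappa]$ and the limiting integrand is integrable; alternatively, one could carry out the same splitting argument directly on the probability mass function of $X_r$ and avoid appealing to weak convergence altogether.
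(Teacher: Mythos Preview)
Your proposal is correct and follows the same high-level strategy as the paper: combine \Cref{lemma:sensitivity} with \Cref{theorem:convergence} to reduce everything to bounding $C(\rho)\int_{0}^{1}\frac{t}{1-t}(1-t^2)^{\rho}\,dt$, then show this is $\mathcal{O}(1/\sqrt{\rho})$.

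The only point of divergence is how you control the integral. You split at $t=1/2$ and treat the tail by the crude estimate $(1+t)^{\rho}\le 2^{\rho}$, which works but requires balancing two regimes. The paper instead uses the one-line algebraic identity
\[
\frac{(1-t^2)^{\rho}}{1-t}=(1+t)(1-t^2)^{\rho-1}\le 2(1-t^2)^{\rho-1},
\]
valid uniformly on $[0,1]$, so that the whole integral is bounded by $2\int_0^1 t(1-t^2)^{\rho-1}\,dt$; after adjusting the normalizing constant from $C(\rho)$ to $C(\rho-1)$, this is exactly the first moment at parameter $\rho-1$ and \Cref{corollary:expectation} applies directly. This avoids the split entirely and makes the cancellation you identified (growth of $(1+t)^{\rho}$ versus decay of $(1-t)^{\rho}$) transparent in a single step. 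Your argument is sound, just slightly longer; both yield the same $\mathcal{O}(1/\sqrt{\rho})$ rate.
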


\begin{proof}
Consider a random variable $X$ with probability density function $\phi(t) = C(\rho) (1-t^2)^{\rho}$. \Cref{theorem:convergence} implies that $X$ corresponds to the rank of the sample's median with sample size $c = 2\rho + 1$, while $n \to + \infty$. Let $g : (0,1) \ni x \mapsto 2x/(1 - x)$; we know from \Cref{lemma:sensitivity} that the expected approximation ratio on the social cost is $1 + \mathcal{O}(\E[g(|X|)])$. But, it follows that 
\begin{align*}
    \E[g(|X|)] = 4 C(\rho) \int_{0}^1 \frac{t}{1 - t} (1-t^2)^{\rho} dt &\leq 8 C(\rho) \int_{0}^1 t (1-t^2)^{\rho-1} dt \\
    &= 8 \frac{2\rho + 1}{2\rho} C(\rho - 1) \int_{0}^1 t (1-t^2)^{\rho - 1} dt.
\end{align*}
Now notice that \Cref{corollary:expectation} implies that

\begin{equation*}
    C(\rho - 1) \int_{0}^1 t (1 - t^2)^{\rho -1} dt = \Theta\left( \frac{1}{\sqrt{\rho}} \right).
\end{equation*}

As a result, we have shown that the expected approximation ratio is $1 + \mathcal{O}(1/\sqrt{\rho})$, and taking $\rho = \Theta(1/\epsilon^2)$ concludes the proof.
\end{proof}

\begin{corollary}
    \label{corollary:1-median_whp}
Consider a set of agents $N = [n]$ that lie on the metric space $(\mathbb{R}, | \cdot |)$. Then, for any $\epsilon > 0$ and $\delta >0$, $\algomedian(N, \epsilon, \delta)$ takes a sample of size $c = \Theta(1/(\epsilon \delta)^2)$ and yields with probability at least $1 - \delta$ a $1 + \epsilon$ approximation w.r.t. the optimal social cost of the full information $\textsc{Median}$, while $n \to + \infty$.
\end{corollary}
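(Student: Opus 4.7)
The plan is to combine the concentration result of \Cref{theorem:concentration} with the sensitivity lemma (\Cref{lemma:sensitivity}), using \Cref{theorem:convergence} to transfer the concentration from the limiting density $\phi$ to the actual distribution of the sample's rank. Recall that the normalized rank $X_r \in [-1, 1]$ of the sample's median records the position of the sample's median relative to the full instance, with the event $\{|X_r| \leq \alpha\}$ corresponding to at most $\alpha \kappa \approx (\alpha/2) n$ agents lying between the sample's median and the true median $x_m$.

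First I would choose a sensitivity parameter $\alpha = \Theta(\epsilon)$ small enough so that, upon invoking \Cref{lemma:sensitivity} with fraction $\alpha/2$, the resulting multiplicative factor $1 + \mathcal{O}\!\left(\frac{2\alpha}{1 - \alpha}\right)$ is dominated by $1 + \epsilon$; this is clearly achievable by taking $\alpha = c_0 \epsilon$ for some small absolute constant $c_0 > 0$. Next, I would apply \Cref{theorem:concentration} with accuracy $\alpha$ and confidence $\delta$, which furnishes some $\rho_0 = \Theta(1/(\alpha \delta)^2) = \Theta(1/(\epsilon \delta)^2)$ such that, for the limiting random variable $X$ with density $\phi(t) = C(\rho)(1-t^2)^{\rho}$ and every $\rho \geq \rho_0$, $\Pr[|X| \geq \alpha] \leq \delta$. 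Setting the sample size to $c = 2\rho + 1 = \Theta(1/(\epsilon\delta)^2)$ matches the stated bound.

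Finally I would invoke \Cref{theorem:convergence}: since $\phi$ is a continuous density and $\pm \alpha$ are continuity points of its CDF, convergence in distribution of $X_r$ to $X$ gives
\begin{equation*}
    \lim_{n \to +\infty} \Pr[|X_r| \geq \alpha] = \Pr[|X| \geq \alpha] \leq \delta.
\end{equation*}
Conditioning on the complementary event $\{|X_r| < \alpha\}$, at most $\alpha \kappa$ agents separate $\med(S)$ from $\med(N)$, and \Cref{lemma:sensitivity} then yields $\SC(\med(S)) \leq (1 + \epsilon)\, \SC(\med(N))$ with probability at least $1 - \delta$ in the $n \to +\infty$ limit.

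The main obstacle is bookkeeping the constants so that the sensitivity bound truly collapses to a $(1 + \epsilon)$ factor: one has to pick $\alpha$ as a sufficiently small multiple of $\epsilon$ and then verify that plugging $\alpha = c_0 \epsilon$ into \Cref{theorem:concentration} preserves the $\Theta(1/(\epsilon\delta)^2)$ dependence (both in $\epsilon$ and $\delta$). A minor subtlety is the passage from the asymptotic statement of \Cref{theorem:convergence} to a bound that holds in the $n \to +\infty$ regime for the discrete Vandermonde distribution, but this is handled cleanly by continuity of $\phi$ at the thresholds $\pm \alpha$.
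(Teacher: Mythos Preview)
Your proposal is correct and follows precisely the route the paper intends: the paper's own proof is a one-line appeal to \Cref{lemma:sensitivity}, \Cref{theorem:convergence}, and \Cref{theorem:concentration}, and you have faithfully unpacked how those three pieces combine, including the bookkeeping on $\alpha = \Theta(\epsilon)$ and the passage from the limiting density to the discrete rank via continuity of the CDF.
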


\begin{proof}
The claim follows directly from \Cref{lemma:sensitivity}, \Cref{theorem:convergence,theorem:concentration}.
\end{proof}

\paragraph{Remark.} To supplement our asymptotic characterization, we refer the interested reader to \Cref{appendix:numerical} where we provide numerical bounds on the Vandermonde distribution for different values of $\kappa$ and $\rho$. Our experiments illustrate the rapid convergence of the Vandermonde distribution in terms of \emph{total variation distance}.

\subsubsection{Extension to the Median on Curves}

Here we give a slight extension of the previous guarantee when the agents lie on a \emph{curve}. More precisely, let $\mathcal{C}$ be a curve parameterized by a continuous function $\psi: [0,1] \to \mathbb{R}^d$. We will assume that $\mathcal{C}$ is \emph{simple} and \emph{open}, i.e. $\psi$ is injective in $[0,1]$; see an example in \Cref{fig:open_curve}. We also consider the distance between two points $A, B \in \mathcal{C}$ to be the \emph{length} of the induced (simple) sub-curve from $A$ to $B$, denoted with $\ell(A, B)$. 

Notice that any simple and open curve naturally induces a ranking---a \emph{total order}---over its domain; indeed, for any points $A, B \in \text{Im}(\psi) \equiv \mathcal{C}$, we let $A \preceq B \iff \psi^{-1}(A) \leq \psi^{-1}(B)$. Thus, we may define the median on the curve, which is strategy-proof and optimal w.r.t. the induced social cost. Importantly, our previous analysis is robust, and our methodology for the median on the line is directly applicable.

\begin{theorem}[Sampling Median on Curves]
    \label{theorem:curve-median_expectation}
Consider a set of agents $N = [n]$ that lie on the metric space $(\mathcal{C}, \ell(\cdot, \cdot))$, where $\mathcal{C}$ represents a simple and open curve on some subset of a Euclidean space. Then, for any $\epsilon > 0$, $\algomedian(N, \epsilon, \delta=1)$ takes a sample of size $c = \Theta(1/\epsilon^2)$ and yields in expectation a $1 + \epsilon$ approximation w.r.t. the optimal social cost of the full information $\textsc{Median}$, while $n \to + \infty$.
\end{theorem}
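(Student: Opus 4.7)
The plan is to reduce the curve case to the already-established line case of \Cref{theorem:1-median_expectation} by exploiting the total order $\preceq$ induced on $\mathcal{C}$ by $\psi^{-1}$. The crucial structural observation is that, since $\mathcal{C}$ is simple and open, arclength is additive along this order: for any three points $A \preceq B \preceq C$ on $\mathcal{C}$, we have $\ell(A, C) = \ell(A, B) + \ell(B, C)$. This additivity is the only property of the line distance $|\cdot|$ that was actually used in the proof of \Cref{lemma:sensitivity}, so the entire sensitivity argument should carry over after replacing $|\cdot|$ by $\ell(\cdot, \cdot)$.

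Concretely, I would first re-derive \Cref{lemma:sensitivity} on $\mathcal{C}$. Assume without loss of generality that $x_m \preceq x$, where $x_m = \med(N)$ and $x$ is the sample's median. Partition the agents into the $\lfloor n/2 \rfloor - \epsilon n$ leftmost $L$, the $\lfloor n/2 \rfloor - \epsilon n$ rightmost $R$, and the contaminated inner group of size roughly $2\epsilon n$. By arclength additivity, $\sum_{i \in L \cup R} \ell(x_i, x_m) = \sum_{i \in L \cup R} \ell(x_i, x)$, mirroring the cancellation used on the line. The inner group contributes at most $2\epsilon n \cdot \ell(x_m, x)$, and $\SC(x_m) \geq (\lfloor n/2 \rfloor - \epsilon n)\, \ell(x_m, x)$ because every agent in $R$ is at arclength at least $\ell(x_m, x)$ from $x_m$. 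Combining these bounds gives the same $1 + \mathcal{O}(4\epsilon/(1-2\epsilon))$ factor as on the line.

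Second, the rank of the sample's median relative to the full instance is a purely combinatorial quantity that depends only on the total order $\preceq$, not on the distances. Consequently, the distribution of the normalized rank $X_r$ is again the $(\kappa, \rho)$-Vandermonde distribution, and \Cref{theorem:convergence}, \Cref{theorem:concentration}, and \Cref{corollary:expectation} apply verbatim. The proof then concludes by repeating the integral manipulations of \Cref{theorem:1-median_expectation}: the expected approximation ratio is at most $1 + \mathcal{O}(\E[g(|X|)]) = 1 + \mathcal{O}(1/\sqrt{\rho})$, so selecting $\rho = \Theta(1/\epsilon^2)$ (hence $c = 2\rho + 1 = \Theta(1/\epsilon^2)$) yields the desired $1 + \epsilon$ bound.

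The only genuine obstacle is a geometric one: justifying that the ``sub-curve from $A$ to $B$'' used in the definition of $\ell$ is unambiguous and that arclength is additive along $\preceq$. Both facts rely on injectivity of $\psi$ on $[0,1]$ (so $\mathcal{C}$ does not ``close up'' into a loop that would create a second geodesic path) and on $\psi$ being continuous on a compact interval. Once this sanity check is made, the rest is a mechanical translation of the line proof with $|\cdot|$ replaced by $\ell(\cdot, \cdot)$, and the order-theoretic core of the argument---the convergence to the beta-type density $\phi$---is unaffected.
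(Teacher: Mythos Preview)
Your proposal is correct and follows exactly the approach the paper intends: the paper provides no separate proof for this theorem, merely noting that ``our previous analysis is robust, and our methodology for the median on the line is directly applicable.'' You have spelled out precisely why---arclength additivity along the total order $\preceq$ recovers \Cref{lemma:sensitivity}, and the rank distribution is purely combinatorial---which is the content the paper leaves implicit.
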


\begin{figure}[!ht]
    \centering
    \includegraphics[scale=0.45]{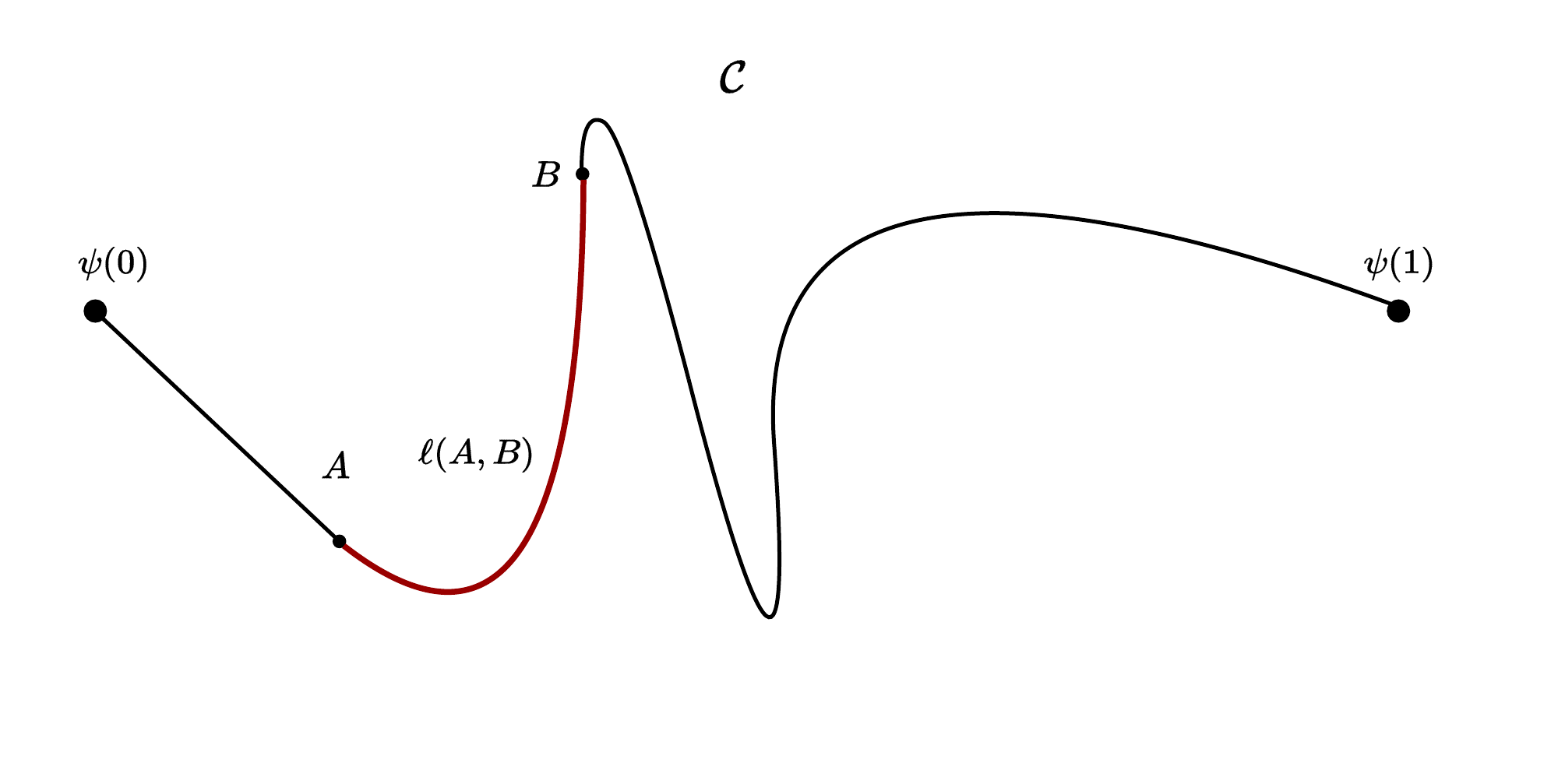}
    \caption{An example of a simple and open curve $\mathcal{C}$. The distance $\ell(A, B)$ between two points $A, B \in \mathcal{C}$ is defined as the length from $A$ to $B$.}
    \label{fig:open_curve}
\end{figure}

\subsection{High-Dimensional Median}

Next, in this subsection we extend our analysis for the high-dimensional metric space $(\mathbb{R}^d, \|\cdot \|_1)$. Specifically, consider some arbitrary instance $\mathcal{I} = (\mathbf{x}_1, \dots, \mathbf{x}_n)$, where $\mathbf{x}_i \in \mathbb{R}^d$; we will denote with $x_i^j$ the $j^{\text{th}}$ coordinate of $\mathbf{x}_i$ in some underlying coordinate system. Also, recall that the median of $\mathcal{I}$ is derived through the median in every coordinate, i.e.,

\begin{equation*}
    \med(\mathcal{I}) = (\med(x_1^1, \dots, x_n^1), \dots, \med(x_1^d, \dots, x_n^d)).
\end{equation*}

Naturally, the $\textsc{Median}$ takes as input some instance $\mathcal{I}$ and outputs the $\med(\mathcal{I})$. We establish the following theorem:

\begin{theorem}[High-Dimensional Sampling Median]
    \label{theorem:generalized-median_expectation}
Consider a set of agents $N = [n]$ that lie on the metric space $(\mathbb{R}^d, \| \cdot \|_1)$. Then, for any $\epsilon > 0$, $\algomedian(N, \epsilon, \delta=1)$ takes a sample of size $c = \Theta(1/\epsilon^2)$ and yields in expectation a $1 + \epsilon$ approximation w.r.t. the optimal social cost of the full information $\textsc{Median}$, while $n \to + \infty$.
\end{theorem}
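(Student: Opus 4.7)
The plan is to reduce the high-dimensional statement to a coordinate-wise application of \Cref{theorem:1-median_expectation}, exploiting the fact that both the generalized median and the $L_1$ social cost decompose additively across the $d$ coordinates. Writing $\mathcal{I}^j = (x_1^j,\dots,x_n^j)$ for the projection of the instance onto coordinate $j$, the definition of the generalized median gives $\med(\mathcal{I}) = \bigl(\med(\mathcal{I}^1),\dots,\med(\mathcal{I}^d)\bigr)$, and for any $\mathbf{y} \in \mathbb{R}^d$ the additivity $\|\mathbf{y}-\mathbf{x}_i\|_1 = \sum_{j=1}^{d} |y^j - x_i^j|$ yields $\SC(\mathbf{y}) = \sum_{j=1}^{d} \SC_j(y^j)$, where $\SC_j$ denotes the one-dimensional social cost on the instance $\mathcal{I}^j$. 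In particular, the optimal social cost is $\SC(\med(\mathcal{I})) = \sum_{j=1}^{d} \SC_j(\med(\mathcal{I}^j))$.

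Next I would observe that $\algomedian$ draws a single sample $S$ of size $c = \Theta(1/\epsilon^2)$ and outputs $\med(S)$, whose $j$-th coordinate is exactly $\med(S^j)$, where $S^j$ is the projection of $S$ onto coordinate $j$. For each fixed $j$, the marginal distribution of $S^j$ is that of a uniform random sample of size $c$ drawn from the $n$ scalars $\mathcal{I}^j$, so \Cref{theorem:1-median_expectation} applies verbatim to give, in the limit $n \to +\infty$,
\begin{equation*}
    \E\bigl[\SC_j(\med(S^j))\bigr] \;\leq\; (1+\epsilon)\,\SC_j(\med(\mathcal{I}^j)).
\end{equation*}
Crucially, although the coordinate projections $S^1,\dots,S^d$ are highly correlated (they come from the same tuple of sampled agents), only their marginal laws enter the one-dimensional bound, so correlation across coordinates is irrelevant here.

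Finally, by linearity of expectation (which does not require independence),
\begin{equation*}
    \E\bigl[\SC(\med(S))\bigr] \;=\; \sum_{j=1}^{d} \E\bigl[\SC_j(\med(S^j))\bigr] \;\leq\; (1+\epsilon) \sum_{j=1}^{d} \SC_j(\med(\mathcal{I}^j)) \;=\; (1+\epsilon)\,\SC(\med(\mathcal{I})),
\end{equation*}
and by \Cref{proposition:optimal-median} the right-hand side equals $(1+\epsilon)$ times the optimal social cost. Note that the sample size $c = \Theta(1/\epsilon^2)$ is independent of $d$ because the bound is additive across coordinates and each coordinate is controlled with the same sample. There is no real obstacle beyond the setup: the only subtlety worth flagging is confirming that the one-dimensional guarantee hinges solely on the marginal rank distribution of the sample median (as developed via \Cref{lemma:sensitivity} and \Cref{theorem:convergence}), which is exactly why the same sample can be reused across all $d$ coordinates without any loss.
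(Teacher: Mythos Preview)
Your proposal is correct and follows essentially the same approach as the paper: decompose the $L_1$ social cost coordinate-wise, apply \Cref{theorem:1-median_expectation} to each coordinate, and combine via linearity of expectation. Your added remarks that only the marginal law of $S^j$ matters and that the sample size is independent of $d$ are helpful clarifications the paper leaves implicit.
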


\begin{proof}
Consider some facility at $\mathbf{x} \in \mathbb{R}^d$, with $\mathbf{x} = (x^1, \dots, x^d)$; the induced social cost of $\mathbf{x}$ can be expressed as 

\begin{equation*}
    \SC(\mathbf{x}) = \sum_{i=1}^n \| \mathbf{x}_i - \mathbf{x} \|_1 = \sum_{i=1}^n \sum_{j=1}^d |x_i^j - x^j| = \sum_{j=1}^d \sum_{i=1}^n |x_i^j - x^j|.
\end{equation*}

Let $\mathbf{X} = (X^1, \dots, X^d)$ be the (random) output of the $\algomedian$. For any $j \in [d]$, and with $c = \Theta(1/\epsilon^2)$, \Cref{theorem:1-median_expectation} implies that 
\begin{equation*}
    \E\left[\sum_{i=1}^n | x_i^j - X^j| \right] \leq (1 + \epsilon) \sum_{i=1}^n |x_i^j - x_m^j| + o_n(1),
\end{equation*}
where $\mathbf{x}_m = (x_m^1, \dots, x_m^d) = \med(\mathcal{I})$. Thus, by linearity of expectation we obtain that for $n \to \infty$, 

\begin{equation*}
    \E[\SC(\mathbf{X})] \leq (1 + \epsilon) \SC(\mathbf{x}_m) = (1 + \epsilon) \SC^*.
\end{equation*}
\end{proof}

Importantly, observe that even in the high-dimensional case it suffices to take $c = \Theta(1/\epsilon^2)$, \emph{independently} from the dimension of the space $d$, in order to obtain a guarantee in expectation. We also remark that it is straightforward to extend \Cref{corollary:1-median_whp} for the high-dimensional median, and recover a near-optimal allocation with high probability.

\subsection{Median on Trees}
\label{sec:treemedian}

In contrast to our previous positive results, our characterization breaks when allocating a facility on a general \emph{network}. In particular, consider an \emph{unweighted} \emph{tree} $G = (V, E)$, and assume that every node is occupied by a \emph{single} agent. One natural way to define the median on $G$ is by arbitrarily choosing a \emph{generalized median} for each path of the tree; however, as articulated by Vohra and Schummer \cite{SCHUMMER2002405}, for any two paths that intersect on an interval, it is crucial that the corresponding generalized medians must not contradict each other, a condition they refer to as \emph{consistency}. Providing a formal definition of consistency would go beyond the scope of our study; instead, we refer the interested reader to the aforementioned work.

Now imagine that the designer has \emph{no} prior information on the topology of the network, and will have to rely solely on the information extracted by the agents. Given that the graph might be vast, we consider a sample of nodes and we then construct the \emph{induced} graph by querying the agents in the sample\footnote{This model is analogous to the standard approach in \emph{property testing} \cite{goldreich_2017}.} (naturally, we posit that every agent knows her neighborhood). However, notice that the induced graph need not be a tree, and hence, it is unclear even how to determine the output of the sampling approximation. In fact, \emph{any} node in the subgraph may lead to a social cost far from the optimal.

\begin{figure}[!ht]
    \centering
    \includegraphics[scale=0.5]{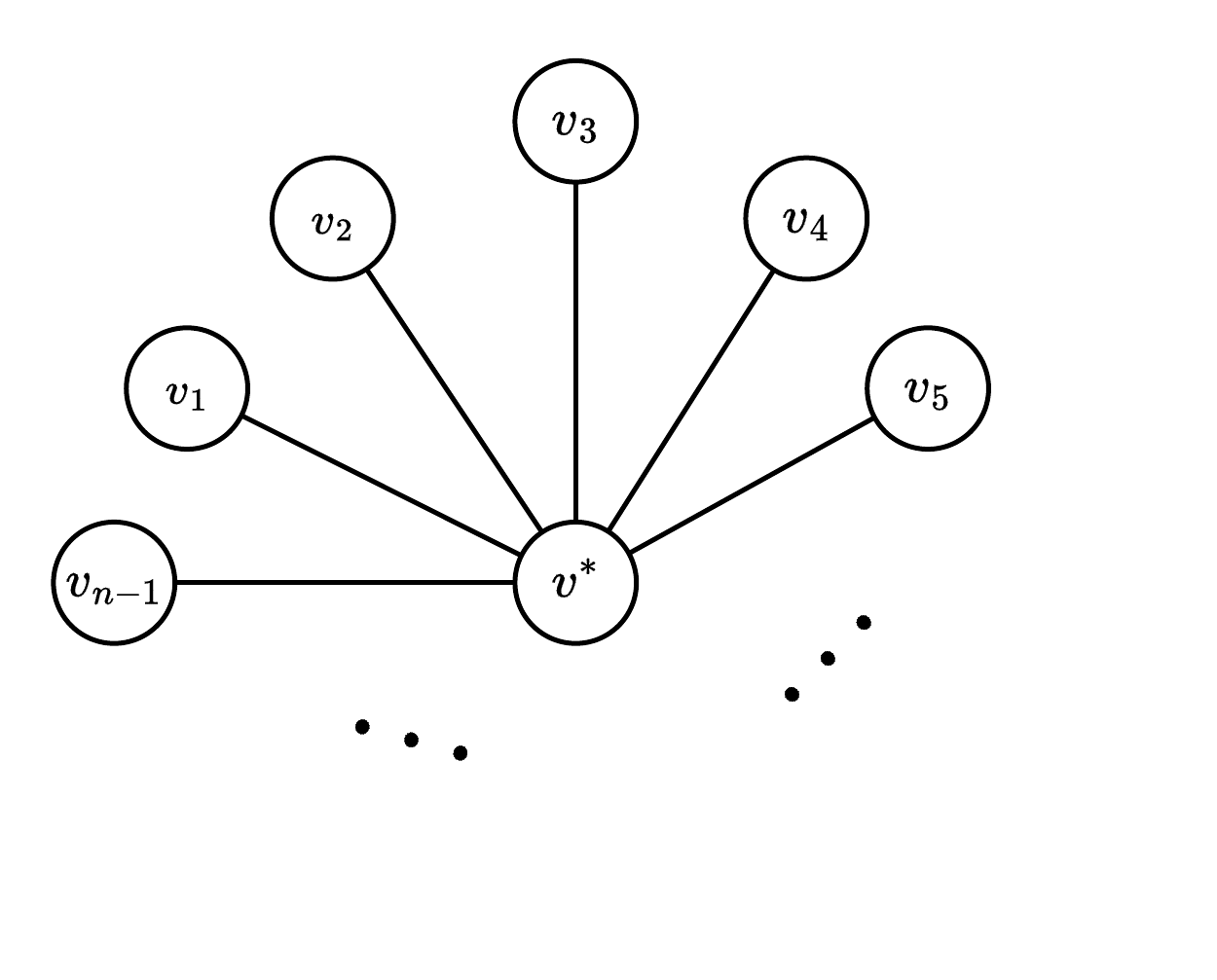}
    \caption{An unweighted ``star'' graph $G$ with $n$ nodes, with every node occupied by a \emph{single} agent. It is easy to verify that $v^*$ constitutes the unique median on $G$, satisfying the \emph{consistency} condition of Vohra and Schummer. Notice that allocating a facility on $v^*$ yields a social cost of $n-1$, while any other allocation leads to a social cost of $2n - 3$.}
    \label{fig:tree}
\end{figure}

\begin{proposition}
    \label{proposition:median-trees}
    Consider an unweighted star graph $G = (V, E)$ with $|V| = n$. Then, even if we take a sample of size $c = n/2$, every node in the sample will yield an approximation of $2 - 1/(n-1)$ w.r.t. the optimal social cost with probability $1/2$.
\end{proposition}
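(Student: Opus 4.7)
The plan is to exhibit the star graph $G$ depicted in the figure and argue directly: (i) the social cost of every non-center vertex is a fixed factor worse than the optimum, and (ii) with probability $1/2$ the center is not sampled, forcing the induced subgraph to contain only leaves. Since the problem is oblivious to the inner workings of any sampling procedure that must output a node of the induced subgraph, this suffices.

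First I would record the two social cost computations. Letting $v^*$ denote the center, the distance from $v^*$ to every other vertex is $1$, hence $\SC(v^*)=n-1$, which is optimal by \Cref{proposition:optimal-median} applied path-by-path (or directly, since moving away from $v^*$ strictly increases the distance to the $n-1$ nodes on the opposite side of the star). For any leaf $v \neq v^*$, the distance is $0$ to $v$ itself, $1$ to the center, and $2$ to each of the remaining $n-2$ leaves, so
\begin{equation*}
    \SC(v) = 0 + 1 + 2(n-2) = 2n-3,
\end{equation*}
which yields the ratio
\begin{equation*}
    \frac{\SC(v)}{\SC(v^*)} = \frac{2n-3}{n-1} = 2 - \frac{1}{n-1}.
\end{equation*}

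Next I would compute the probability that $v^*$ is missed by a uniformly random sample $S$ of size $c=n/2$. By a standard counting argument,
\begin{equation*}
    \Pr[v^* \notin S] = \frac{\binom{n-1}{n/2}}{\binom{n}{n/2}} = \frac{n - n/2}{n} = \frac{1}{2}.
\end{equation*}
Conditioned on this event, every node of $S$ is a leaf, and hence every candidate facility that the sampling mechanism could possibly output on the induced subgraph yields social cost exactly $2n-3$, i.e.\ the claimed approximation of $2-1/(n-1)$ with respect to $\SC(v^*)$.

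There is no real obstacle here: the argument is almost entirely an adversarial example together with an elementary probability calculation. The only subtle point worth mentioning is that, as explained in the preceding paragraph of the paper, the designer has no access to the global topology, so the sampling mechanism must choose its output from among the nodes present in the induced subgraph; this is exactly why the event $\{v^* \notin S\}$ is fatal, and is what makes the bound independent of the particular tie-breaking or selection rule used.
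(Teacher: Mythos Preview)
Your argument is correct and mirrors exactly what the paper intends: the paper records in the figure caption that $\SC(v^*)=n-1$ and $\SC(v)=2n-3$ for every leaf $v$, and then states the proposition without further proof, leaving the probability computation $\Pr[v^*\notin S]=(n-c)/n=1/2$ implicit. Your write-up simply makes these steps explicit, so there is nothing to add.
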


An approximation ratio of $2$ is trivial in the following sense: The $\textsc{RandomDictator}$ mechanism selects uniformly at random a single player and allocates the facility on her preferred position on the underlying metric space. An application of the triangle's inequality shows that $\textsc{RandomDictator}$ yields in expectation a $2$ approximation w.r.t. the optimal social cost.\footnote{On the other hand, it is easy to see that any deterministic dictatorship yields in the worst-case an $n-1$ approximation.} In that sense, augmenting the sample does not seem particularly helpful when the underlying metric space corresponds to a network.

\subsection{Sampling with Multiple Facilities}

Finally, we investigate the performance of a sampling approximation when allocating \emph{multiple} facilities. For the sake of simplicity, we posit a metric space $(\mathbb{R}, |\cdot|)$, and we consider the $\textsc{Percentile}$ mechanism, an allocation rule that assigns facilities on particular \emph{percentiles} of the input. More precisely, the $\textsc{Percentile}$ mechanism is parameterized by a sequence $r_1 < r_2 < \dots$, with every $r_j \in [n]$ corresponding to a rank of the input; if the instance $\mathcal{I} = (x_1, \dots, x_n)$ is given in increasing order, we allocate a facility $j$ on $x_{r_j}$ for every $j$. Naturally, the $\textsc{Median}$ can be classified in this family of mechanisms. Another prominent member is the $\textsc{TwoExtremes}$ mechanism, proposed by Procaccia and Tennenholtz \cite{10.1145/2542174.2542175} for the $2$-facility location problem. As the name suggests, this mechanism allocates two facilities at the minimum and the maximum reports of the instance, leading to an $n-2$ approximation ratio w.r.t. the optimal social cost (in fact, the $\textsc{TwoExtremes}$ is the only \emph{anonymous} and deterministic mechanism with bounded approximation ratio \cite{10.1145/2665005}). We remark that the $\textsc{Percentile}$ mechanism is always strategy-proof, while its approximation ratio w.r.t. the optimal social cost is generally unbounded.

Now consider the $\algopercentile$, simulating the $\textsc{Percentile}$ mechanism on a sample of size $c$; we will tacitly presume that at least $2$ facilities are to be allocated. Let us assume that the leftmost percentile $L$ contains at most $(1-\alpha) n$ agents, for some \emph{constant} $\alpha > 0$, and denote with $l$ the distance between $L$ and the complementary set of agents $R$; see \Cref{fig:percentile}. If we let the inner-distance between in $L$ and $R$ approach to $0$ and $l \to \infty$, we can establish the following:

\begin{proposition}
    \label{proposition:percentile}
    There are instances for which even with a sample of size $c = \alpha n = \Theta(n)$ the $\algopercentile$ has in expectation an unbounded approximation ratio w.r.t. the social cost of the full information $\textsc{Percentile}$ mechanism.
\end{proposition}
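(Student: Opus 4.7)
The plan is to construct an adversarial instance realizing the partition illustrated in \Cref{fig:percentile} and to show that a constant-probability random event causes $\algopercentile$ to misplace one of the facilities by distance $l$. Specifically, I would place $|L| = (1-\alpha)n$ agents in a cluster at positions contained in $[0, \delta]$ and $|R| = \alpha n$ agents in a cluster at positions contained in $[l, l+\delta]$, for a small $\delta > 0$ and large $l > 0$. Any $\textsc{Percentile}$ rule that allocates at least two facilities and anchors its second rank at the $L/R$ boundary---concretely $r_2 = (1-\alpha)n + 1$, the first agent of cluster $R$ in sorted order---places facility $1$ inside $L$ and facility $2$ inside $R$ in the full-information execution, giving benchmark social cost at most $O(\delta n)$, which tends to $0$ as $\delta \downarrow 0$.

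For $\algopercentile$ with sample size $c = \alpha n$, I would pair $r_2$ with the natural sample rank $r_2' = \lceil r_2 c/n \rceil = (1-\alpha)\alpha n + O(1)$. Letting $X$ count the sampled agents that belong to $R$, the random variable $X$ is $\mathrm{Hypergeometric}(n, \alpha n, \alpha n)$ with mean $\alpha^2 n$ and variance $\Theta(n)$. The sampled agent of rank $r_2'$ lies in the $L$-cluster iff $c - X \geq r_2'$, which simplifies to $X \leq \alpha^2 n - O(1)$. I would invoke a CLT approximation for the hypergeometric distribution to conclude that this ``misplacement'' event $E$ satisfies $\Pr[E] \to \Phi(0) = 1/2$, hence $\Pr[E] \geq p^{*}$ for some absolute constant $p^{*} > 0$ and all sufficiently large $n$.

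Conditioned on $E$, both facilities land in the $L$-cluster (the event automatically also places facility $1$ in $L$, since $r_1 < r_2$ and $r_1' < r_2'$), so every agent in $R$ incurs cost at least $l - \delta$. This gives $\SC(\algopercentile) \geq \alpha n (l - \delta)$ on $E$, whence $\E[\SC(\algopercentile)] \geq p^{*} \alpha n (l - \delta)$, while the full-information benchmark is at most $O(\delta n)$. The resulting ratio grows as $\Omega(l/\delta)$ and can be made arbitrarily large by sending $l \to \infty$ and $\delta \downarrow 0$, establishing the claim.

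The main obstacle is arguing that $\Pr[E]$ stays bounded away from $0$ rather than decaying exponentially. This is precisely why $r_2$ is anchored at the $L/R$ boundary: only then does an $O(1)$ perturbation of $X$ below its mean suffice to flip the outcome, so Chernoff-type concentration cannot be invoked to drive the probability to zero, and a standard CLT / anti-concentration estimate for the hypergeometric delivers the constant lower bound.
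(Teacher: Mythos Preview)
Your plan is correct and would establish the proposition, but it takes a more elaborate route than the paper. The paper does not engineer a constant-probability failure event; instead it simply observes that with a sample of size $c=\alpha n$ there is a \emph{positive} probability---explicitly acknowledged to be exponentially small---that the sample misses the cluster $L$ entirely (this is possible precisely because $|R|\ge \alpha n = c$). On that event every facility lands in $R$, so the agents of $L$ each pay roughly $l$, and since $l\to\infty$ while the full-information cost tends to $0$, the expected approximation ratio diverges. No CLT, no careful anchoring of $r_2$ at the boundary, and no assumption about how sample ranks $r_j'$ are derived from the full-information ranks $r_j$ are needed.

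Your approach buys a genuinely stronger quantitative statement: by placing $r_2$ exactly at the $L/R$ interface and invoking the hypergeometric CLT, you obtain a failure event of probability bounded away from $0$ (roughly $1/2$), not merely exponentially small. This would matter if one cared about additive error or about instances with bounded $l$; for the proposition as stated---unbounded \emph{expected} ratio---it is overkill. Conversely, the paper's argument is more robust: it works for any reasonable rank-rescaling rule in $\algopercentile$ (since ``no sample hits $L$'' forces every facility into $R$ regardless of how $r_j'$ is defined), whereas your argument hinges on the specific convention $r_j'=\lceil r_j c/n\rceil$ and on the particular choice $r_2=(1-\alpha)n+1$.
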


\begin{figure}[!ht]
    \centering
    \includegraphics[scale=0.5]{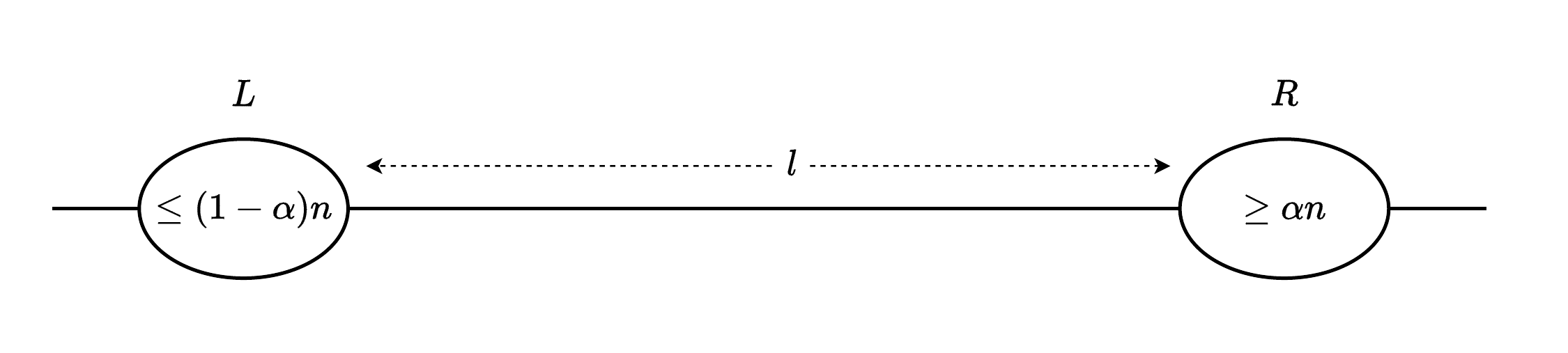}
    \caption{An instance that corresponds to \Cref{proposition:percentile}, as $l \to \infty$; given that at least two facilities are to be allocated, any mechanism with bounded social cost has to allocate facilities on both $L$ and $R$.}
    \label{fig:percentile}
\end{figure}

Indeed, for the instance we described the social cost of the full information mechanism approaches to $0$, as it always allocates a facility on $L$, and at least one facility on $R$. In contrast, there will be a positive probability---albeit exponentially small---that the approximation mechanism will fail to sample an agent from $L$. Moreover, the same limitation (\Cref{proposition:percentile}) applies for an additive approximation, instead of a relative one. Perhaps, it would be interesting to examine the performance of a sampling approximation if we impose additional restrictions on the instance, such as \emph{stability} conditions.

\paragraph{Remark.} One of our insights is that unlike the \emph{outcome} of the underlying mechanism, the social cost often presents a sensitivity property. This observation could be of independent interest in analyzing sampling approximations in voting. We explore this idea for the \emph{plurality rule} in \Cref{appendix:plurality}.

\section{Optimal Communication in Simple Auctions}
\label{section:auctions}

In this section, we study a series of environments in the regime of auction theory. For every instance, we develop a mechanism which asymptotically minimizes the communication complexity, i.e. the numbers of bits elicited from the participants. We commence with the design of a single-item auction, and we gradually extend our techniques to cover more general domains.

\subsection{Single-Item Auction}

This subsection presents our ascending auction for disposing a \emph{single} and \emph{indivisible} item. On a high level, instead of updating the price in a static manner through a \emph{fixed} increment, we propose a simple \emph{adaptive} mechanism. Before we proceed with the description and the analysis of our mechanism, we wish to convey some intuition for our implementation. Specifically, as a thought experiment, imagine that every valuation $v_i$ is drawn from some arbitrary distribution $\mathcal{D}$, and assume---rather unreasonably---that we have access to this distribution. In this context, one possible approach to minimize the communication would be to determine a threshold value $T_h$ such that $\pr_{v \sim \mathcal{D}}[v \geq T_h] \leq \epsilon$, for some small $\epsilon > 0$. The auctioneer could then simply broadcast at the first round of the auction the price $p := T_h$, and with high probability the agents above the threshold would constitute only a small fraction of the initial population. Moreover, the previous step could be applied recursively for the distribution $\mathcal{D}$ conditioned on $v \geq T_h$, until only a few agents remain active. Our proposed mechanism will essentially mirror this though experiment, but without any distributional assumptions, or indeed, any prior knowledge. 

Specifically, we consider some black-box \emph{deterministic} algorithm $\mathcal{A}$ that \emph{faithfully} simulates a second-price auction. Namely, $\mathcal{A}$ takes as an input a set of agents $S$ and returns a tuple $(w, p)$: $w$ is the agent with the highest valuation among $S$ (ties are broken arbitrarily), and $p$ corresponds to the second highest valuation. Of course, $\mathcal{A}$ only \emph{simulates} a second-price auction, without actually allocating items and imposing payments. 

In every round, our mechanism selects a random sample from the active agents,\footnote{Naturally, we assume that the sample contains at least $2$ agents, so that the second-price rule is properly defined.} and simulates through algorithm $\mathcal{A}$ a sub-auction. Then, the ``market-clearing price'' in the sub-auction is announced as the price of the round, and this process is then repeated iteratively. The pseudocode for our mechanism ($\algoascending$) is given in \ref{algo:ascending_auction}.

\begin{algorithm}
\DontPrintSemicolon
\SetAlgoLined
\textbf{Input}: Set of agents $N$, algorithm $\mathcal{A}$ which simulates a second-price auction, parameter $\epsilon > 0$\;
\textbf{Output}: VCG outcome (Winner \& Payment) \;
\While{$|N| > c$ }{
  Let $S$ be a random sample of $c = \Theta(1/\epsilon^2)$ agents from $N$\;
  Set $w :=$ winner in $\mathcal{A}(S)$\; 
  Announce $p:=$ payment in $\mathcal{A}(S)$ \;
  Update the active agents: $N := \{ i \in N \setminus S : v_i > p \} \cup \{ w\} $\;
 }
 \eIf{$|N| = 1$}{
 \textbf{return} $(w, p)$\;
 }{
 \textbf{return} $\mathcal{A}(N)$\;
 }
\caption{$\algoascending(N, \mathcal{A}, \epsilon)$}
\label{algo:ascending_auction}
\end{algorithm}

Interestingly, our mechanism induces a format that couples an ascending auction with the auction simulated by $\mathcal{A}$. We shall establish the following properties:

\begin{proposition}
    \label{proposition:VCG}
    Assuming truthful reporting, $\algoascending$ returns---with probability $1$---the VCG outcome.
\end{proposition}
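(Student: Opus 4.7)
The plan is to establish two deterministic invariants that must hold in every realization of the random samples, and then read off the VCG outcome from each termination branch. Let $w^{*}$ denote the bidder with the maximum valuation and $v^{(2)}$ the second-largest; I would treat the generic case of distinct valuations, since the convention that an agent withdraws as soon as the announced price equals her value---combined with a consistent tie-breaking rule inside $\mathcal{A}$---handles the remaining configurations, which is precisely what the ``probability~$1$'' qualifier absorbs.

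First I would verify termination: each iteration replaces $N$ by a set of size at most $|N|-c+1$, since the $c$ agents in $S$ are removed and only the winner $w$ is re-added; with $c\geq 2$ the size strictly decreases, so the loop exits in finitely many rounds. The two invariants I would then establish are (A) that $w^{*}\in N$ at all times, and (B) that every announced price satisfies $p\leq v^{(2)}$. For (A): if $w^{*}\in S$ then $w=w^{*}$ and $w^{*}$ is re-inserted, while if $w^{*}\notin S$ then every value in $S$ is at most $v^{(2)}$, so the announced $p$ (the second-highest value in $S$) satisfies $p\leq v^{(2)}<v_{w^{*}}$, and $w^{*}$ clears the survival filter $v_{i}>p$. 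For (B): since $S\subseteq N$ and by (A) only $w^{*}$ can have value exceeding $v^{(2)}$ among agents ever present in $N$, the second-highest value of $S$ cannot exceed $v^{(2)}$.

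I would then split the termination into two cases. If the loop exits with $|N|>1$, I would argue that the runner-up $w^{**}$ still lies in $N$: otherwise it was eliminated in some earlier round via a price $p\geq v^{(2)}$, which by (B) forces $p=v^{(2)}$; under distinct valuations this requires both $w^{*}$ and $w^{**}$ to lie in $S$ in that round, but then the updated set collapses to $\{w^{*}\}\cup\{i\in N\setminus S : v_{i}>v^{(2)}\}=\{w^{*}\}$, contradicting $|N|>1$. Hence $\{w^{*},w^{**}\}\subseteq N$, and the faithful simulation $\mathcal{A}(N)$ returns exactly $(w^{*},v^{(2)})$. If instead $|N|=1$, invariant (A) gives $N=\{w^{*}\}$; but reaching $|N|=1$ requires eliminating $w^{**}$ in the final round, and by the same reasoning this forces the announced price to equal $v^{(2)}$, so the returned pair is $(w,p)=(w^{*},v^{(2)})$.

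The main (and only) subtlety I expect is the bookkeeping around ties at the top of the valuation profile, where the strict inequalities in (A) and (B) could degenerate. The withdrawal convention from the preliminaries together with the consistent tie-breaking inside $\mathcal{A}$ should render these configurations harmless, which is why the proposition is stated with probability $1$ rather than deterministically.
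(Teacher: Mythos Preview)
Your proof is correct and follows essentially the same route as the paper: both argue that in every round only agents with valuation at most $v^{(2)}$ can be removed (your invariants (A) and (B)), and then split on whether the loop exits with $|N|=1$ or $|N|>1$. Your version is considerably more detailed---you explicitly verify termination and pin down that the final announced price is exactly $v^{(2)}$ in the $|N|=1$ branch---whereas the paper's proof is a short sketch; one small remark is that the ``probability~$1$'' qualifier refers to the randomness of the sampling (the claim holds for every realization), not to tie configurations, though this does not affect your argument.
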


\begin{proof}
First, notice that in any iteration of the while loop only agents that are below or equal to the second highest valuation will withdraw from the auction. Now consider the case where upon exiting the while loop only a \emph{single} agent $w$ remains in the set of active agents $N$. Then, it follows that the announced price $p$ in the final round---which by construction coincides with the valuation of some player---exceeds the valuation of every player besides $w$. Thus, by definition, the outcome implements the VCG rule. Moreover, if after the last round $2 \leq |N| \leq c$, the claim follows given that $\mathcal{A}$ faithfully simulates a second-price auction.
\end{proof}

\begin{proposition}
    \label{proposition:IC}
    If $\mathcal{A}$ simulates a sealed-bid auction, $\algoascending$ is ex-post incentive compatible.\footnote{In fact, $\algoascending$ (with $\mathcal{A}$ implemented as a sealed-bid auction) is dominant strategy incentive compatible if the sequence of announced prices is non-decreasing; this property can be guaranteed if the ``market-clearing price'' in the sub-auction serves as a \emph{reserved price}. Otherwise, truthful reporting is \emph{not} necessarily a dominant strategy. For example, assume that every agent $i$, besides some agent $j$, commits to the following---rather ludicrous---strategy: $i$ reports truthfully, unless $i$ remains active with at most $2c$ other agents; in that case, $i$ will act as if her valuation is $0$. Then, the best response for $j$ is to act as if her valuation is $\infty$.}
\end{proposition}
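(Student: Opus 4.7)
The plan is to fix an arbitrary agent $i$, her valuation $v_i$, any truthful profile $\mathbf{v}_{-i}$ of the remaining agents, and any realization of $\algoascending$'s internal randomness, and then show that truthful reporting maximizes $i$'s utility in this conditional universe. Since $\mathcal{A}$ simulates a sealed-bid second-price auction, the strategic levers available to $i$ collapse to exactly two: the bid $b_i$ she submits whenever she is drawn into a sample $S$, and the threshold at which she withdraws from the active set (equivalently, the ``effective valuation'' used in the update rule $\{j \in N \setminus S : v_j > p\} \cup \{w\}$). Let $v_{(1)} \ge v_{(2)}$ denote the two largest true valuations, with $i^\star$ and $j_2$ the corresponding agents; by \Cref{proposition:VCG}, truthful play yields utility $v_i - v_{(2)}$ if $v_i = v_{(1)}$ and $0$ otherwise, so the analysis splits into these two cases.

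For $v_i < v_{(1)}$, any profitable deviation must make $i$ win, so the first step is to show that $i^\star$ survives every round in which she is not co-sampled with $i$: under the truthful bids of the other agents the second-highest bid in $S$ is at most $v_{(1)}$, and since $v_{i^\star} = v_{(1)}$, the update rule keeps $i^\star$ alive. Hence $i$ can only win by being sampled together with $i^\star$ at some round and outbidding her, which forces $b_i > v_{(1)}$; the sub-auction then awards $i$ the round with second-price exactly $v_{(1)}$, every other active agent is pruned (all having true value $\le v_{(1)}$), and the mechanism terminates at $(i, v_{(1)})$, giving $i$ utility $v_i - v_{(1)} < 0$. Truthful reporting (utility $0$) is therefore a best response.

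For $v_i = v_{(1)}$, withdrawing at any point yields utility $0 \le v_i - v_{(2)}$, so the non-trivial deviations are those that keep $i$ winning, and the task reduces to ruling out a payment strictly below $v_{(2)}$. The invariant to track is the fate of $j_2$. Either $j_2$ is eventually dropped by the update rule, which requires some announced price $p \ge v_{(2)}$ to have been broadcast; or $j_2$ is eliminated by losing a sub-auction, in which case the winning bid exceeds $v_{(2)}$ and -- since every other sampled agent bids truthfully -- the second-highest bid in that sample, and thus the announced price, is at least $v_{(2)}$; or $j_2$ survives into the terminating call $\mathcal{A}(N)$, where the sealed-bid second-price rule directly delivers payment at least $v_{(2)}$. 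In every branch $i$'s final payment is bounded below by $v_{(2)}$, so no deviation beats truthful reporting.

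The main obstacle is the last case: the footnote to the proposition emphasises that the sequence of announced prices need not be monotone, which is precisely the reason we settle for ex-post rather than dominant-strategy incentive compatibility. I therefore need to check carefully that no choice of sub-auction bid by $i$ can funnel the mechanism into a residual pool $N$ whose second-highest truthful valuation lies strictly below $v_{(2)}$. The clean observation that dispatches this is the contrapositive of the invariant above: any sample whose second-highest bid falls below $v_{(2)}$ necessarily keeps $j_2$ in the updated active set, so $j_2$ can exit only after $i$'s own deviation triggers a round whose announced price is at least $v_{(2)}$ -- which is exactly the lower bound we need.
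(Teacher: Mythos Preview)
Your proof is correct and reaches the same conclusion as the paper, but the two arguments are organized very differently. The paper proceeds \emph{locally}, round by round: for an agent $i$ in the sample it splits on whether $v_i$ is the maximum in $S$ (in which case the second-price rule makes the announced price independent of $i$'s report) or not (in which case overbidding to survive leads to a final payment at least $v_i$, since under truthful play by the others the announced price is non-decreasing along any path in which $i$ remains active); for $i \notin S$ the stay/withdraw decision is argued directly from the broadcast price. Your proof is \emph{global}: you fix the realization of all randomness, split on whether $v_i = v_{(1)}$, and then track the trajectory of a distinguished agent ($i^\star$ in the first case, $j_2$ in the second) through the entire run to pin down the eventual payment.

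What each approach buys: the paper's local argument is shorter and makes transparent that the ex-post property is inherited from the second-price rule in each sub-auction; however, it leans on the monotonicity of announced prices, which---as the footnote itself warns---is a statement that requires care (it holds along any surviving trajectory when others are truthful, but not unconditionally). Your global argument sidesteps this subtlety entirely by reducing everything to the fate of $j_2$: the contrapositive you isolate in the last paragraph---that any round with announced price below $v_{(2)}$ necessarily retains $j_2$---is exactly the invariant that replaces price monotonicity, and it makes the lower bound on the final payment immediate once one observes that the round eliminating $j_2$ leaves only $i$ active. The one place your write-up could be tightened is the transition from ``some announced price is at least $v_{(2)}$'' to ``the final payment is at least $v_{(2)}$'' in sub-cases (a) and (b); this follows because after such a round every truthful agent withdraws and the mechanism terminates with $N = \{i\}$, but you leave that step implicit.
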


\begin{proof}
Consider any round of the auction and some agent $i$ that has been selected in the sample $S$; we identify two cases. First, if $v_i \geq v_j, \forall j \in S \setminus \{i\}$, sincere reporting clearly constitutes a best response for $i$. Indeed, notice that since $\mathcal{A}$ simulates a second-price auction, the winner in the sub-auction does not have any control over the announced price of the round. In the contrary case, agent $i$ does not have an incentive to misreport and remain active in the auction given that the final payment will always be greater or equal to her valuation---observe that the announced price always increases throughout the auction. Next, let $p$ the payment in $\mathcal{A}(S)$ and $i \notin S$. It follows that if $v_i \leq p$ then a best response for $i$ is to withdraw from the auction, while if $v_i > p$ then $i$'s best response is to remain active in the forthcoming round.
\end{proof}

\begin{proposition}
    \label{proposition:OSP}
    If $\mathcal{A}$ simulates an English auction, $\algoascending$ is obviously strategy-proof (OSP).
\end{proposition}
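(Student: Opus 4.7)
The plan is to reduce OSP for $\algoascending$ (with $\mathcal{A}$ implemented as an English auction) to the classical fact that an English auction is obviously strategy-proof, by showing that from each individual bidder's vantage point the mechanism induces a sequence of binary ``stay-or-drop'' decisions along a monotonically non-decreasing price path. Agent $i$'s decision nodes come in two flavors: (a) if $i \in S$ in some round, $i$ participates in the English sub-auction simulated by $\mathcal{A}$ and chooses when to drop out; (b) if $i \notin S$, after the announced price $p$ is broadcast, $i$ decides whether to remain active in the next round or withdraw. Dropping is irrevocable in either case. To establish monotonicity of the prices $i$ confronts across consecutive decision nodes, I would argue that within a single sub-auction the current price only rises by construction of the English format, while across rounds every member of the updated $N$ has valuation at least the last $p$---either because $v > p$, or because it is the previous winner $w$, whose value is $\geq p$---so the second-highest valuation in any future sample, and hence the next announced price, is also $\geq p$.

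Granting this reduction, I would argue OSP by the standard case split. Fix the truthful strategy $s_i$, which drops out iff the current price is $\geq v_i$, and fix any deviation $s_i'$; let $I$ be the earliest information set on which they differ and let $p^\star$ be the current price at $I$. In Case~1, $s_i(I)=\text{stay}$ and $s_i'(I)=\text{drop}$, so $p^\star < v_i$: the unique outcome under $s_i'$ gives utility $0$, while the worst outcome under $s_i$ is losing (utility $0$) and any winning realization is at a price $\leq v_i$ (non-negative utility). In Case~2, $s_i(I)=\text{drop}$ and $s_i'(I)=\text{stay}$, so $p^\star \geq v_i$: the outcome under $s_i$ is $0$, while by the monotonicity of future prices, under $s_i'$ any winning realization pays at least $p^\star \geq v_i$ (non-positive utility) and any losing realization yields $0$, so the best outcome under $s_i'$ is $0$. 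In both cases the best possible payoff from $s_i'$ is $\leq 0 \leq$ the worst possible payoff from $s_i$, which is exactly the definition of obvious dominance.

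The step I expect to require the most care is making the cross-round monotonicity rigorous---in particular, carefully handling the terminal branch where $|N| \leq c$ and the mechanism invokes $\mathcal{A}(N)$ directly, for which one must observe that the final English sub-auction effectively begins at a price at least equal to the last announced $p$, so that the classical English-auction OSP argument of Li transfers verbatim. A secondary point worth noting explicitly is that the randomness both in the sample $S$ and in the tie-breaking inside $\mathcal{A}$ is already absorbed into the ``best'' and ``worst'' quantifiers in the OSP definition, so the structural monotonicity property is the only substantive ingredient beyond the textbook argument.
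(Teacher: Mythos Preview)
Your proposal is essentially the same approach as the paper's: both reduce to the OSP of a standard English auction by arguing that each bidder faces a monotone sequence of stay/drop decisions, and then run the usual case split on whether the current price is below or above $v_i$. Your version is more explicit---distinguishing the two types of decision nodes and flagging cross-round monotonicity (especially at the terminal $\mathcal{A}(N)$ call) as the delicate step---but the skeleton matches the paper's proof, which likewise hinges on the assertion that ``the announced price can only increase throughout the auction.''
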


\begin{proof}
Notice that the induced mechanism performs a standard English auction, but instead of interacting with \emph{every} agent in a given round, we ``ascend'' in a small sample; only when a single agent remains active we broadcast the current price to the rest of the agents. Now consider some agent $i$ that participates in the sub-auction. If the current price is below her valuation $v_i$, then the best possible outcome from quitting is no better than the worst possible outcome from staying in the auction. Otherwise, if the price is above $v_i$, then the best possible outcome from staying in the auction is no better than the worst possible outcome from withdrawing in the current round. Indeed, notice that the announced price can only increase throughout the auction. Of course, the same line of reasoning applies for a round of interaction with the entire set of active agents; essentially, the claim follows from the OSP property of the English auction.
\end{proof}

Next, we analyze the communication complexity of our proposed auction. Naturally, we have to assume that the valuation space is discretized, with $k$ bits being sufficient to express any valuation. The first thing to note is a trivial lower bound on the communication. 

\begin{fact}[Communication Lower Bound]
    \label{fact:lower_bound}
Every mechanism that determines the agent with the highest valuation---with probability $1$---must elicit at least $n$ bits.
\end{fact}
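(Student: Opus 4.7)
The plan is to exhibit a single valuation profile on which any correct mechanism must emit at least $n$ bits, from which the bound on the expected communication follows immediately.

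I would first fix any realization of the mechanism's internal randomness, collapsing it into a deterministic protocol whose execution is captured by a tree whose internal nodes are labeled with the agent transmitting the next bit and whose leaves carry a declared winner. By the standard rectangle lemma of communication protocols, the set of valuation profiles routed to a leaf $\ell$ factors as $R_\ell = R_\ell^1 \times R_\ell^2 \times \cdots \times R_\ell^n$, where $R_\ell^i \subseteq \{0,1,\dots,2^k-1\}$ collects the values of $v_i$ consistent with the bits agent $i$ transmitted along the root-to-$\ell$ path, and correctness forces the same agent $w_\ell$ to be the declared winner for every profile in $R_\ell$.

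I would then examine the profile $v^\star = (1,1,\dots,1)$; let $\ell^\star$ be the leaf it routes to and $w$ the declared winner, fixed by the tie-breaking rule. For every $j \neq w$, the set $R_{\ell^\star}^j$ cannot contain any value $\ge 2$: replacing $v^\star_j$ with such a value keeps the profile in $R_{\ell^\star}$ (since $v^\star_{-j}$ is unchanged) yet makes $j$ the unique maximum, contradicting $w_{\ell^\star}=w$. Hence $R_{\ell^\star}^j \subsetneq \{0,\dots,2^k-1\}$, so agent $j$ must transmit at least one bit along the path. A symmetric argument applied to $w$---replacing $v^\star_w$ by $0$ would keep the profile in $R_{\ell^\star}$ (using that $1 \in R_{\ell^\star}^i$ for every $i\neq w$), yet flip the winner to some $j \neq w$---rules out $0 \in R_{\ell^\star}^w$ and forces agent $w$ to also transmit at least one bit. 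Summing over the $n$ agents yields at least $n$ bits along the path to $\ell^\star$, for every realization of the internal randomness; taking the expectation over the coin flips of the mechanism therefore gives $\ge n$ expected bits on input $v^\star$.

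The only mild delicacy is ensuring that the valuation space contains both a strictly larger and a strictly smaller value than the chosen ``tie'' value $1$, which amounts to the trivially satisfied requirement $k \ge 2$; beyond this bookkeeping issue the argument is a direct application of the rectangle lemma and presents no substantive obstacle.
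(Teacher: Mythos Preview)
The paper does not supply a proof of this fact; it is simply labeled a ``trivial lower bound'' and stated without argument. Your proof via the rectangle lemma is correct and is the standard way to make such a bound rigorous: after fixing the randomness, the product structure of the leaf reached by the all-ones profile forces every agent's coordinate set to be a proper subset of the valuation space (under the harmless assumption $k\ge 2$), so each agent must transmit at least one bit, and taking expectations over the internal coins preserves the bound.
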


Importantly, we will show that this lower bound can be asymptotically recovered. In particular, observe that as the size of sample increases, the fraction of agents that will choose to remain active---at least in the forthcoming round---gradually diminishes; the following lemma makes this property precise.

\begin{lemma}[Inclusion Rate]
    \label{lemma:inclusion}
    Let $X_a$ be a random variable representing the proportion of agents that will remain active in a given round of the $\algoascending$ with sample size $c$; then,
    \begin{equation*}
        \E[X_a] \lesssim \frac{2}{c + 1}.
    \end{equation*}
\end{lemma}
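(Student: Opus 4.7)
The plan is to reduce the quantity $|N_{\mathrm{new}}|/n$ to the second-order statistic of a uniform sample without replacement, for which a closed-form expectation is available. Fix a round in which $|N| = n > c$, and index the $n$ active agents in decreasing order of valuation, so that rank $j$ is the agent with the $j$-th largest valuation; let $i_1 < i_2 < \dots < i_c$ denote the (random) ranks of the $c$ agents drawn in the sample $S$. Since $\mathcal{A}$ faithfully simulates a second-price auction on $S$, the winner $w$ is the agent at rank $i_1$ and the announced price $p$ equals the valuation at rank $i_2$. Under the convention that agents with valuation equal to the announced price withdraw, the agents surviving into the next round are exactly $w$ together with the non-sample agents of rank strictly below $i_2$; only $w$ among $S$ has rank $< i_2$, so
\begin{equation*}
|N_{\mathrm{new}}| \;=\; |\{ i \in N\setminus S : v_i > p\}| + 1 \;=\; (i_2 - 2) + 1 \;=\; i_2 - 1.
\end{equation*}

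Next, I would use the distribution of $i_2$. Because $S$ is a uniformly random size-$c$ subset of $[n]$, the rank vector $(i_1,\dots,i_c)$ is distributed as the order statistics of a uniform sample of size $c$ drawn from $\{1,2,\dots,n\}$ without replacement. A classical identity---which one can either derive by direct enumeration or by the symmetry argument that the $c$ sampled points partition $\{0,1,\dots,n+1\}$ into $c+1$ gaps of common expected size $(n+1)/(c+1)$---states that the $k$-th smallest order statistic has expectation $k(n+1)/(c+1)$. In particular,
\begin{equation*}
\E[i_2] \;=\; \frac{2(n+1)}{c+1}.
\end{equation*}

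Combining the two ingredients,
\begin{equation*}
\E[X_a] \;=\; \frac{\E[i_2]-1}{n} \;=\; \frac{2(n+1)}{n(c+1)} - \frac{1}{n},
\end{equation*}
which converges to $2/(c+1)$ as $n\to\infty$, establishing $\E[X_a] \lesssim 2/(c+1)$. The only delicate point is the bookkeeping in the first step: one must verify that the winner $w$ is counted exactly once (it lies in $S$ and, separately, it is the unique sample element of rank below $i_2$, hence it contributes $+1$ rather than $+0$ or $+2$). Once this accounting is correct, the remainder is a direct substitution into a standard order-statistic identity.
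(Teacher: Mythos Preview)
Your argument is correct and follows the same reduction as the paper: both identify that the number of surviving agents is governed by the rank of the second-highest valuation in the sample, and then compute the expectation of that rank. Where you diverge is in the execution of this computation. The paper writes down the explicit probability mass function $\Pr[X_r=i]=\binom{n-i}{1}\binom{i-1}{c-2}/\binom{n}{c}$ and evaluates $\sum_i i\,\Pr[X_r=i]$ asymptotically via $\binom{n}{c}\sim n^c/c!$ and $\sum_{i\le n} i^p \sim n^{p+1}/(p+1)$, obtaining $\E[X_r]\sim n\,\tfrac{c-1}{c+1}$. You instead invoke the classical gap-symmetry identity for uniform order statistics without replacement, which yields the \emph{exact} formula $\E[i_2]=2(n+1)/(c+1)$ directly and makes the limit transparent. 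Your route is shorter and avoids all asymptotic manipulation; the paper's route, while heavier, has the minor advantage of being self-contained and of setting up the second-moment calculation used later for $\V[X_a]$ in the proof of Theorem~\ref{theorem:communication-single_item}. One small omission on your side: you implicitly assume pairwise-distinct valuations so that ``$v_i>p$'' coincides with ``rank $<i_2$''; with ties some agents of rank below $i_2$ may share the value $p$ and withdraw, but this only strengthens the inequality $|N_{\mathrm{new}}|\le i_2-1$, so the bound survives (the paper makes the same observation at the end of its proof).
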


As a result, the size of the sample $c$ allows us to calibrate the number of agents that we wish to include in the following round --- the inclusion rate. Before we proceed with the proof of \Cref{lemma:inclusion}, we first state some standard asymptotic formulas.

\begin{fact}
\label{fact:bc-asympt}
\begin{equation*}
    \binom{n}{c} \sim \frac{n^c}{c!}.
\end{equation*}
\end{fact}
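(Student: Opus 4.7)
The plan is to expand the binomial coefficient as the falling factorial $n^{\underline{c}}/c!$ and show that the ratio $\binom{n}{c}/(n^c/c!)$ tends to $1$ as $n\to\infty$ with $c$ held fixed. This is a textbook computation, and I would keep the argument to a few lines.

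First I would write
\begin{equation*}
    \binom{n}{c} = \frac{n!}{c!\,(n-c)!} = \frac{1}{c!}\prod_{j=0}^{c-1}(n-j),
\end{equation*}
which expresses the numerator as a polynomial of degree exactly $c$ in $n$. Forming the ratio with $n^c/c!$ and cancelling the factorial gives
\begin{equation*}
    \frac{\binom{n}{c}}{n^{c}/c!} = \frac{1}{n^{c}}\prod_{j=0}^{c-1}(n-j) = \prod_{j=0}^{c-1}\left(1-\frac{j}{n}\right).
\end{equation*}

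Since $c$ is treated as a fixed constant (independent of $n$), the right-hand side is a product of a \emph{fixed} number of factors, and each factor $1-j/n$ converges to $1$ as $n\to\infty$. By the standard fact that a product of finitely many convergent sequences converges to the product of their limits, the whole expression tends to $\prod_{j=0}^{c-1}1 = 1$. By the definition of the asymptotic equivalence $\sim$ recalled in the preliminaries, this is exactly the claim $\binom{n}{c}\sim n^{c}/c!$.

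The hard part is essentially nonexistent; the only point worth flagging is that the argument uses crucially that $c$ does not grow with $n$, so that the product has a bounded number of terms and the pointwise limits can be combined. In the paper this hypothesis is automatically satisfied, since $c$ denotes a sample size that is chosen independently of the population $n$ (indeed, $c=\Theta(1/\epsilon^2)$ in the mechanisms where Fact~\ref{fact:bc-asympt} will be applied). If instead $c$ were allowed to scale with $n$, one would need Stirling's formula to control the error, but no such refinement is needed here.
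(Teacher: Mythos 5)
Your argument is correct and is the standard one: the paper states this as a Fact without proof, and the falling-factorial expansion $\binom{n}{c}=\frac{1}{c!}\prod_{j=0}^{c-1}(n-j)$ followed by a finite-product limit is exactly the computation one would supply. You also rightly flag the only hypothesis that matters, namely that $c$ is held fixed as $n\to\infty$, which is indeed how the fact is used in the proof of \Cref{lemma:inclusion}.
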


\begin{fact}
\label{fact:ps-asympt}
\begin{equation*}
    \sum_{i=1}^n i^p \sim \frac{n^{p+1}}{p+1} \sim \sum_{i=1}^{n} (i-1)^{p}.
\end{equation*}
\end{fact}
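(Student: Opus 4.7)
The plan is to prove both asymptotic equivalences by a standard Riemann--integral sandwich, which is appropriate since the exponent $p$ appearing in the applications (e.g., inside the binomial-coefficient expansion used in \Cref{lemma:inclusion}) is a non-negative integer. Since $x \mapsto x^p$ is non-decreasing on $[0, \infty)$ for $p \geq 0$, I would compare each summand $i^p$ to the integrals of $x^p$ over the two adjacent unit intervals $[i-1,i]$ and $[i,i+1]$.

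Concretely, monotonicity yields $\int_{i-1}^{i} x^p\, dx \leq i^p \leq \int_{i}^{i+1} x^p\, dx$ for every $i \geq 1$. Summing from $i=1$ to $n$ and using $\int_a^b x^p\, dx = (b^{p+1} - a^{p+1})/(p+1)$ produces the two-sided bound $\frac{n^{p+1}}{p+1} \leq \sum_{i=1}^n i^p \leq \frac{(n+1)^{p+1} - 1}{p+1}$. Dividing through by $n^{p+1}/(p+1)$, the left-hand ratio equals $1$, while the right-hand ratio equals $((n+1)/n)^{p+1} - 1/n^{p+1}$, which tends to $1$ as $n \to \infty$. The squeeze principle then gives $\sum_{i=1}^n i^p \sim n^{p+1}/(p+1)$.

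For the second equivalence I would shift the index: $\sum_{i=1}^n (i-1)^p = \sum_{j=0}^{n-1} j^p$, which for $p > 0$ equals $\sum_{j=1}^n j^p - n^p$ (the $j=0$ summand vanishes). The first part gives $\sum_{j=1}^n j^p \sim n^{p+1}/(p+1)$, and since the correction term $n^p$ is $o(n^{p+1})$ it is absorbed into the asymptotic, yielding the claimed equivalence; the edge case $p = 0$ is immediate since both sides differ by at most $1$ from $n$. The argument is entirely elementary and presents no genuine technical hurdle; the only point worth verifying carefully is that the correction arising from the index shift is indeed negligible against the leading $n^{p+1}$, which holds precisely because $p \geq 0$.
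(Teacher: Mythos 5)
The paper states this as a ``standard asymptotic formula'' and does not supply a proof, so there is no in-text argument to compare against. Your proof is correct and complete: the integral sandwich $\int_{i-1}^{i} x^p\,dx \leq i^p \leq \int_{i}^{i+1} x^p\,dx$ telescopes to $\frac{n^{p+1}}{p+1} \leq \sum_{i=1}^n i^p \leq \frac{(n+1)^{p+1}-1}{p+1}$, and the squeeze goes through; the index shift then reduces the second equivalence to discarding a term $n^p = o(n^{p+1})$, and the $p=0$ edge case is handled separately. This is the natural elementary route, and it is tighter than what the fact actually needs in the paper (it is only invoked inside a $\sim$ chain in Lemma~\ref{lemma:inclusion}, where a first-order asymptotic suffices). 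One minor cosmetic note: both occurrences in the paper have $p = c-2 \geq 0$ with $c \geq 2$, so your restriction to non-negative integer $p$ matches the use case exactly and costs nothing in generality.
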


\begin{proof}[Proof of \Cref{lemma:inclusion}]
Consider some round of the $\algoascending$ with $n$ active agents (here we slightly abuse notation given that $n$ corresponds to the initial number of agents). Let us denote with $X_r$ the rank---in the domain $[n]$---of the player with the second highest valuation (recall that ties are broken arbitrarily according to some \emph{fixed} order) in the sample. We will show that 
\begin{equation*}
    \E[X_r] \sim n \frac{c-1}{c+1}.
\end{equation*}
Indeed, simple combinatorial arguments yield that the probability mass function of $X_r$ can be expressed as 
\begin{equation*}
     \Pr[X_r = i] = \ddfrac{ \binom{n-i}{1} \binom{i-1}{c-2}}{\binom{n}{c}}.
\end{equation*}
As a result, it follows that 
\begin{align*}
    \E[X_r] = \sum_{i=1}^n i \Pr[X_r = i] &\sim \frac{c!}{n^c} \sum_{i=1}^n i (n-i) \binom{i-1}{c-2} \\
    &= \frac{c!}{n^c} \left( n \sum_{i=1}^n i \binom{i-1}{c-2} - \sum_{i=1}^n i^2 \binom{i-1}{c-2} \right) \\
    &\sim \frac{c!}{n^c} \left( n \sum_{i=1}^n i \binom{i}{c-2} - \sum_{i=1}^n i^2 \binom{i}{c-2} \right) \\
    &\sim \frac{c!}{n^c} \left( n \sum_{i=1}^n \frac{i^{c-1}}{(c-2)!} - \sum_{i=1}^n \frac{i^{c}}{(c-2)!} \right) \\
    &\sim \frac{c (c-1)}{n^c} \left( \frac{n^{c+1}}{c} - \frac{n^{c+1}}{c+1} \right) \\
    &= n \frac{c-1}{c+1},
\end{align*}
where we applied the asymptotic bounds from \Cref{fact:bc-asympt} and \Cref{fact:ps-asympt}; also note that we ignored lower order terms in the third and fourth line. Finally, the proof follows given that $X_a \leq (n - X_r)/n$; the inequality here derives from the fact that multiple agents could have the same valuation with the agent with rank $X_r$, and we assumed that such agents will quit. 
\end{proof}

Next, we are ready to analyze the communication complexity of our mechanism. In the following theorem, we implicitly assume that the agents report sincerely.

\begin{theorem}
    \label{theorem:communication-single_item}
    Let $Q$ be the (worst-case) communication complexity of some deterministic algorithm $\mathcal{A}$ that faithfully simulates a second-price auction, and $N = [n]$ a set of agents. Moreover, for any $\epsilon > 0$ and $c = \Theta(1/\epsilon^2)$, denote by $t(n; c, k)$ the expected communication complexity of $\algoascending(N, \mathcal{A}, \epsilon)$. Then, 
    \begin{equation}
        \label{eq:comm_Q}
        t(n; c, k) \lesssim (1 + \epsilon) n + Q(c; k) \log n.
    \end{equation}
    
\end{theorem}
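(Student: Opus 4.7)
The plan is to decompose the communication per round into two components and then aggregate them across the entire execution of $\algoascending$. In any given round with $n_t$ active agents (where $n_0 := n$), the communication cost is at most $Q(c; k) + (n_t - c)$: the first term accounts for the worst-case cost of simulating the second-price auction on the sample $S$ via $\mathcal{A}$, while the second term comes from a single bit elicited from each non-sampled active agent to signal whether she stays in or drops out of the auction (the $c$ sampled agents already disclosed their types inside $\mathcal{A}$, and the winner $w$ carries over deterministically). Thus, if $T$ denotes the (random) total number of rounds, the expected communication satisfies
\begin{equation*}
    t(n; c, k) \;\leq\; \E\!\left[\sum_{t=0}^{T-1} \bigl(Q(c; k) + n_t\bigr)\right] \;=\; Q(c; k)\,\E[T] \;+\; \sum_{t \geq 0} \E[n_t \mathbbm{1}\{t < T\}].
\end{equation*}

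Next, I would invoke \Cref{lemma:inclusion} in an iterated fashion: conditioned on $n_t$, the lemma yields $\E[n_{t+1} \mid n_t] \lesssim \frac{2}{c+1}\, n_t$ as $n_t \to \infty$, so by the tower property $\E[n_t] \lesssim n \left(\frac{2}{c+1}\right)^t$. Summing the resulting geometric series gives
\begin{equation*}
    \sum_{t \geq 0} \E[n_t] \;\lesssim\; \frac{n}{1 - \tfrac{2}{c+1}} \;=\; n \cdot \frac{c+1}{c-1} \;=\; n\left(1 + \frac{2}{c-1}\right).
\end{equation*}
For $c = \Theta(1/\epsilon^2)$, the correction $\frac{2}{c-1} = O(\epsilon^2)$ is eventually at most $\epsilon$, so the agent-response contribution is asymptotically bounded by $(1 + \epsilon)\, n$.

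Finally, I would bound $\E[T]$. Since the expected multiplicative shrinkage per round is a constant strictly less than $1$ (equal to $\frac{2}{c+1}$), the while loop terminates in $O(\log n)$ rounds in expectation; more precisely, $\E[T] \leq \frac{\log n}{\log((c+1)/2)} + O(1)$, which is $\lesssim \log n$. The final call to $\mathcal{A}$ after exiting the loop contributes at most one extra $Q(c; k)$ term, absorbed in $Q(c; k) \log n$. Combining the two pieces yields $t(n; c, k) \lesssim (1 + \epsilon)\, n + Q(c; k) \log n$, as claimed.

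The main obstacle, and the only place where care is needed, is that \Cref{lemma:inclusion} is an \emph{asymptotic} statement in $n_t$, whereas in the tail of the execution $n_t$ is small and the multiplicative bound may not apply uniformly. I would handle this by noting that once $n_t = O(c)$ the loop terminates within a further $O(1)$ rounds (since at least one agent drops out per round in a non-trivial instance, and in any case the loop exits when $n_t \leq c$); these final rounds contribute an additive $O(c) = O(1)$ to the agent-response sum and $O(1)$ additional calls to $\mathcal{A}$, both of which are absorbed into the stated bound using the $\lesssim$ notation.
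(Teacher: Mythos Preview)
Your approach is correct and takes a genuinely different route from the paper's. The paper does not iterate the first-moment bound and sum a geometric series; instead it first computes the \emph{variance} of the inclusion rate $X_a$ (showing $\V[X_a]\sim\frac{2(c-1)}{(c+2)(c+1)^2}$), applies Chebyshev to obtain the high-probability shrinkage $\Pr[X_a > 4/\sqrt{c}] \leq 1/c$, and then sets up the recursion
\[
t(n;c,k) \;\lesssim\; \Bigl(1-\tfrac{1}{c}\Bigr)\, t\!\Bigl(\tfrac{4n}{\sqrt{c}};c,k\Bigr) \;+\; \tfrac{1}{c}\, t(n;c,k) \;+\; n \;+\; Q(c;k),
\]
which after rearrangement unrolls with per-step shrinkage $4/\sqrt{c}$ rather than your $2/(c+1)$. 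Your first-moment argument is more elementary---no second-moment computation is needed---and in fact tighter on the linear term: your geometric sum gives $n\cdot\frac{c+1}{c-1}=(1+O(1/c))\,n$, whereas the paper's recursion yields $(1+O(1/\sqrt{c}))\,n$. This is precisely why the paper requires $c=\Theta(1/\epsilon^2)$, while your route would already deliver the $(1+\epsilon)n$ bound with $c=\Theta(1/\epsilon)$; the authors themselves acknowledge in a footnote that their dependence on $c$ is not tight. What the paper's concentration step buys is a more direct handle on the recursion depth: a high-probability per-round shrinkage immediately controls the number of rounds, whereas your bound on $\E[T]$ implicitly relies on a Markov step (from $\E[n_t]\le n\alpha^t$ to $\Pr[T>t]=\Pr[n_t>c]\le n\alpha^t/c$, then summing), which you assert but do not spell out. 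Both proofs share the same soft spot you correctly flag---\Cref{lemma:inclusion} is asymptotic in the current population size, so iterating it requires arguing that the sub-threshold tail contributes $o(n)$---and both handle it at the same level of rigor.
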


\begin{proof}
Consider some round of the $\algoascending$ with $n$ active agents; as in \Cref{lemma:inclusion}, let us denote with $X_a$ the proportion of agents that will remain active in the following round of the auction. If $T$ represents the (randomized) communication complexity of our mechanism, we obtain that 

\begin{equation}
\label{equation:recursion}
    \E[T(n; c, k)] = \E[T(n X_a; c, k)] + Q(c; k) + n - c.
\end{equation}

There are various ways to bound randomized recursions of such form; our analysis will leverage the concentration of $X_a$. In the sequel, we will tacitly assume that the agents' valuations are pairwise distinct, as this yields an upper bound on the actual communication complexity (in our case, ties can only truncate communication). We will first establish that 
\begin{equation}
    \label{eq:var-X_a}
    \V[X_a]  \sim \frac{2(c-1)}{(c+2)(c+1)^2}.
\end{equation}

Indeed, consider the random variable $X_r$ that represents the rank of the agent with the second highest valuation in the sample. Analogously to the proof of \Cref{lemma:inclusion}, it follows that 

\begin{align*}
    \E[X_r^2] = \sum_{i=1}^n i^2 \Pr[X_r = i] &\sim \frac{c!}{n^c} \sum_{i=1}^n i^2 (n-i) \binom{i-1}{c-2} \\
    &\sim \frac{c!}{n^c} \left( n \sum_{i=1}^n i^2 \binom{i}{c-2} - \sum_{i=1}^n i^3 \binom{i}{c-2} \right) \\
    &\sim \frac{c!}{n^c} \left( n \sum_{i=1}^n \frac{i^{c}}{(c-2)!} - \sum_{i=1}^n \frac{i^{c+1}}{(c-2)!} \right) \\
    &\sim \frac{c (c-1)}{n^c} \left( \frac{n^{c+2}}{c+1} - \frac{n^{c+2}}{c+2} \right) \\
    &= n^2 \frac{c(c-1)}{(c+2)(c+1)}.
\end{align*}

As a result, \eqref{eq:var-X_a} follows given that $\V[X_r] = \E[X_r^2] - (\E[X_r])^2$ and $\V[X_a] = \V[X_r]/n^2$; notice that under the assumption that the valuations are pairwise distinct, it follows that $X_a = (n - X_r)/n$. For notational simplicity, let us denote with $\mu = \E[X_a]$ and $\sigma = \sqrt{\V[X_a]}$. Chebyshev's inequality implies that $\pr[|X_a - \mu| \geq \sqrt{c} \sigma] \leq 1/c$. It is also easy to see that $\mu + \sqrt{c} \sigma \leq 4/\sqrt{c}$; hence, with probability at least $1 - 1/c$, $n X_a \leq 4n/\sqrt{c}$. Consequently, \eqref{equation:recursion} gives that 

\begin{equation}
    \label{eq:recursion_bound}
        t(n; c, k) \lesssim \left( 1 - \frac{1}{c} \right) t\left( \frac{4 n }{\sqrt{c}}; c, k \right) + \frac{1}{c} t(n; c, k) + n + Q(c; k),
    \end{equation}
where we used the fact that $t(n; c, k)$ is decreasing w.r.t. $n$. Moreover, \eqref{eq:recursion_bound} can be recast as 

\begin{equation}
    t(n; c, k) \lesssim t\left( \frac{4n}{\sqrt{c}}; c, k \right) + \frac{c}{c-1} n + \frac{c}{c-1} Q(c; k).
\end{equation}

Now consider any small $\epsilon > 0$, and let $c = \Theta(1/\epsilon^2)$;\footnote{We do not claim that our analysis w.r.t. the size of the sample is tight; the rather crude bound $c = \Theta(1/\epsilon^2)$ is an artifact of our analysis, but nonetheless it will suffice for our purposes. } from the previous recursion we obtain that 
\begin{equation*}
    t(n; c, k) \lesssim (1 + \epsilon) n + Q(c; k) \log n,
\end{equation*}
as desired.
\end{proof}

In particular, if $\mathcal{A}$ simulates a sealed-bid auction $Q(c; k) = c \cdot k$, while if $\mathcal{A}$ simulates an English auction $Q(c; k) = c \cdot 2^k$; indeed, implementing a \emph{faithful} second-price auction through a standard ascending format necessitates covering the entire valuation space, i.e. $2^k$ potential prices. \Cref{theorem:communication-single_item} implies that the size of the sample $c$ induces a trade-off between two terms: As we augment the size of the sample $c$ we truncate the first term in \eqref{eq:comm_Q}---most agents withdraw from a given round---at the cost of increasing the simulation of the sub-auction $\mathcal{A}$ ---the second term in \eqref{eq:comm_Q}. Returning to our earlier thought experiment where we had access to the distribution over the valuations, the term $Q(c; k) \log n$ is essentially the overhead which we incur given that we do not possess any prior information. Yet, if $k$ does not depend on $n$ and we examine the asymptotic growth of the expected communication complexity w.r.t. the number of agents $n$, we obtain the following:

\begin{corollary}[Single-Item Auction with Optimal Communication]
    \label{corollary:optimal-single_item}
    Consider some algorithm $\mathcal{A}$ that faithfully simulates a second-price auction, and $N = [n]$ a set of agents. Moreover, for any $\epsilon > 0$ and $c = \Theta(1/\epsilon^2)$, denote by $t(n; c, k)$ the expected communication complexity of the $\algoascending(N, \mathcal{A}, \epsilon)$. Then, if $k$ is a constant independent of $n$,
    \begin{equation*}
        t(n; c, k) \lesssim (1 + \epsilon) n.
    \end{equation*}
\end{corollary}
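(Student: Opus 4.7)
The plan is to obtain this corollary as an essentially immediate consequence of \Cref{theorem:communication-single_item}, since the statement simply specializes the bound there to the regime where the per-valuation bit length $k$ is independent of $n$. First I would recall that \Cref{theorem:communication-single_item} already establishes
\begin{equation*}
    t(n; c, k) \lesssim (1 + \epsilon) n + Q(c; k) \log n,
\end{equation*}
so the only thing left to check is that the additive term $Q(c;k)\log n$ is dominated by $(1+\epsilon)n$ in the $\lesssim$ sense as $n \to \infty$.

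To that end, I would observe that the sample size $c = \Theta(1/\epsilon^2)$ prescribed by $\algoascending$ depends only on $\epsilon$, not on $n$. Hence for both natural instantiations of the black-box $\mathcal{A}$ mentioned in the discussion after \Cref{theorem:communication-single_item}, the quantity $Q(c;k)$ is bounded by a constant with respect to $n$: specifically, $Q(c;k) \leq c \cdot k$ when $\mathcal{A}$ is a sealed-bid simulation, and $Q(c;k) \leq c \cdot 2^k$ when $\mathcal{A}$ is an English-auction simulation. In either case, using the assumption that $k$ does not grow with $n$, we conclude $Q(c;k) = O_n(1)$, and therefore $Q(c;k) \log n = O(\log n) = o(n)$.

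Plugging this back into the bound from \Cref{theorem:communication-single_item}, I would write
\begin{equation*}
    t(n; c, k) \lesssim (1 + \epsilon) n + o(n),
\end{equation*}
and then divide by $(1+\epsilon) n$ and take $\limsup_{n \to \infty}$, which recovers $\limsup_n t(n;c,k)/((1+\epsilon)n) \leq 1$, i.e. $t(n;c,k) \lesssim (1+\epsilon) n$ in the asymptotic notation fixed in the preliminaries. There is no real obstacle here beyond bookkeeping; the only mild subtlety is noting that $\mathcal{A}$ must be instantiated by a concrete simulator so that $Q(c;k)$ can be bounded (both candidate instantiations work), and that the sub-auction overhead $Q(c;k)$ remains a constant precisely because $c$ was chosen independent of $n$. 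This is exactly the asymmetry the paper is leveraging: the ``expensive'' term $Q(c;k)\log n$ scales only logarithmically in $n$, and so is asymptotically absorbed into the $\epsilon n$ slack, matching the trivial lower bound of $n$ bits from \Cref{fact:lower_bound} up to a $(1+\epsilon)$ factor.
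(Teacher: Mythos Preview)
Your proposal is correct and follows exactly the paper's intended argument: the corollary is an immediate consequence of \Cref{theorem:communication-single_item}, obtained by noting that when $k$ (and hence $c$) is independent of $n$, the term $Q(c;k)\log n$ is $o(n)$ and thus absorbed by the $(1+\epsilon)n$ term under the $\lesssim$ convention. The paper does not even spell this out beyond the sentence preceding the corollary, so your write-up is, if anything, more detailed than the original.
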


\paragraph{Remark.} It is important to point out that the elicitation pattern in our proposed mechanism ($\algoascending$) is highly \emph{asymmetrical}. Indeed, while most of the agents will be eliminated after the first round, having only revealed a \emph{single} bit from their valuations, the agents who are "close" to winning the item will have to disclosure a substantial amount of information; arguably, this property is desirable. Notice that this is in stark contrast to a standard English auction in which every withdrawing bidder approximately reveals her valuation. 

\subsection{Multi-Item Auction with Additive Valuations}

As an extension of the previous setting, consider than the auctioneer has to allocate $m$ (indivisible) items to $n$ agents, with the valuation space being \emph{additive}; that is, for every agent $i$ and for any bundle of items $S \neq \emptyset$,
\begin{equation*}
    v_i(S) = \sum_{j \in S} v_{i,j},
\end{equation*}
where recall that $v_{i, j}$ represents the value of item $j$ for agent $i$. Naturally, we are going to employ an $\algoascending$ for every item. It should be clear that -- by virtue of \Cref{proposition:VCG} -- the induced mechanism implements with probability $1$ the VCG outcome; every item is awarded to the agent who values it the most, and the second-highest valuation for that particular item is imposed as the payment. Moreover, \Cref{proposition:IC} implies the following:

\begin{proposition}
    Implementing for every item $\algoascending$ yields an ex-post incentive compatible multi-item auction with additive valuations.
\end{proposition}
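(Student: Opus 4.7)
The plan is to reduce the multi-item claim to the single-item guarantee of \Cref{proposition:IC} by exploiting the additivity of the valuations together with the fact that the $m$ copies of $\algoascending$ are executed in parallel on disjoint random coins and with disjoint bid sequences. So I would fix an agent $i$, her valuation vector $(v_{i,1}, \dots, v_{i,m})$, an arbitrary strategy profile $\mathbf{s}_{-i}$ of the other agents, and any realization of the mechanism's internal randomness. Under additivity her utility in the combined mechanism decomposes as
\begin{equation*}
u_i = \sum_{j=1}^m (v_{i,j} - p_{i,j}) \cdot \mathbbm{1}[i \text{ wins item } j],
\end{equation*}
that is, as a sum of per-item contributions.

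The key observation is that the $m$ sub-auctions do not interact: the bid that agent $i$ transmits in the auction for item $j$ does not enter the transcript of the auction for any other item $j' \ne j$, nor does it affect the random samples drawn there. Consequently, $i$'s action in auction $j$ influences only the summand indexed by $j$ in the expression above, so maximizing $u_i$ over $i$'s joint strategy separates into $m$ independent per-item maximization problems, each taking $\mathbf{s}_{-i}$ and the realized randomness as given.

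It then suffices to verify each of these per-item problems. But this is precisely the content of \Cref{proposition:IC} applied to the auction for item $j$: bidding $v_{i,j}$ truthfully is an ex-post best response in $\algoascending$ against any fixed action profile of the other agents in that particular sub-auction. Summing the per-item optimality across $j = 1, \dots, m$ shows that truthful reporting of the full valuation vector is an ex-post best response to $\mathbf{s}_{-i}$, which is exactly the definition of an ex-post Nash equilibrium for the joint mechanism.

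There is no real obstacle here beyond being careful that the ``ex-post'' quantifier composes across parallel runs; this is immediate once one notes that the other agents' strategies in the auction for item $j'$ respond only to the transcript of that auction, which is unaffected by $i$'s deviation in auction $j$, so there is no cross-item coupling through which a deviation could be profitable.
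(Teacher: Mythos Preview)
Your proposal is correct and follows exactly the route the paper takes: the paper does not give an explicit proof at all but simply states that the proposition is implied by \Cref{proposition:IC}, which is precisely the reduction you spell out---decompose the additive utility item-by-item, use the independence of the $m$ sub-auctions, and invoke the single-item ex-post guarantee for each coordinate. Your write-up just makes explicit the (straightforward) details that the paper leaves implicit.
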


Our main insight in this domain is that a \emph{simultaneous} implementation can lead to a much more communication-efficient interaction process.

\paragraph{Sequential Implementation.} First, assume that we have to perform an independent and separate auction for each item. Then, \Cref{fact:lower_bound} implies that our mechanism has to elicit at least $n \cdot m$ bits. As in the single-item setting, we can asymptotically match this lower bound. 

\begin{proposition}
    Consider a set of agents $N = [n]$ with additive valuations for $m$ (indivisible) items, and denote by $t(n; m, c, k)$ the expected communication complexity of implementing for every item $\algoascending$. Then, for any $\epsilon > 0$ and with $k$ assumed a constant independent of $n$,
    \begin{equation*}
        t(n; m, c, k) \lesssim (1 + \epsilon) n m .
    \end{equation*}
\end{proposition}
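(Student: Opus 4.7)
The plan is to reduce this claim directly to the single-item guarantee established in \Cref{corollary:optimal-single_item}, exploiting the fact that a sequential (independent) implementation decouples the communication incurred across items.

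First, I would fix an arbitrary item $j \in [m]$ and observe that the $\algoascending$ executed on item $j$ interacts with the $n$ agents using only the values $\{v_{i,j}\}_{i \in N}$; since the valuations are additive, the communication triggered by item $j$ is a function only of the one-dimensional valuation profile for that item, and is therefore stochastically identical to a standalone single-item auction on the same $n$ agents. Letting $T_j$ denote the (random) number of bits elicited during the auction for item $j$, \Cref{corollary:optimal-single_item} gives
\begin{equation*}
    \E[T_j] \lesssim (1 + \epsilon) n,
\end{equation*}
provided $c = \Theta(1/\epsilon^2)$ and $k$ is a constant independent of $n$.

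Next, I would sum over items using linearity of expectation. The total communication is $T = \sum_{j=1}^m T_j$, so
\begin{equation*}
    t(n; m, c, k) = \E[T] = \sum_{j=1}^m \E[T_j] \lesssim \sum_{j=1}^m (1 + \epsilon) n = (1 + \epsilon) n m,
\end{equation*}
as claimed. Note that linearity of expectation does not require the $T_j$ to be independent, so we do not need to argue about the joint distribution of the $m$ parallel (or sequential) auctions, nor about how randomness is shared across them.

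There is essentially no obstacle here: the only subtle point is making sure that the asymptotic symbol $\lesssim$ behaves correctly under a finite sum of $m$ terms, which it does because $m$ is held fixed while $n \to \infty$ (the implicit constants absorbed in the $\lesssim$ for each $\E[T_j]$ do not depend on $n$, and the $\log n$ overhead in \Cref{theorem:communication-single_item} is dominated by $\epsilon n$ as $n \to \infty$ since $k$, and hence $Q(c;k)$, is a constant). This same observation is exactly why summing $m$ copies of the bound preserves the asymptotic form.
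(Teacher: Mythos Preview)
Your proposal is correct and matches the paper's intended argument: the proposition is stated without an explicit proof, as a direct consequence of \Cref{corollary:optimal-single_item} applied independently to each of the $m$ items and summed via linearity of expectation. Your handling of the $\lesssim$ under a finite (fixed-$m$) sum is exactly the right justification for the only point that could require care.
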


\paragraph{Simultaneous Implementation.} On the other hand, we will show that the communication complexity can be substantially reduced when the $m$ auctions are performed in \emph{parallel}. Specifically, our approach employs some ideas from information theory in order to design a more efficient encoding scheme. More concretely, let us first describe the general principle. Consider a discrete random variable that has to be \emph{encoded} and subsequently transmitted to some receiver; it is well understood in coding theory that the values of the random variable which are more likely to be observed have to be encoded with relatively fewer bits, so that the communication complexity is minimized in expectation. 

Now the important link is that in the $\algoascending$ with a large sample size $c$ a random agent will most likely withdraw from a given round. Thus, we consider the following encoding scheme: An agent $i$---remaining active in at least one of the $m$ auctions---will transmit a bit $0$ in the case of withdrawal from \emph{all} the auctions; otherwise, $i$ may simply transmit an $m_i$-bit vector that indicates the auctions that $i$ wishes to remain active, where $m_i \leq m$ is the number of auctions in which $i$ is still active. Although the latter part of the encoding is clearly sub-optimal---given that we have encoded events with substantially different probabilities with the same number of bits, it will be sufficient for our argument. Consider a round of the parallel implementation with $n$ agents and let $p$ be the probability that a random agent will withdraw from every auction in the current round. Given that every player is active in at most $m$ auctions it follows from the union bound that $1 - p \lesssim 2m/(c+1)$. Thus, if $B$ represents the total number of bits transmitted in the round, we obtain that 

\begin{equation*}
\E[B] = n \left( 1 \cdot p + m \cdot (1-p) \right) \lesssim n \left( \left( 1 - \frac{2m}{c+1} \right) + m \left( \frac{2m}{c+1} \right) \right).    
\end{equation*}

As a result, for every $\delta > 0$ and size of sample $c = \Theta(m^2/\delta)$, we have that $\E[B] \leq n (1 + \delta)$. Also note that in expectation only a small fraction of agents will ``survive'' in a given round of the parallel auction---asymptotically at most $2m/(c+1)$. Thus, similarly to \Cref{corollary:optimal-single_item} we can establish the following theorem: 

\begin{theorem}[Simultaneous Single-Item Auctions] 
    \label{theorem:simultaneous}
    Consider some set of agents $N = [n]$ with additive valuations for $m$ indivisible items, and assume that $k$ and $m$ are constants independent of $n$. There exists an encoding scheme such that if $t(n; m, c, k)$ is the expected communication complexity of implementing in parallel an $\algoascending$ for every item, then for any $\epsilon > 0$ and for sufficiently large $c = c(\epsilon, m)$,
    \begin{equation*}
        t(n; m, c, k) \lesssim (1 + \epsilon) n.
    \end{equation*}
\end{theorem}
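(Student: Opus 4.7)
The plan is to mirror the proof structure of \Cref{theorem:communication-single_item}, lifting it from the single-item case to the parallel $m$-auction case, with two modifications: first, each active agent now transmits an expected $(1+\delta)$ bits per round rather than essentially one bit; second, the per-round survival fraction is inflated by a factor of $m$ due to a union bound over the $m$ simultaneous auctions. Since $k$ and $m$ are treated as constants independent of $n$, the simulation overhead $m\cdot Q(c;k)$ per round is $O(1)$, so the $\log n$ factor contributed by the number of rounds is absorbed into lower-order terms, just as in \Cref{corollary:optimal-single_item}.

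First I would formalize the encoding scheme already sketched before the theorem statement: in a given round, an active agent either transmits a single bit $0$ to signal withdrawal from all remaining auctions, or transmits an $m_i$-bit indicator vector (with $m_i\le m$) specifying the subset of auctions in which she stays active. To bound the per-agent, per-round communication, I would invoke \Cref{lemma:inclusion}: for each individual item, the probability that a given active agent survives that round is asymptotically at most $2/(c+1)$. A union bound over the (at most $m$) auctions in which the agent is currently participating implies that the probability $1-p$ of surviving in \emph{at least one} auction satisfies $1-p\lesssim 2m/(c+1)$. The expected number of bits this agent sends is therefore at most $p+(1-p)\,m\le 1+(m-1)\cdot\tfrac{2m}{c+1}$, which is bounded by $1+\delta$ provided $c=\Theta(m^2/\delta)$.

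Next I would set up the recursion. Let $X_a$ denote the random fraction of agents still active after a round (i.e.\ active in at least one of the $m$ auctions). The argument above gives $\E[X_a]\lesssim 2m/(c+1)$, and by bounding second moments one can also control $\V[X_a]$ at scale $\Theta(m^2/c^2)$, so Chebyshev yields that with probability $1-1/c$, $X_a\le 4m/\sqrt{c}=:\alpha$, exactly analogous to the single-item analysis. Writing $t(n;m,c,k)$ for the expected communication, the recursion reads
\begin{equation*}
 t(n;m,c,k)\;\lesssim\;(1+\delta)\,n\;+\;m\cdot Q(c;k)\;+\;\left(1-\tfrac{1}{c}\right)t(\alpha n;m,c,k)\;+\;\tfrac{1}{c}\,t(n;m,c,k),
\end{equation*}
which rearranges, using monotonicity of $t$ in $n$, to $t(n;m,c,k)\lesssim \tfrac{c}{c-1}\bigl[(1+\delta)n + m\,Q(c;k)\bigr] + t(\alpha n;m,c,k)$.

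Finally I would unroll the recursion: after $O(\log n)$ rounds the active set is exhausted, so
\begin{equation*}
 t(n;m,c,k)\;\lesssim\;\frac{1+\delta}{1-\alpha}\,n\;+\;O\bigl(m\,Q(c;k)\log n\bigr),
\end{equation*}
and the second term is $O(\log n)$ under the hypothesis that $k,m$ are constants. Given any target $\epsilon>0$, choosing $c=c(\epsilon,m)$ large enough that $\delta$ and $\alpha$ are simultaneously small enough to ensure $(1+\delta)/(1-\alpha)\le 1+\epsilon$ yields the stated bound $t(n;m,c,k)\lesssim (1+\epsilon)n$. The main obstacle, as in the single-item analysis, is the rigorous handling of the randomized recursion---controlling $X_a$ well enough via concentration to justify the geometric-decay step; everything else is a clean lift of the \Cref{theorem:communication-single_item} argument.
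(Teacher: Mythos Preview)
Your proposal is correct and follows essentially the same approach as the paper. The paper's own argument consists precisely of the encoding scheme and the union-bound computation giving $\E[B]\lesssim(1+\delta)n$ per round for $c=\Theta(m^2/\delta)$, after which it simply states ``similarly to \Cref{corollary:optimal-single_item}'' without writing out the recursion; your proposal makes that deferred step explicit by redoing the \Cref{theorem:communication-single_item} analysis with the two modifications you identify (the $1+\delta$ per-agent cost and the $m$-inflated survival rate), which is exactly what the paper intends.
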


\subsection{Multi-Unit Auction with Unit Demand}

Finally, we design a multi-unit auction where $m$ units of the same item are to be disposed to $n$ \emph{unit demand} bidders; naturally, we are interested in the non-trivial case where $m \leq n$. We shall consider two canonical cases. 

First, let us assume that the number of units $m$ is a small \emph{constant}. Then, we claim that our approach in the single-item auction can be directly applied. Indeed, we propose an ascending auction in which at every round we invoke some algorithm $\mathcal{A}$ that simulates the VCG outcome---i.e., $\mathcal{A}$ identifies the $m$ agents with the highest valuations, as well as the $(m+1)$-highest valuation as the payment---for a random sample. Next, the ``market-clearing price'' in the sub-auction is announced in order to ``prune'' the active agents. As a result, we can establish guarantees analogous to \Cref{proposition:VCG,proposition:IC} and \Cref{corollary:optimal-single_item}; the analysis is similar to the single-item auction, and is therefore omitted.

Our main contribution in this subsection is to address the case where $m = \gamma \cdot n$, for some \emph{constant} $\gamma \in (0,1)$. Specifically, unlike a standard English auction, our idea is to broadcast in every round \emph{two} separate prices; the agents who are above the high price $p_h$ are automatically declared winners,\footnote{However, the winners do not actually pay $p_h$, but rather a common price determined at the final round of the auction; this feature is necessary in order to provide any meaningful incentive compatibility guarantees.} while the agents below the lower price $p_{\ell}$ will have to quit the auction. Then, the mechanism may simply recurse on the agents that lie in the intermediate region. In this context, we consider the following encoding scheme: 

\begin{itemize}
    \item If $v_i > p_h$, then $i$ transmits a bit of $1$;
    \item If $v_i < p_{\ell}$, then $i$ transmits a bit of $0$;
    \item otherwise, $i$ may transmit some arbitrary 2-bit code.
\end{itemize}

Observe that the last condition ensures that the encoding is \emph{non-singular}. In contrast to our approach in the single-item auction, this communication pattern requires the transmission of a $2$-bit code from some agents; nonetheless, we will show that this overhead can be negligible, and in particular, the fraction of agents that reside between $p_h$ and $p_{\ell}$ can be made arbitrarily small. For simplicity in the exposition, here we will tacitly assume that the agents' valuations are pairwise distinct. The pseudocode for our mechanism is given in \ref{algorithm:multi_unit}.

\begin{algorithm}[htb]
\DontPrintSemicolon
\SetAlgoLined
\textbf{Input}: Set of agents $N$, number of items $m$ \;
\textbf{Output}: VCG outcome (Winners \& Payment)\;
Initialize the winners $W := \emptyset$ and the losers $L:=\emptyset$\;
$p_h := \textsc{EstimateUpperBound}(N, m)$ \;
$p_{\ell} := \textsc{EstimateLowerBound}(N, m)$ \;
Announce $p_{h}$ and $p_{\ell}$\;
Update the winners: $W := W \cup \{ i \in N \mid v_i > p_h \}$  \;
Update the losers: $L := L \cup \{ i \in N \mid v_i < p_{\ell} \}$\;
\eIf{$p_h = p_{\ell}$}{
\textbf{return} $(W, p_{h})$\;
}{
Set $m := m - |\{ i \in N : v_i > p_h \}| $\;
Set $N := N \setminus (W \cup L)$\;
\textbf{return} $\algomultiunit(N, m)$\;
}
\caption{$\algomultiunit(N, m)$}
\label{algorithm:multi_unit}
\end{algorithm}

The crux of the $\algomultiunit$ lies in the implementation of the subroutines at steps $4$ and $5$. This is is addressed in the following theorem:

\begin{theorem}
    \label{theorem:estimation}
Consider a set of agents $N = [n]$ and a number of units $m$. There exists a sampling algorithm such that for any $\epsilon > 0$ and any $\delta > 0$ satisfies the following:

\begin{itemize}
    \item It takes as input at most $4k \log(4k/\delta)/\epsilon^2$ bits.
    \item With probability at least $1 - \delta$ it returns prices $p_{h}$ and $p_{\ell}$, such that $p_{h}$ is between the $(m+1)$-ranked player and the $(m+1 + \lceil \epsilon n \rceil)$-ranked player, and $p_{\ell}$ is between the $(m+1 - \lceil \epsilon n \rceil)$-ranked player and the $(m+1)$-ranked player.
\end{itemize}
\end{theorem}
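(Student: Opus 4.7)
The plan is to implement both \textsc{EstimateUpperBound} and \textsc{EstimateLowerBound} via a single stochastic binary search on the complete binary tree of depth $k$ whose leaves enumerate the $2^k$ possible valuations. At each internal node the algorithm has a current interval $[a,b]$ of candidate values with midpoint $c$; it draws a fresh sample $S$ of size $s = \Theta(\log(k/\delta)/\epsilon^2)$ from $N$ and asks every sampled agent to transmit a single bit indicating whether $v_i > c$. This yields an empirical estimator $\hat q_c$ of the true fraction $q_c := |\{i : v_i > c\}|/n$. Letting $\alpha := (n-m)/n$ be the fraction that \emph{should} lie above the $(m+1)$-ranked valuation, the search branches into $[c,b]$ when $\hat q_c > \alpha$ and into $[a,c]$ otherwise; as it proceeds, we maintain the tightest $p_\ell$ (largest visited $c$ certified as below) and the tightest $p_h$ (smallest visited $c$ certified as above) of the target rank.

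The key ingredient is Hoeffding's inequality: the chosen $s$ ensures that $\Pr[|\hat q_c - q_c| > \epsilon/2] \leq \delta/(2k)$ at every individual junction. A union bound over the at most $k$ midpoints visited along the search path then gives a global failure probability of at most $\delta$, so that with probability $1-\delta$ every estimate produced is within $\epsilon/2$ of the true fraction. Conditioned on this event, whenever $|q_c - \alpha| > \epsilon/2$ the algorithm branches in the correct direction, while whenever $|q_c - \alpha| \leq \epsilon/2$ the midpoint $c$ itself already satisfies $q_c \in [\alpha - \epsilon, \alpha + \epsilon]$, which translates exactly into the rank of $c$ lying in $[m+1-\lceil \epsilon n\rceil, m+1+\lceil\epsilon n\rceil]$. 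Hence $c$ can legitimately be assigned to both $p_\ell$ and $p_h$, and the resulting interval brackets $v_{(m+1)}$ with the required slack on each side.

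For the bit complexity, each level of the search transmits exactly $s$ bits (one per sampled agent); across $k$ levels this totals $k \cdot s$, which upon instantiating the Hoeffding constants yields the advertised bound of $4k\log(4k/\delta)/\epsilon^2$ bits. The main obstacle I expect is the bookkeeping that extracts \emph{two} prices from a \emph{single} search trajectory: rather than returning the terminal point, one must carefully update $p_\ell$ and $p_h$ at every junction so that each one is always certified by some $\hat q_c$ whose margin from $\alpha$ is at least $\epsilon/2$ on the appropriate side. Once this invariant is maintained, the translation from estimator margin to rank margin via the Hoeffding event is immediate, and the theorem follows.
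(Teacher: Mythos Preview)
Your high-level plan---stochastic binary search on the depth-$k$ valuation tree, a Hoeffding bound at each of the $k$ junctions, and a union bound yielding the $\Theta(k\log(k/\delta)/\epsilon^2)$ bit count---is exactly the machinery the paper uses. The gap is in the step you yourself flag as the ``main obstacle'': extracting two correctly-sided prices from a \emph{single} trajectory.

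A single trajectory can take a wrong turn at an ambiguous junction. If $|q_c-\alpha|\le\epsilon/2$, the estimate $\hat q_c$ may fall on either side of $\alpha$, and the search may branch into the half-interval that does \emph{not} contain $v_{(m+1)}$. From that point on, every visited midpoint lies on one fixed side of the target, so the certified bound on the \emph{other} side is never tightened; for instance $p_h$ may remain stuck at an early midpoint whose rank exceeds $m+1+\lceil\epsilon n\rceil$. Your proposed fix---assigning an ambiguous midpoint $c$ to \emph{both} $p_\ell$ and $p_h$---does not close this: from $|q_c-\alpha|\le\epsilon/2$ you only get that the rank of $c$ lies somewhere in $[m+1-\lceil\epsilon n\rceil,\,m+1+\lceil\epsilon n\rceil]$, not that it straddles $m+1$. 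Hence one of the two one-sided constraints (rank of $p_h\ge m+1$, or rank of $p_\ell\le m+1$) can fail, and the ``resulting interval brackets $v_{(m+1)}$'' claim is unjustified.

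The paper's remedy is to run \emph{two} trajectories that coincide until they disagree: at every $\epsilon$-ambiguous junction the high estimator deterministically takes the upper branch and the low estimator the lower one, while at unambiguous junctions both branch according to $\hat q_c$ (correctly, on the good event). This guarantees $p_h$ never drops below $v_{(m+1)}$ and $p_\ell$ never rises above it, with the Hoeffding analysis still controlling the outer slack. The price is at most a factor of two in bits (each estimator spends $\le ck$ bits), which is precisely the origin of the leading constant in $4k\log(4k/\delta)/\epsilon^2$.
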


\begin{proof}
Consider a perfect binary tree of height $k$, such that each of the $2^k$ leaves corresponds to a point on the discretized valuation space, as illustrated in \Cref{fig:binary}. Our algorithm will essentially perform \emph{stochastic binary search} on this tree. To be precise, beginning from the root of the tree, we will estimate an additional bit of $p_{h}$ and $p_{\ell}$ in every level of the tree. Let us denote with $x_1, x_2, \dots, x_r$, with $x_i \in \{0, 1\}$, the predicted bits after $r$ levels. In the current level, we take a random sample $S$ of size $c$ \emph{with} replacement\footnote{We assume sampling with replacement to slightly simplify the analysis; our approach is also directly applicable when the sampling occurs \emph{without} replacement.}($S$ here is potentially a \emph{multiset}), and we query every agent $i \in S$ on whether $v_i \leq \overline{x_1 x_2 \dots x_r 0 1 1 \dots 1}$, where the threshold is expressed in binary representation. Let us denote with $X_{\mu}$ the random estimation derived from the sample, i.e.,
\begin{equation*}
    X_{\mu} = \frac{\sum_{i \in S}  \mathbbm{1} \{ v_i \leq \overline{x_1 x_2 \dots x_r 0 1 \dots 1} \}}{c}.
\end{equation*}

Recall that the output of the algorithm consists of two separate $k$-bit numbers $p_{h}$ and $p_{\ell}$. For convenience, a sample will be referred to as $\epsilon$-\emph{ambiguous} if $|X_{\mu} - \gamma| < \epsilon$, where $\gamma = m/n$ and $\epsilon > 0$ some parameter. Intuitively, whenever the sample is \emph{unambiguous} we can branch with very high confidence; that is, we predict a bit of $1$ if $X_{\mu} < \gamma$, and a bit of $0$ if $X_{\mu} > \gamma$. In contrast, in every $\epsilon$-ambiguous junction the ``high'' estimation---corresponding to $p_h$---will predict a bit of $1$, whilst the ``lower'' estimation---corresponding to $p_{\ell}$---will predict a bit of $0$. One should imagine that the two estimators initially coincide, until they separate when a ``close'' decision arises (see \Cref{fig:binary}). We claim that this algorithm will terminate with high probability with the desired bounds. For our analysis we will employ the following standard lemma:

\begin{lemma}[Chernoff-Hoeffding Bound]
    Let $\{X_1, X_2, \dots, X_c\}$ be a set of i.i.d. random variables with $X_i \sim \text{Bern}(p)$ and $X_{\mu} = (X_1 + X_2 + \dots + X_c)/c$; then,
    \begin{equation*}
        \Pr(|X_{\mu} - p| \geq \epsilon) \leq 2 e^{-2 \epsilon^2 c}.
    \end{equation*}
\end{lemma}

The main observation is that if for all samples $X_{\mu}$ has at most $\epsilon/2$ error, then the estimations $p_h$ and $p_{\ell}$ will be within the desired range in our claim. Let us denote with $p_e$ the probability that for a single estimate and after $k$ levels there exists a sample with more than $\epsilon/2$ error; the union bound implies that $p_e \leq 2k e^{-\epsilon^2 c/2}$. Thus, for any $\epsilon > 0$ and $\delta > 0$, $p_e \leq \delta$ for $c \geq 2 \log(2k/\delta)/\epsilon^2$. Furthermore, by the union bound we obtain that the probability of error of either of the two estimates with input at most $2 c k$ bits is at most $2 \delta$; rescaling $\delta := \delta/2$ concludes the proof.

\end{proof}

\begin{figure}[!ht]
    \centering
    \includegraphics[scale=0.5]{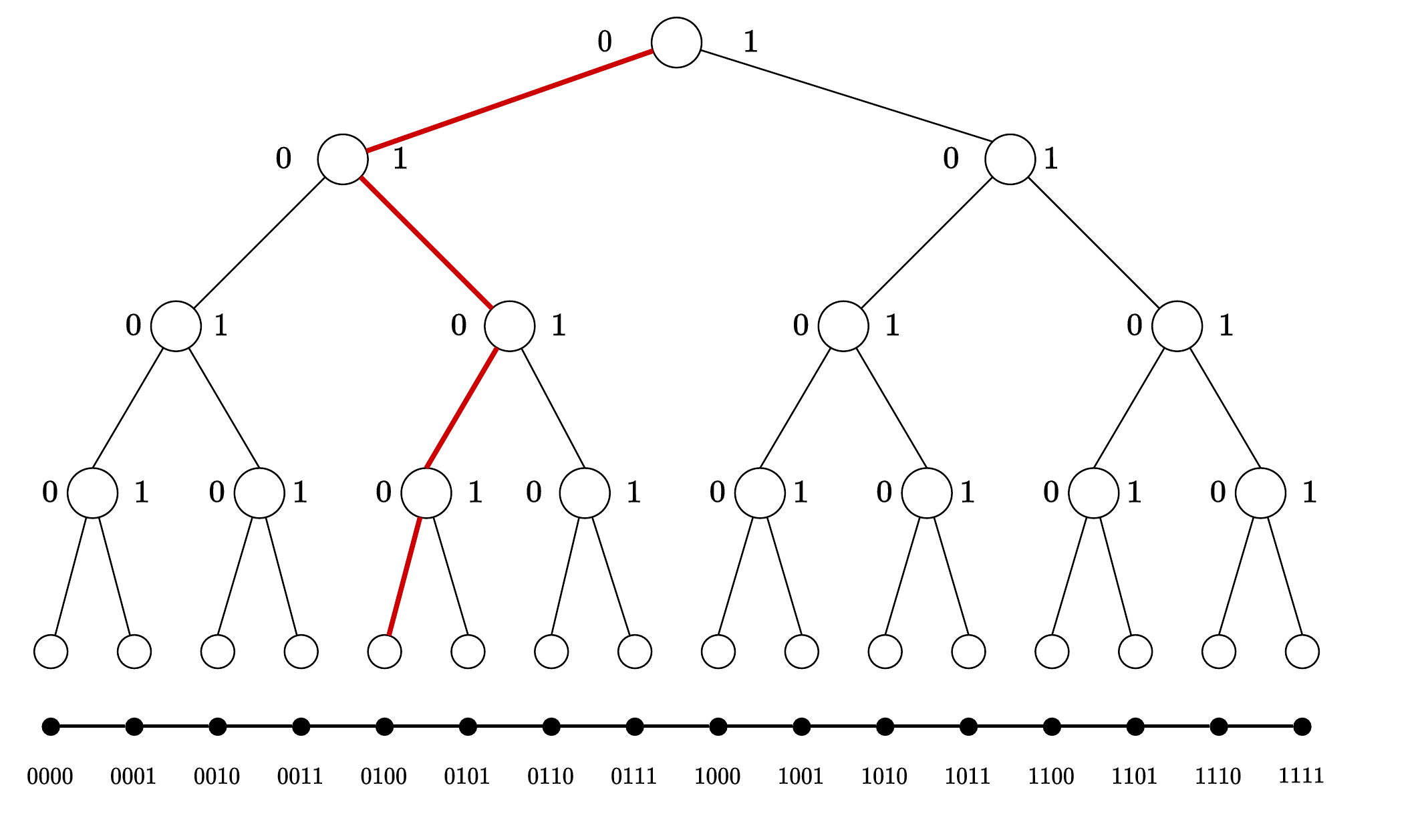}
    \caption{The binary-tree representation of the valuation space. The red lines correspond to potential branching of the two estimates.}
    \label{fig:binary}
\end{figure}

Consequently, the algorithm described in \Cref{theorem:estimation} will be employed to implement lines $4$ and $5$ in our $\algomultiunit$. In addition, notice that we can recognize whenever the estimated prices $p_{h}$ and $p_{\ell}$ are incongruous---in the sense that either the winners are more than the available items, or that the remaining agents are less than the available items, in which case we simply repeat the estimation. Thus, we obtain the following properties:

\begin{proposition}
    The $\algomultiunit$ is ex-post incentive compatible.
\end{proposition}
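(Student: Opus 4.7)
The plan is to mimic the single-item argument in Proposition~\ref{proposition:IC}: fix an arbitrary realization of the mechanism's random coins together with an arbitrary profile of sincere reports $\mathbf{v}_{-i}$, and then verify that no unilateral deviation by $i$ is profitable. The key structural properties I will rely on are that the sequence of announced prices $(p_h^{(r)}, p_\ell^{(r)})$ is monotonically non-decreasing across rounds, and that the terminal payment $p^\ast$ equals $p_h^{(\mathrm{final})}$, with every agent in $W$ ultimately paying this same common price.

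First I would handle the rounds in which $i$ does \emph{not} belong to the sample; here the prices are independent of $i$'s behavior, and a case analysis exactly as in Proposition~\ref{proposition:IC} shows that $i$'s only decision is whether to be absorbed into $W$, into $L$, or into the intermediate region carried forward. Remaining active when $v_i < p_\ell^{(r)}$ leads to an eventual payment strictly above $v_i$, while withdrawing (or being absorbed into $L$) when $v_i > p_h^{(r)}$ forfeits strictly positive utility $v_i - p^\ast > 0$, since by Theorem~\ref{theorem:estimation} the terminal $p^\ast$ is bracketed by order statistics of $\mathbf{v}_{-i}$. I would then treat the rounds in which $i$ \emph{does} lie in the sample. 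Applying Theorem~\ref{theorem:estimation} to the reported profile, any deviation by $i$ that preserves membership in $W$ cannot strictly decrease $p^\ast$, and any deviation that swaps $W$ for $L$ either reduces utility from $v_i - p^\ast > 0$ to $0$ (when $v_i > p^\ast$) or converts a zero outcome into a negative $v_i - p^\ast < 0$ (when $v_i < p^\ast$).

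The main obstacle will be ruling out the hybrid manipulation in which $i$ answers the binary-search estimation queries dishonestly---so as to nudge $p_h$ and $p_\ell$ downward---while answering the subsequent allocation queries sincerely, in order to still be absorbed into $W$ at the deflated payment. I plan to handle this by identifying $i$'s entire strategy with a single surrogate valuation $\hat v_i$ (the smallest value consistent with the bit-responses $i$ transmits), and then arguing that $p^\ast$ is weakly monotone in $\hat v_i$ while the indicator $\mathbbm{1}[i \in W]$ flips from $0$ to $1$ precisely at the threshold $\hat v_i = p^\ast$. Combining monotonicity with this threshold structure reduces the argument to a Myerson-style verification: the utility $(v_i - p^\ast)\mathbbm{1}[i \in W]$ is maximized over $\hat v_i$ at $\hat v_i = v_i$, which together with the preceding case analysis yields the ex-post incentive compatibility claim.
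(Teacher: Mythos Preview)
The paper does not actually supply a proof for this proposition; it is asserted immediately after the sentence about the ``incongruity check'' (repeating the estimation whenever the declared winners exceed $m$ or the surviving pool drops below the remaining supply), and the word ``Thus'' signals that this check is the entire intended justification. The implicit argument is simply: because the other agents are sincere, any prices that fail to bracket the true $(m{+}1)$-st order statistic are detected and the round is redone, so regardless of what $i$ does the mechanism terminates at the VCG price $p^\ast = v_{(m+1)}(\mathbf{v}_{-i})$; hence $i$'s only effective choice is whether to end up in $W$ (paying $p^\ast$) or in $L$ (paying nothing), and the usual second-price reasoning finishes.

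Your plan is considerably more elaborate and contains two genuine problems. First, the structural property you rely on is misstated: the announced prices are \emph{not} both non-decreasing. The intervals $[p_\ell^{(r)}, p_h^{(r)}]$ are nested, i.e.\ $p_\ell^{(r)}$ is non-decreasing while $p_h^{(r)}$ is non-increasing, and they meet at $p^\ast$. Your later step (``withdrawing when $v_i > p_h^{(r)}$ forfeits $v_i - p^\ast > 0$'') actually needs $p^\ast \le p_h^{(r)}$, which requires $p_h$ to be non-increasing---so your own argument contradicts the monotonicity you assumed. Second, the surrogate-valuation reduction does not go through. Agent $i$'s strategy here is a sequence of bit-responses to threshold queries in the estimation subroutine together with ternary responses to each announced $(p_h,p_\ell)$; nothing forces these answers to be consistent with \emph{any} single $\hat v_i$ (e.g.\ $i$ can answer ``$v_i \le 5$'' in one estimation query and ``$v_i > 10$'' in the next). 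So the Myerson-style monotonicity verification you propose is being applied to a strict subset of the strategy space and does not rule out the hybrid manipulation you yourself flagged. The way to close that gap is not via a surrogate valuation but via the incongruity check: if $i$'s lies push $p_h$ below $v_{(m+1)}$, then the $m{+}1$ sincere top bidders all declare themselves above $p_h$, the winner count exceeds $m$, and the round is repeated---so $i$ can never drag $p^\ast$ below $v_{(m+1)}$ while still landing in $W$.
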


\begin{theorem}[Multi-Unit Auction with Optimal Communication]
    \label{theorem:multi-unit-optimal}
Consider some set of unit demand agents $N = [n]$, and $m$ identical units. Moreover, denote with $t(n; k)$ the expected communication complexity of $\algomultiunit(N, m)$, with steps $4$ and $5$ implemented through the algorithm of \Cref{theorem:estimation}. If $k = \mathcal{O}(n^{1 - \ell})$ for some $\ell > 0$, then for any $\epsilon > 0$,
\begin{equation*}
    t(n; k) \lesssim (1 + \epsilon) n.
\end{equation*}
\end{theorem}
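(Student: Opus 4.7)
The plan is to unroll the recursion of $\algomultiunit$ and to charge each round separately to (i) the bidder communication under the prescribed encoding and (ii) the overhead of the two invocations of \Cref{theorem:estimation}. Fix $\epsilon > 0$. I pick the accuracy parameter $\epsilon' = \epsilon'(\epsilon) > 0$ of \Cref{theorem:estimation} so that $(1+2\epsilon')/(1-2\epsilon') \leq 1 + \epsilon/2$, together with the confidence $\delta = 1/n^{2}$ in every call. A single invocation then costs $Q = O\bigl(k \log(kn)/\epsilon'^{2}\bigr)$ bits and returns prices in the promised range with probability at least $1 - 1/n^{2}$.

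Conditioned on all estimation calls in a given execution being successful, the analysis reduces to deterministic bookkeeping. In round $r$ with $n_r$ remaining bidders, \Cref{theorem:estimation} guarantees that the number $a_r$ of bidders residing in the intermediate window $(p_\ell, p_h)$ is at most $2\lceil\epsilon' n_r\rceil$; under the encoding these bidders contribute a $2$-bit codeword each while the remaining $n_r - a_r$ active bidders contribute a single bit, so the bidder communication of round $r$ equals $n_r + a_r$. Since only the $a_r$ intermediate bidders propagate to the next round, we obtain the contraction $n_{r+1} \leq 2\epsilon' n_r + 1$ and in particular at most $R = O(\log_{1/(2\epsilon')} n) = O(\log n)$ rounds. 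Telescoping,
$$
\sum_{r \geq 0}(n_r + a_r) \;=\; n + 2\sum_{r \geq 1} n_r \;\leq\; \frac{1+2\epsilon'}{1-2\epsilon'}\, n \;\leq\; \bigl(1 + \tfrac{\epsilon}{2}\bigr) n,
$$
while the cumulative subroutine cost is $O(R \cdot Q) = O\bigl(k \log(kn) \log n / \epsilon'^{2}\bigr)$. The hypothesis $k = O(n^{1-\ell})$ makes this $O(n^{1-\ell} \log^{3} n) = o(n)$, so on the success event the total communication is at most $(1+\epsilon/2)n + o(n) \leq (1+\epsilon)n$ for sufficiently large $n$.

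It remains to absorb the failure event into the expectation. A union bound over the $O(\log n)$ invocations shows that the probability of \emph{some} call returning prices outside the promised range is $O(\log n / n^{2}) = o(1)$. Incongruous outputs---those for which the number of tentative winners or the number of surviving bidders is inconsistent with $m$---are detected inside the pseudocode and simply re-sampled, so a geometric-tail argument absorbs the retries into $Q$. Undetected failures contribute in expectation at most the worst-case per-run bound $O(nkR)$ times $O(\log n / n^{2}) = o(1)$. Combining everything yields $\E[t(n;k)] \lesssim (1+\epsilon)n$, as claimed. The hard part is precisely taming those undetected failures: a single bad estimate could in principle push nearly every bidder into the intermediate region and destroy the geometric contraction $n_{r+1} \leq 2\epsilon' n_r$. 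The polynomial choice $\delta = 1/n^{2}$ neutralizes this at the price of an extra $\log^{2} n$ factor inside $Q$, which is precisely why the assumption $k = O(n^{1-\ell})$ is needed to keep the subroutine overhead $o(n)$.
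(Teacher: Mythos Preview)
Your argument is correct and follows the same skeleton as the paper: use the window guarantee of \Cref{theorem:estimation} to get geometric contraction $n_{r+1}\le 2\epsilon' n_r$, telescope the per-round bidder cost $n_r+a_r$ to $(1+2\epsilon')n/(1-2\epsilon')$, and absorb the estimation overhead as a lower-order $o(n)$ term via $k=\mathcal{O}(n^{1-\ell})$.

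The execution differs from the paper's in how the failure probability is handled. You take $\delta=1/n^{2}$, condition on the all-success event, and then charge failures separately. The paper instead keeps $\delta$ a small \emph{constant} and writes a single recursion
\[
  t(n;k)\ \le\ (1-\delta)\bigl((1+2\epsilon)n+t(2\epsilon n;k)\bigr)+\delta\bigl(2n+t(n;k)\bigr)+4k\log(4k/\delta)/\epsilon^{2},
\]
with the failure term $\delta(2n+t(n;k))$ folded directly in; rearranging and unrolling gives the bound. This is somewhat cleaner because it sidesteps the need for a worst-case per-run bound: your claim that undetected failures cost at most $O(nkR)$ with $R=O(\log n)$ is not quite justified as stated, since an undetected failure can prevent contraction in that round, so the number of rounds is not \emph{deterministically} $O(\log n)$. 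The paper's recursion absorbs arbitrarily many consecutive failures automatically through the $\delta\cdot t(n;k)$ term. Your choice $\delta=1/n^{2}$ makes the expected failure contribution negligible anyway (treating the number of non-contracting rounds as negative-binomial closes the gap), so the conclusion stands; but the recursive formulation avoids this bookkeeping altogether, at the price of a slightly less explicit bound.
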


\begin{proof}
\Cref{theorem:estimation} implies that for any $\epsilon > 0$ and $\delta > 0$,

\begin{equation*}
    \label{eq:rec}
    t(n; k) \leq (1 - \delta) ((1 + 2\epsilon) n + t(2\epsilon n; k)) + \delta (2n + t(n; k)) + 4k \frac{\log(4k/\delta)}{\epsilon^2},
\end{equation*}
where the first term corresponds to the induced communication when the sampling algorithm of \Cref{theorem:estimation} succeeds, the second term is the worst-case communication whenever the sampling algorithm fails to return prices within the desired range, while the last term is the communication of the sampling algorithm. Thus, solving the induced recursion and using that $k = \mathcal{O}(n^{1 - \ell})$ concludes the proof.
\end{proof}

\section{Concluding Remarks}

The main contribution of this work is twofold. First, we studied the performance of sampling approximations in facility location games. Our main result in that regime was to show that for any $\epsilon > 0$, we can obtain in expectation a $1 + \epsilon$ approximation w.r.t. the optimal social cost in the metric space $(\mathbb{R}^d, \|\cdot\|_1)$, with only a sample of size $c(\epsilon) = \Theta(1/\epsilon^2)$. Moreover, we considered a series of exemplar environments from auction theory; for every instance, we proposed a mechanism that elicits $1 + \epsilon$ bits on average per bidder, for any $\epsilon > 0$, asymptotically matching the communication lower bound. While our main focus in this work was on social welfare guarantees, an interesting avenue for future research would be to consider other natural objectives in auctions, such as revenue maximization.


\section*{Acknowledgments}

This work was supported by the Hellenic Foundation for Research and Innovation (H.F.R.I.) under the “First Call for H.F.R.I. Research Projects to support Faculty members and Researchers and the procurement of high-cost research equipment grant”, project BALSAM, HFRI-FM17-1424.

\bibliography{paper}

\clearpage

\appendix

\section{Sampling Approximation of the Plurality Rule}
\label{appendix:plurality}
Approximating the plurality rule is quite folklore in social choice; e.g., see \cite{10.5555/2772879.2773334,CANETTI199517}. More recently, Bhattacharyya and Dey \cite{10.5555/2772879.2773334} analyzed the sample complexity of determining the outcome in several common voting rules under a $\gamma$-\emph{margin} condition; that is, they assumed that the minimum number of votes that need to be modified in order to alter the winner is at least $\gamma \cdot n$.  In fact, a standard result by Canetti et al. \cite{CANETTI199517} establishes that at least $\Omega (\log(1/\delta)/\gamma^2)$ samples are required in order to determine the winner in the plurality rule with probability at least $1 - \delta$, even with $2$ candidates. This lower bound might appear rather unsatisfactory; for one thing, the designer does not typically have any prior information on the margin $\gamma$. More importantly, in many practical scenarios we expect the margin to be negligible, leading to a substantial overhead in the sample complexity. 

The purpose of this section is to show that these obstacles could be circumvented once we take a \emph{utilitarian} approach. Indeed, instead of endeavoring to determine the winner in the election with high probability, we are interested in approximating the social welfare. More precisely, assume that $n$ agents have to select among $m$ alternatives or candidates. We let $u_{i, j}$ represent the score that agent $i \in [n]$ assigns to candidate $j \in [m]$. In this way, the social welfare of an outcome $j \in [m]$ is defined as 

\begin{equation*}
    \SW(j) = \sum_{i=1}^n u_{i, j}.
\end{equation*}

Notice, however, that the social welfare approximation problem through the plurality rule is hopeless under arbitrary valuations, in light of obvious information-theoretic barriers. (the framework of \emph{distortion} introduced by Procaccia and Rosenschein \cite{10.1007/11839354_23} quantifies exactly these limitations.) For this reason, and for the sake of simplicity, we are considering the social welfare approximation problem in a very simplistic setting.

\begin{definition}
A voter $i$ is said to be single-minded if $u_{i, r} = 1$ for some candidate $r \in [m]$, and $u_{i, j} = 0, \forall j \neq r$.
\end{definition}

In fact, in the simple setting where all agents are single-minded it is easy to see that the plurality rule is actually strategy-proof. Recall that in the plurality rule every agent $i$ votes for a \emph{single} candidate $j \in [m]$, i.e. agent $i$ broadcasts an $m$-tuple $(0, 0, \dots, 1, 0)$, assigning $1$ to the position that corresponds to her preferred candidate. We are now ready to analyze the sampling approximation of the plurality rule.

\begin{algorithm}
\label{algorithm:approx_plurality}
\DontPrintSemicolon
\SetAlgoLined
\textbf{Input}: Set of agents $N$, set of candidates $[m]$, accuracy parameter $\epsilon > 0$, confidence parameter $\delta > 0$\;
\textbf{Output}: Candidate $j \in [m]$ such that $\SW(j) \geq (1 - \epsilon) \SW(j^*)$ with probability at least $1 - \delta$, where $j^*$ represents the optimal candidate \;
Set $c = 2m^2 \log(2m/\delta)/\epsilon^2$, the size of the sample\;
Let $S$ be a random sample\footnotemark of $c$ agents from $N$\;
\textbf{return} $\textsc{Plurality}(S)$\;
\caption{$\algoplurality(N, [m], \epsilon, \delta)$}
\end{algorithm}
\footnotetext{Here we assume that we sample with replacement.}

\begin{theorem}
\label{theorem:plurality}
Consider a set of single-minded agents $N$, and any number of candidates $m$. For any $\epsilon > 0$ and any $\delta > 0$, $\algoplurality(N, [m], \epsilon, \delta)$ yields with probability at least $1 - \delta$ an approximation ratio of $1 - \epsilon$ w.r.t. the optimal social welfare of the full information $\textsc{Plurality}$, for any $c \geq c_0(m, \epsilon, \delta)$, where 
\begin{equation*}
    c_0(m, \epsilon, \delta) = \frac{2 m^2 \log(2m/\delta)}{\epsilon^2}.
\end{equation*}
\end{theorem}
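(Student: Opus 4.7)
The plan is to reduce the claim to a standard concentration argument on the empirical fractions of votes per candidate. Let $n_j$ be the number of agents in $N$ who vote for candidate $j$ and set $p_j \triangleq n_j/n$; since agents are single-minded, $\SW(j) = n_j = n \cdot p_j$, and the optimal candidate satisfies $\SW(j^*) = n \cdot \max_j p_j$. The algorithm draws a sample $S$ of $c$ agents with replacement and returns $\hat{\jmath} \in \arg\max_j \hat{p}_j$, where $\hat{p}_j$ is the empirical fraction of votes for $j$ in $S$.

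First I would fix an accuracy parameter $\epsilon' > 0$ (to be chosen at the end) and, for each candidate $j \in [m]$, view $\hat{p}_j$ as the mean of $c$ i.i.d.\ Bernoulli$(p_j)$ indicators. The Chernoff--Hoeffding bound stated in the proof of \Cref{theorem:estimation} gives $\Pr[|\hat{p}_j - p_j| \geq \epsilon'] \leq 2 e^{-2 \epsilon'^2 c}$ for each $j$. A union bound over the $m$ candidates then yields
\begin{equation*}
    \Pr\!\left[\, \exists\, j \in [m] : |\hat{p}_j - p_j| \geq \epsilon' \,\right] \leq 2 m\, e^{-2 \epsilon'^2 c}.
\end{equation*}

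Conditioning on the complementary ``good event'' that all empirical fractions are within $\epsilon'$ of their true values, the empirical winner $\hat{\jmath}$ satisfies $\hat{p}_{\hat{\jmath}} \geq \hat{p}_{j^*}$, so
\begin{equation*}
    p_{\hat{\jmath}} \;\geq\; \hat{p}_{\hat{\jmath}} - \epsilon' \;\geq\; \hat{p}_{j^*} - \epsilon' \;\geq\; p_{j^*} - 2\epsilon'.
\end{equation*}
Multiplying by $n$ gives $\SW(\hat{\jmath}) \geq \SW(j^*) - 2\epsilon' n$. To convert this additive guarantee into a $(1-\epsilon)$ multiplicative one, I would exploit the elementary fact that the plurality winner always attracts at least a $1/m$ share of the votes, i.e., $p_{j^*} \geq 1/m$, hence $\SW(j^*) \geq n/m$. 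Choosing $\epsilon' = \epsilon/(2m)$ then forces $2\epsilon' n = \epsilon n / m \leq \epsilon \, \SW(j^*)$, which gives precisely $\SW(\hat{\jmath}) \geq (1-\epsilon)\SW(j^*)$ on the good event.

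It only remains to drive the failure probability below $\delta$. Substituting $\epsilon' = \epsilon/(2m)$ into the union bound requires $2m \exp(-2 (\epsilon/(2m))^2 c) \leq \delta$, which solves to $c \geq (2 m^2 / \epsilon^2) \log(2m/\delta) = c_0(m,\epsilon,\delta)$, exactly the threshold in the statement. The only mildly delicate step is the $p_{j^*} \geq 1/m$ observation that links additive deviations of the empirical fractions to a relative social-welfare guarantee; everything else is a routine Chernoff-plus-union-bound argument.
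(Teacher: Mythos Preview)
Your proposal is correct and follows essentially the same approach as the paper: both apply the Chernoff--Hoeffding bound to each candidate's empirical vote share, union-bound over the $m$ candidates, convert the resulting additive guarantee $p_{\hat{\jmath}} \geq p_{j^*} - 2\epsilon'$ into a multiplicative one via the observation $p_{j^*} \geq 1/m$, and then set $\epsilon' = \epsilon/(2m)$ to recover the stated sample complexity. The only difference is cosmetic---the paper parameterizes the per-candidate failure as $\delta' = \delta/m$ before applying the union bound, whereas you apply the union bound directly to the raw Chernoff tail---but the two derivations are line-by-line equivalent.
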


The proof of this theorem is simple and proceeds with a standard Chernoff bound argument; for completeness, we provide a detailed proof.

\begin{proof}[Proof of \Cref{theorem:plurality}]
First, let us denote with $s_j = \left(\sum_{i=1}^n u_{i, j}\right)/n$ the score of the $j^{\text{th}}$ candidate in the full information plurality rule. Consider a sample of size $c$, and let $(X_{i, 1}, \dots, X_{i, m})$ represent the voting profile of the agent that was selected in the $i^{\text{th}}$ iteration of the sampling process. It follows that $X_{i, j} \sim \text{Bern}(s_j), \forall j \in [m]$. Moreover, recall that we consider sampling with replacement, so that the set of random variables $\{X_{1, j}, \dots, X_{c, j}\}$ is pairwise independent for any $j \in [m]$; our results are also applicable when the sampling occurs without replacement given that the correlation between the random variables is negligible -- for sufficient large $n$, although we do not formalize this here. 

As a result, if we denote with $\widehat{s_j} = \left(\sum_{i=1}^c X_{i, j}\right)/c$, a standard Chernoff bound argument implies that $\forall \epsilon' > 0, \forall \delta' > 0$, and $c \geq \log(2/\delta')/(2(\epsilon')^2)$, $|\widehat{s_j} - s_j| \leq \epsilon'$ with probability at least $1 - \delta'$. By the union bound, we obtain that $|\widehat{s_j} - s_j| \leq \epsilon'$ for all $j \in [m]$ and with probability at least $1 - m\delta'$. Thus, we let $\delta' = \delta/m$ for some arbitrary $\delta > 0$. We also let $s^* = s_{j^*}$ to be the score of the most preferred candidate $j^*$ -- among the entire population $N$. By definition, the optimal social welfare is $\SW^* = \SW(j^*) = \sum_{i=1}^n u_{i, j^*} = n s^*$. If $r = \argmax_j \widehat{s_j}$ is the random output of $\algoplurality(N, [m], \epsilon', \delta)$, it follows that $s_{r} \geq s^* - 2\epsilon'$ with probability at least $1 - \delta$. Thus, we obtain that 

\begin{align*}
    \SW(r) = \sum_{i=1}^n u_{i, r} = n s_r &\geq n (s^* - 2\epsilon') \\
    &= \SW^* - 2\epsilon' \frac{\SW^*}{s^*} \\
    &\geq \SW^* (1 - 2\epsilon' m),
\end{align*}
where in the final inequality we used that $s^* \geq 1/m$. Finally, setting $\epsilon' = \epsilon/(2m)$ for an arbitrary $\epsilon > 0$ concludes the proof.
\end{proof}

\clearpage

\section{Numerical Bounds on the Vandermonde Distribution}
\label{appendix:numerical}

The purpose of this section is to supplement our asymptotic analysis with numerical bounds on the convergence and the concentration of the Vandermonde distribution (recall \eqref{eq:pmf}). 

\subsection{Convergence of the Vandermonde Distribution}

First, we provide experiments that illustrate the rapid convergence of the Vandermonde distribution, both qualitatively (\Cref{fig:convergence}) and in quantitative form (\Cref{table:convergence}).

\begin{figure}[!ht]
    \centering
    \includegraphics[scale=0.4]{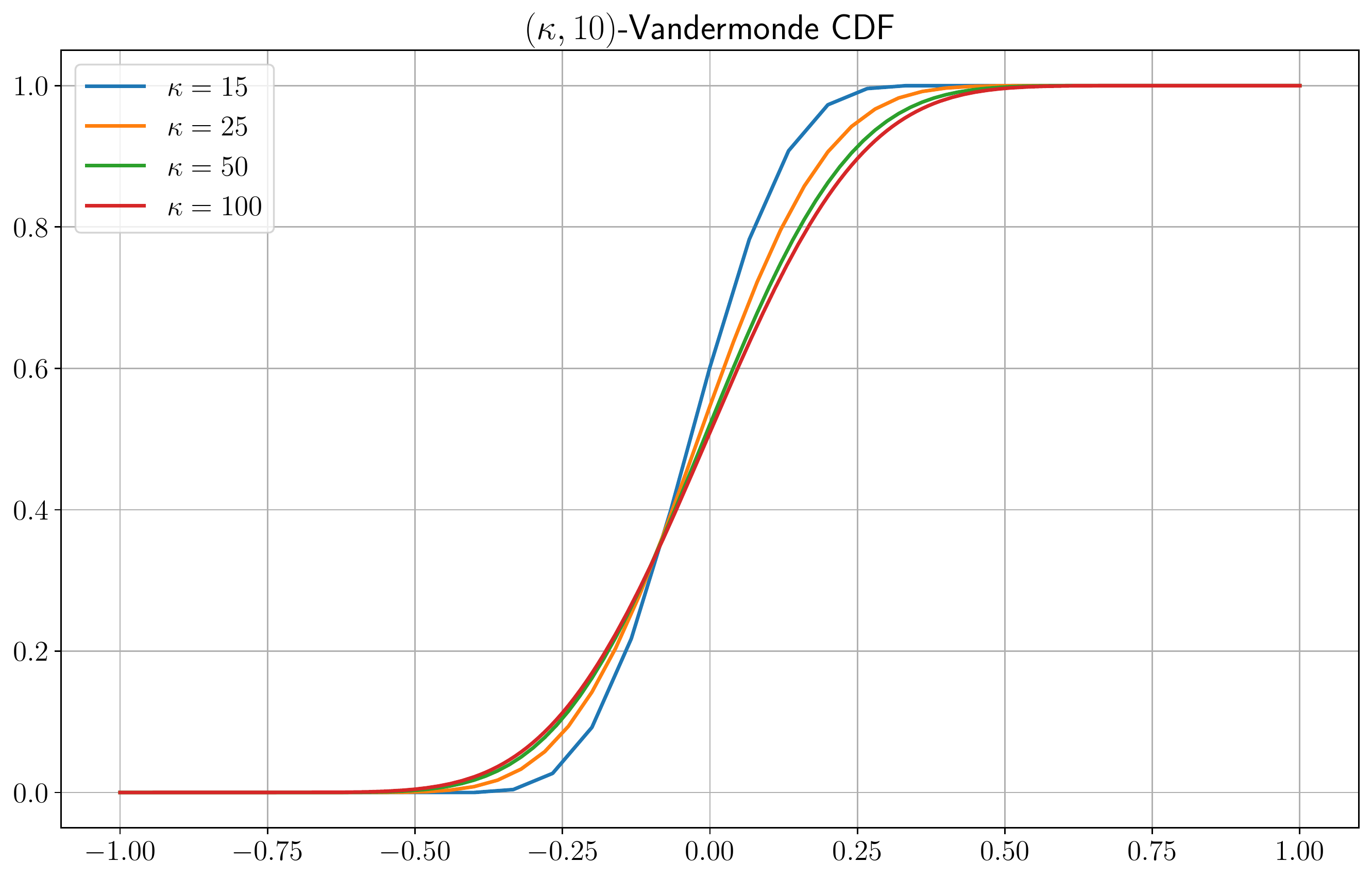}
    \caption{The cumulative distribution function (CDF) of the $(\kappa, 10)$-Vandermonde distribution for $\kappa \in \{15, 25, 50, 100\}$.  }
    \label{fig:convergence}
\end{figure}

\begin{table}[!ht]
\centering
 \begin{tabular}{c||c|c|c|c|c} 
 $ \delta(\mu_{\rho}^{\kappa}, \mu_{\rho}^{\infty}) $ & $\kappa=100$ & $\kappa=500$ & $\kappa=1000$ & $\kappa = 2000$ & $\kappa = 5000$ \\ [0.5ex] 
 \hline\hline
 $\rho=10$ & $0.0472$ & $0.0090$ & $0.0045$ & $0.0022$ & $0.0009$ \\ \hline
 $\rho=20$ & $0.1041$ & $0.0190$ & $0.0094$ & $0.0047$ & $0.0019$ \\
 \hline
 $\rho=30$ & $0.1685$ & $0.0292$ & $0.0144$ & $0.0071$ & $0.0028$ \\
 \hline
 $\rho=40$ & $0.2422$ & $0.0396$ & $0.194$ & $0.0096$ & $0.0038$ \\
 \hline
 $\rho=50$ & $0.3286$ & $0.0502$ & $0.0245$ & $0.0121$ & $0.0048$ \\ 
 [1ex] 
\end{tabular}
\caption{The convergence of the Vandermonde distribution in terms of \emph{total variation distance}. Here $\mu_{\rho}^{\kappa}$ represents the probability measure of the $(\kappa, \rho)$-Vandermonde distribution, and $\delta(\mu, \mu')$ the total variation distance between measures $\mu$ and $\mu'$. We remark that for our experiments we \emph{discretized} the measure $\mu_{\rho}^{\infty}$.}
\label{table:convergence}
\end{table}

\subsection{Concentration of the Vandermonde Distribution}

Next, we illustrate how the size of the sample, or equivalently parameter $\rho$, affects the concentration of the Vandermonde distribution. 

\begin{figure}[!ht]
    \centering
    \includegraphics[scale=0.4]{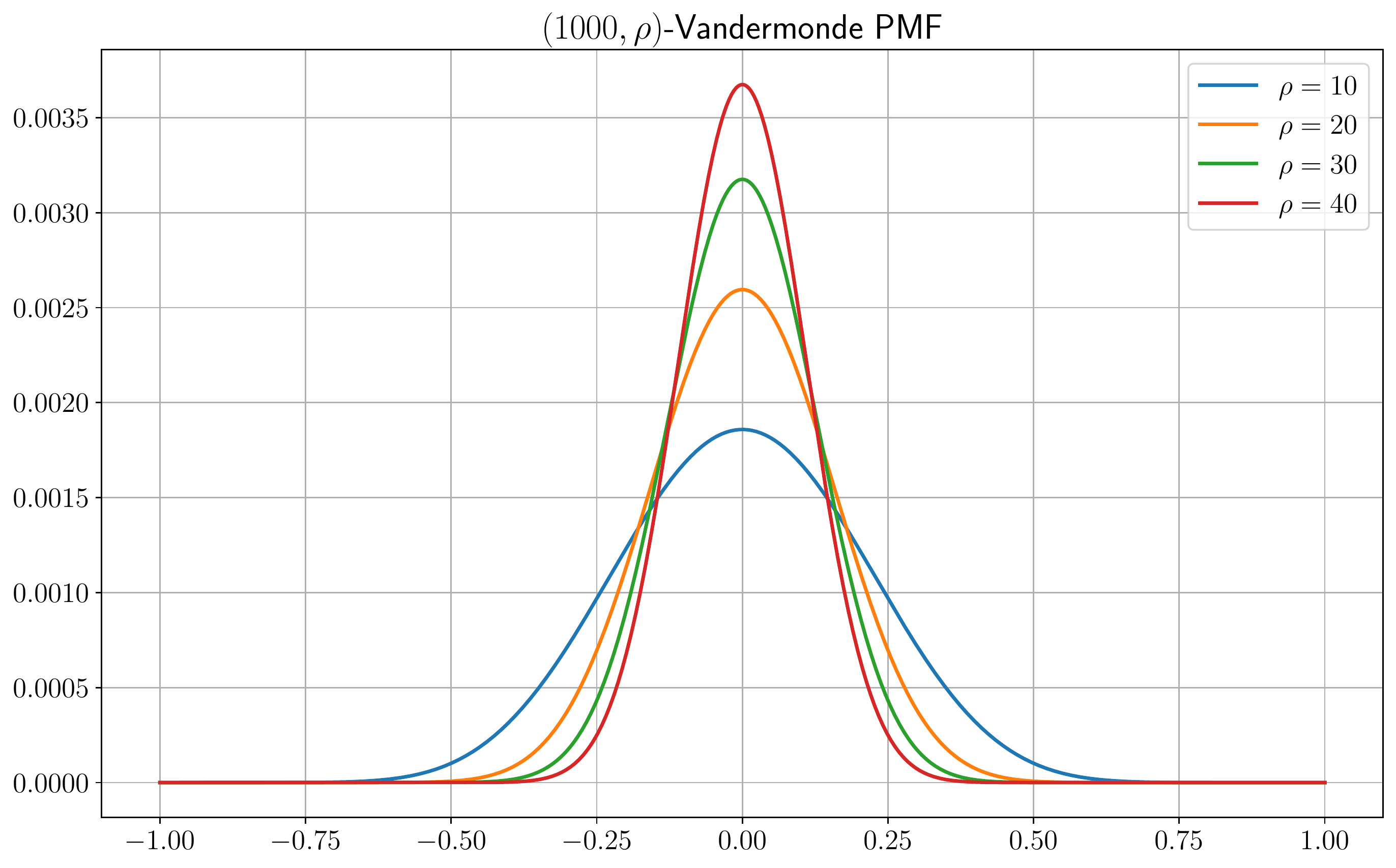}
    \caption{The probability mass function of the $(1000, \rho)$-Vandermonde distribution for $\rho \in \{10, 20, 30, 40\}$. Naturally, as the size of the sample increases the probability mass accumulates closer to $0$, which corresponds to the median of the entire instance.}
    \label{fig:vandermonde_1}
\end{figure}

\begin{table}[!ht]
\centering
 \begin{tabular}{c||c|c|c|c|c} 
 $\E[|X_r|]$ & $\kappa=100$ & $\kappa=500$ & $\kappa=1000$ & $\kappa = 2000$ & $\kappa = 5000$ \\ [0.5ex] 
 \hline\hline
 $\rho=10$ & $0.1603$ & $0.1667$ & $0.1674$ & $0.1678$ & $0.1680$ \\ \hline
 $\rho=20$ & $0.1100$ & $0.1200$ & $0.1212$ & $0.1218$ & $0.1222$ \\
 \hline
 $\rho=30$ & $0.0848$ & $0.0979$ & $0.0994$ & $0.1002$ & $0.1006$ \\
 \hline
 $\rho=40$ & $0.0683$ & $0.0843$ & $0.0861$ & $0.0870$ & $0.0875$ \\
 \hline
 $\rho=50$ & $0.0559$ & $0.0748$ & $0.0769$ & $0.0778$ & $0.0784$ \\ 
 [1ex] 
\end{tabular}
\caption{The expectation of $|X_r|$, where $X_r$ is drawn from a $(\kappa, \rho)$-Vandermonde distribution, for different values of $\kappa$ and $\rho$. Notice that as $\kappa$ increases, the expectation $\E[|X_r|]$ converges rapidly. }
\label{table:median}
\end{table}

\end{document}